\documentclass[11pt]{article}
\usepackage{amsmath,amssymb,amsthm,xspace,fullpage,txfonts}
\usepackage[usenames,dvipsnames]{color}
\usepackage[pagebackref,colorlinks=true,urlcolor=blue,linkcolor=blue,citecolor=BrickRed,pdfstartview=FitH]{hyperref}

\newtheorem{theorem}{Theorem}[section]
\newtheorem{lemma}[theorem]{Lemma}
\newtheorem{proposition}[theorem]{Proposition}

\newtheorem{corollary}[theorem]{Corollary}

\theoremstyle{definition}
\newtheorem{definition}[theorem]{Definition}

\newcommand{\nst}[1]{{^*#1}} 
\newcommand{\bbC}{\mathbb{C}}
\newcommand{\bbF}{\mathbb{F}}

\newcommand{\bbN}{\mathbb{N}}
\newcommand{\bbR}{\mathbb{R}}
\newcommand{\bbT}{\mathbb{T}}
\newcommand{\bbU}{\mathbb{U}}
\newcommand{\bbZ}{\mathbb{Z}}

\newcommand{\bfA}{\mathbf{A}}

\newcommand{\bfC}{\mathbf{C}}
\newcommand{\bfF}{\mathbf{F}}
\newcommand{\bfH}{\mathbf{H}}
\newcommand{\bfL}{\mathbf{L}}

\newcommand{\bfX}{\mathbf{X}}
\newcommand{\bfa}{\mathbf{a}}
\newcommand{\bfb}{\mathbf{b}}
\newcommand{\bfd}{\mathbf{d}}
\newcommand{\bff}{\mathbf{f}}
\newcommand{\bfg}{\mathbf{g}}

\newcommand{\bfx}{\mathbf{x}}
\newcommand{\bfy}{\mathbf{y}}
\newcommand{\caA}{\mathcal{A}}
\newcommand{\caB}{\mathcal{B}}

\newcommand{\caF}{\mathcal{F}}

\newcommand{\caL}{\mathcal{L}}
\newcommand{\caP}{\mathcal{P}}

\newcommand{\sfe}{\mathsf{e}}
\newcommand{\st}{\mathrm{st}}

\newcommand{\proj}[2]{#1\mathord{\downharpoonright}_{#2}}

\newcommand{\Aff}{\mathrm{Aff}}

\newcommand{\Fp}{\mathbb{F}_p}
\newcommand{\Ftwo}{\mathbb{F}_2}
\newcommand{\E}{\mathop{\mathbf{E}}}
\newcommand{\bit}{\{0,1\}}
\newcommand{\rank}{\mathrm{rank}}
\newcommand{\depth}{\mathrm{depth}}
\newcommand{\Lovasz}{Lov{\'a}sz\xspace}

\title{Gowers Norm, Function Limits, and Parameter Estimation}
\author{
  Yuichi Yoshida\thanks{
    Supported by JSPS Grant-in-Aid for Young Scientists (B) (No.~26730009), MEXT Grant-in-Aid for Scientific Research on Innovative Areas (24106001), and JST, ERATO, Kawarabayashi Large Graph Project.
  }\\
  National Institute of Informatics and Preferred Infrastructure, Inc.\\
  \texttt{yyoshida@nii.ac.jp}
}

\begin{document}
\maketitle
\begin{abstract}
  Let $\{f_i:\Fp^i \to \bit\}$ be a sequence of functions,
  where $p$ is a fixed prime and $\Fp$ is the finite field of order $p$.
  The limit of the sequence can be syntactically defined using the notion of ultralimit.
  Inspired by the Gowers norm, we introduce a metric over limits of function sequences, and study properties of it.
  One application of this metric is that it provides a characterization of affine-invariant parameters of functions that are constant-query estimable.
  Using this characterization, we show that the property of being a function of a constant number of low-degree polynomials and a constant number of factored polynomials (of arbitrary degrees) is constant-query testable if it is closed under blowing-up.
  Examples of this property include the property of having a constant spectral norm and degree-structural properties with rank conditions.
\end{abstract}

\thispagestyle{empty}
\setcounter{page}{0}
\newpage

\section{Introduction}\label{sec:intro}

Let $p$ be a fixed prime and $\Fp$ be the finite field of order $p$.
For positive integers $n$ and $m$,
an \emph{affine transformation} $A:\Fp^m \to \Fp^n$ is of the form $L+c$,
where $L:\Fp^m \to \Fp^n$ is a linear transformation and $c \in \bbF^n$ is a vector.
When $A$ is injective (in particular, $m \leq n$), we call it \emph{affine embedding}.
The \emph{affine subspace} spanned by an affine transformation $A:\Fp^m \to \Fp^n$ is $\{Ax \mid x \in \Fp^m\}$.
For a function $f:\Fp^n \to \bbR$ and an affine transformation $A:\Fp^m \to \Fp^n$,
we define $f \circ A:\Fp^m \to \bbR$ as $(f\circ A)(x) = f(Ax)$ for all $x \in \Fp^m$.
The \emph{rank} of an affine transformation $A = L + c$, denoted by $\rank(A)$, is defined as the rank of $L$.

Let $\pi$ be a function parameter that maps a function to a value in the range $[0,1]$.
In \emph{parameter estimation} of $\pi$,
given a proximity parameter $\epsilon > 0$, an integer $n \in \bbN$,
and a query access to a function $f:\Fp^n \to \bit$,
we want to approximate $\pi(f)$ to within $\epsilon$ with a probability of at least $2/3$.
We state that parameter $\pi$ is \emph{constant-query estimable} if there is such an algorithm with the number of queries that is independent of $n$ (but may be dependent on $\epsilon$).
We say that a parameter $\pi$ is \emph{affine-invariant} if for any function $f:\Fp^n \to \bit$ and bijective affine transformation $A$, $\pi(f) = \pi(f \circ A)$ holds.
Because we do not want to consider ``unnatural'' parameters such as $\pi(f) = n \pmod 2$, we only consider \emph{oblivious algorithms}~\cite{Bhattacharyya:2010gb,Goldreich:2003zz}, which restrict the input function to a random affine subspace of constant dimension (usually dependent on $\epsilon$) and which then provide an output based solely on that restriction\footnote{From the argument made in~\cite{Goldreich:2003zz}, we can assume that oblivious algorithms does not use internal randomness when making decisions. Further, the non-adaptiveness and uniform choice of affine subspaces are without loss of generality~\cite{Bhattacharyya:2010gb}.}.
Unless stated otherwise, all algorithms considered in this paper are oblivious.
The question of which affine-invariant parameters are obliviously constant-query estimable naturally arises during parameter estimation;
this paper provides a useful characterization of such affine-invariant parameters.
First, however, several notions must be established.

The Gowers norm is a very useful tool for studying the behavior of a function under affine transformation,
For a function $f:\Fp^n \to \bbR$,
the \emph{$d$-th Gowers norm} of $f$ is defined as follows:
\[
  \|f\|_{U^d} := \Bigl|\E_{x, y_1,\ldots,y_d \in \Fp^n}\prod_{I \subseteq [d]} f(x + \sum_{i \in I}y_i) \Bigr|^{1/2^d}.
\]
In that expectation, we take the product of all values of $f$ at every point in a random $d$-dimensional affine subspace.
The Gowers norm is a norm when $d > 1$ and a semi-norm when $d = 1$.
Generally, the $d$-th Gowers norm measures the correlation with polynomials of a degree of at most $d-1$ (more precisely, \emph{non-classical} polynomials~\cite{Tao:2011dw}).
The Gowers norm is used in various areas of theoretical computer science such as constructing pseudorandom generators~\cite{Bogdanov:2010ie},
property testing~\cite{Bhattacharyya:2012ud,Bhattacharyya:2013ii,Hatami:2013ux,Yoshida:2014tq}, coding theory~\cite{Bhowmick:2014un}, and hardness of approximation~\cite{Samorodnitsky:2009di}.

In parameter estimation,
it is important to study the distribution of the input function restricted to a random affine subspace of a constant dimension, say $k$, since an oblivious constant-query algorithm determines the output based on that restriction.
It turns out that two functions $f, g: \Fp^n \to \bit$ have similar distributions if there exists an affine bijection $A:\Fp^n \to \Fp^n$ such that $\|f - g \circ A\|_{U^d}$ is small, where $d$ is an integer dependent on $k$.
With this fact in mind, we can define the distance between $f$ and $g$ as follows:
\[
  \upsilon^d(f, g) := \min_{\substack{A:\Fp^n \to \Fp^n \\ A\text{ is a bijection}}}\|f - g \circ A\|_{U^d}.
\]
Note that $\upsilon^d$ forms a metric space by identifying functions with a distance of zero.

One disadvantage of the distance notion $\upsilon^d(\cdot,\cdot)$ is that the distance between functions on different domains is not defined, and hence, not useful for studying the constant-query estimability of parameters.
This paper's main contribution is the proposal of a distance notion that captures the closeness of the distributions of two functions that are restricted to a random affine subspace of a constant dimension, even if they are defined on different domains.

To define such distance, let us consider the sequence of functions $(f_i:\Fp^i \to \bbR)_{i \in \bbN}$, where $\bbN$ is the set of positive integers.
Since we do not have a distance notion between functions over different domains, we cannot discuss the convergence of the sequence in the usual sense.
However using the notion of \emph{ultralimit} in non-standard analysis, we can syntactically define the limit $\bff:\bfF \to \bbR$ of the sequence $(f_i)$,
where $\bfF$ is the so-called \emph{ultraproduct} of $(\Fp^i)_{i \in \bbN}$.
We call $\bff$ a \emph{function limit} since it is a limit of a function sequence.
We will discuss the properties of $\bfF$ in detail in subsequent sections; what we need to know now is that $\bfF$ is endowed with addition and multiplication as well as a probability measure.
Hence, we can define the $d$-th Gowers norm of $\bff$ as follows:
\[
  \|\bff\|_{U^d} := \Bigl|\int_\bfF \cdots \int_{\bfF} \prod_{I \subseteq [d]}\bff(\bfx + \sum_{i \in I}\bfy_i)  d\bfx d \bfy_1 \cdots d \bfy_\ell\Bigr|^{1/2^d}.
\]
Similarly, we can define the distance between $\bff:\bfF \to \bbR$ and $\bfg:\bfF \to \bbR$ as follows:
\[
  \upsilon^d(\bff, \bfg) := \min_{\substack{\bfA:\bfF \to \bfF \\ \bfA\text{ is a bijection}}}\|\bff - \bfg \circ \bfA\|_{U^d},
\]
where $\bfA$ is over all ultralimits of the sequences of affine bijections.
Again $\upsilon^d(\cdot,\cdot)$ forms a metric space by identifying function limits with a distance of zero.
There is a natural way of identifying a function $f:\Fp^n \to \bbR$ with a function limit; we denote it as $\nst{f}:\bfF \to \bbR$.
With this identification and notion of $\upsilon^d$,
we can discuss the distance between two functions on different domains.
In this paper, we study properties of $\upsilon^d$-metric and give a characterization of constant-query estimable parameters in terms of $\upsilon^d$:
\begin{theorem}\label{the:intro-main}
  An affine-invariant parameter $\pi$ is obliviously constant-query estimable if and only if the following holds:
  For any sequence of functions $(f_i)$ such that $(\nst{f_i})$ converges in the $\upsilon^d$-metric for any $d \in \bbN$, the sequence $\pi(f_i)$ converges.
\end{theorem}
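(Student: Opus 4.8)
The plan is to establish the two directions separately, using the $\upsilon^d$-metric as the bridge between the combinatorial statement about oblivious algorithms and the analytic statement about function limits. For the forward direction, suppose $\pi$ is obliviously constant-query estimable, and let $(f_i)$ be a sequence such that $(\nst{f_i})$ converges in $\upsilon^d$ for every $d$. I would argue that the output of an oblivious $\epsilon$-estimator depends only on the distribution of the input function restricted to a random affine subspace of some constant dimension $k = k(\epsilon)$, and that this distribution is (up to an error vanishing as the $\upsilon^d$-distance goes to zero, for $d$ large enough in terms of $k$) determined by the $\upsilon^d$-distance. This is precisely the ``fact'' flagged in the introduction: two functions with small $\upsilon^d(f,g)$ induce close distributions on random $k$-dimensional restrictions. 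Hence if $(\nst{f_i})$ is Cauchy in each $\upsilon^d$, the restriction-distributions of the $f_i$ form a Cauchy sequence in total variation distance for each fixed $k$; since the estimator is a fixed (randomized but internal-randomness-free) function of that distribution, the estimates $\mathbf{E}[\text{algorithm output on }f_i]$ converge, and because the estimator approximates $\pi$ to within $\epsilon$ for every $i$ and every $\epsilon$, a standard diagonal/$\epsilon$-net argument forces $\pi(f_i)$ itself to converge.

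For the converse, suppose the convergence condition holds; I would prove estimability by contradiction. If $\pi$ were not constant-query estimable, then for some $\epsilon>0$ and every query budget $q$, every oblivious $q$-query algorithm fails on some input. I would like to extract from these failures a single sequence $(f_i)$ that defeats all algorithms simultaneously while having $(\nst{f_i})$ convergent in every $\upsilon^d$. The natural device is compactness of the space of function limits: pass to the ultraproduct, so that the failure instances give rise to function limits $\bff$, $\bfg$ with, on one hand, $\upsilon^d(\bff,\bfg)$ small for all $d$ (forced by the "oblivious algorithms fail'' hypothesis, since an algorithm distinguishing them for large $d$ would be a valid estimator), yet $|\pi(\bff) - \pi(\bfg)|$ bounded below along the approximating sequences — contradicting the hypothesis once we realize $\bff$ and $\bfg$ as limits of a common interleaved sequence that is Cauchy in each $\upsilon^d$. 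Making this rigorous requires a correspondence principle relating the combinatorial notion of an oblivious algorithm to a continuous functional on the $\upsilon^d$-metric completion, which I expect to be developed in the body of the paper.

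The main obstacle, and the technical heart, is the quantitative equivalence between ``small $\upsilon^d(f,g)$ for suitably large $d$'' and ``close distributions of random $k$-dimensional affine restrictions.'' One direction (small Gowers distance $\Rightarrow$ close restriction distributions) should follow from a Cauchy–Schwarz / Gowers-inner-product estimate bounding how much a product of function values over a $k$-dimensional subspace can change under a perturbation of small $U^d$ norm, with $d$ growing with $k$. The reverse direction — that indistinguishability by all constant-dimensional restrictions implies $\upsilon^d$-closeness — is more delicate and is presumably where an inverse theorem for the Gowers norm (in the non-classical polynomial formulation cited in the excerpt) or a Frieze–Kannan/Szemerédi-type regularity decomposition for function limits enters, possibly via the structure of $\bfF$ as an ultraproduct with its Loeb measure. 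I would also need to handle the quantifier interplay carefully: the estimator's dimension $k$ depends on $\epsilon$, and the $\upsilon^d$ convergence is required for all $d$, so the argument must be organized so that for each target accuracy $\epsilon$ there is a single $d$ for which $\upsilon^d$-Cauchyness of $(\nst{f_i})$ suffices to pin down the estimate to within $\epsilon$.
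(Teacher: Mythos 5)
Your proposal captures the paper's overall architecture quite faithfully: Theorem~\ref{the:intro-main} is proved in the paper via the equivalence (a)~$\Leftrightarrow$~(c) of Theorem~\ref{the:characterization}, with the $\upsilon^d$-metric and $t$-convergence (counting of linear patterns) shown equivalent in Theorem~\ref{the:convergent=cauchy}. Your forward direction --- small $\upsilon^d$-distance controls the distribution of random $k$-dimensional restrictions, hence the estimator output, hence $\pi$ --- is precisely the chain (a)~$\Rightarrow$~(b)~$\Rightarrow$~(c), with the controlling lemma being the true-complexity/counting estimate (Lemma~\ref{lem:tl-difference-by-gowers-norm-difference}) plus Corollary~\ref{cor:t-convergent->statistically-close}. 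You also correctly identify the harder technical direction of the equivalence as needing a regularity/decomposition input; the paper uses Theorem~\ref{the:decomposition} inside Lemma~\ref{lem:distance-to-random-sample}. The compactness you invoke is exactly Lemma~\ref{lem:strong-compact}.

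Where your sketch of the converse goes wrong-footed is in the logical bookkeeping of \emph{where} the two halves of the contradiction come from. You write that the smallness of $\upsilon^d(\bff,\bfg)$ is ``forced by the `oblivious algorithms fail' hypothesis, since an algorithm distinguishing them for large $d$ would be a valid estimator.'' That is not what happens, and taken at face value it would be circular: computing a Gowers distance is not an estimator for $\pi$, so the failure of $\pi$-estimators gives you no handle on $\upsilon^d$. In the paper's (c)~$\Rightarrow$~(a) argument, the two ingredients come from \emph{different} sources. The failure hypothesis (applied to the candidate $\widetilde{\pi}=\pi$ itself) produces a sequence $(f_i)$ and dimensions $k_i$ such that with probability at least $1/3$ over a random affine embedding $A$, $|\pi(f_i)-\pi(f_i\circ A)|\geq\epsilon$. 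Separately and unconditionally, Lemma~\ref{lem:distance-to-random-sample} (via the decomposition theorem) guarantees that with probability at least $1-1/i$ a random embedding satisfies $\upsilon^{i}(\nst{f_i},\nst{(f_i\circ A)})\leq 1/i$. A union bound then lets one fix a \emph{single} embedding $A_i$ satisfying both. After passing to a $\upsilon$-convergent subsequence by compactness, the interleaved sequence $f_1, f_1\circ A_1, f_2, f_2\circ A_2,\ldots$ is $\upsilon$-convergent yet has $\pi$-values oscillating by $\geq\epsilon$, which is the contradiction. So the decomposition theorem does double duty here: it is needed not only for the $t$-to-$\upsilon$ implication you flag, but also to make the random-restriction step of the contrapositive work. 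If you flesh out your sketch, you should remove the ``forced by failure'' reasoning and instead separate cleanly the probabilistic guarantee on random restrictions (analytic, decomposition-based) from the adversarial instance provided by the failure hypothesis (combinatorial).
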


\subsection{Applications to property testing}

Regarding the applicability of Theorem~\ref{the:intro-main},
we consider \emph{property testing}~\cite{Rubinfeld:1996um}, which is a decision version of parameter estimation.
A function $f:\Fp^n \to \bit$ is \emph{$\epsilon$-far} from a property $\caP$ if for any function $g:\Fp^n \to \bit$ satisfying $\caP$,
we have $\Pr_x[f(x) \neq g(x)] \geq \epsilon$.
We say that a property $\caP$ is \emph{constant-query testable} if,
given a proximity parameter $\epsilon > 0$, an integer $n \in \bbN$, and a query access to a function $f:\Fp^n \to \bit$,
with a probability of at least $2/3$,
we can distinguish the case that $f$ satisfies $\caP$ from the case that $f$ is $\epsilon$-far from satisfying $\caP$ with the number of queries independent of $n$ (but may be dependent on $\epsilon$).
A property $\caP$ is \emph{affine-invariant} if,
for any function $f:\Fp^n \to \bit$ satisfying the property $\caP$ and any affine bijection $A:\Fp^n \to \Fp^n$,
$f \circ A$ also satisfies $\caP$.

Note that if the distance to a property $\caP$ (that is, how far from $\caP$) is constant-query estimable,
then $\caP$ is constant-query testable.
For affine-invariant properties,
if a property $\caP$ is constant-query testable,
then the distance to $\caP$ is also constant-query estimable~\cite{Hatami:2013ux}.
Hence Theorem~\ref{the:intro-main} also gives a characterization of constant-query testable affine-invariant properties.
Although another characterization of constant-query testable affine-invariant properties has already been given by the author~\cite{Yoshida:2014tq},
the one given in this paper is much simpler.

Theorem~\ref{the:intro-main} is also useful for showing that a specific property is constant-query testable.
We say that a property $\caP$ is \emph{closed under blowing-up} if, for any $f:\Fp^n \to \bit$ satisfying $\caP$ and an affine transformation $A:\Fp^m \to \Fp^n$ with $m \geq n$ and rank $n$, the function $f \circ A$ satisfies $\caP$.
We call a function $P:\Fp^n \to \Fp$ a \emph{factored polynomial} if it can be written as $P = M \circ A$ for some monomial $M:\Fp^n \to \Fp$ and a bijective affine transformation $A:\Fp^n \to \Fp^n$.
Then, we show the following as an application of Theorem~\ref{the:intro-main}.
\begin{theorem}\label{the:application}
  Let $C,d \in \bbN$ be integers.
  Suppose that a property $\caP$ is closed under blowing-up and every function $f:\Fp^n \to \bit$ satisfying $\caP$ is of the form $f(x) = \Gamma(P_1(x),\ldots,P_c(x),Q_1(x),\ldots,Q_{c'}(x))$ for some $c,c' \leq C$, a function $\Gamma:\Fp^{c+c'} \to \bit$, polynomials $P_i:\Fp^n \to \Fp$ of degree at most $d$, and factored polynomials $Q_i:\Fp^n \to \Fp$ of arbitrary degree.
  Then, the property $\caP$ is obliviously constant-query testable.
\end{theorem}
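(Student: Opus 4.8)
The plan is to verify the estimability criterion of Theorem~\ref{the:intro-main} for the parameter $\pi = \pi_\caP$ measuring distance to $\caP$ (or rather, to verify it directly for the indicator of $\caP$ and then appeal to the remark that testability follows from estimability of the distance). Concretely, I want to show: if $(f_i)$ is a sequence of functions with $f_i:\Fp^{n_i}\to\bit$ such that $(\nst{f_i})$ is $\upsilon^d$-Cauchy for every $d$, then the distances $\pi_\caP(f_i)$ converge. Suppose toward a contradiction that infinitely many $f_i$ satisfy $\caP$ (so $\pi_\caP(f_i)=0$) while infinitely many others are $\epsilon$-far; passing to the ultralimit, this would force two function limits $\bff, \bfg$ at $\upsilon^d$-distance $0$ for all $d$, one of which is the limit of functions in $\caP$ and the other of functions far from $\caP$. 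So the crux is a \emph{limiting/inverse theorem}: the ultralimit $\bff$ of functions in $\caP$ can be realized, up to measure zero, as $\bff(\bfx)=\Gamma(\bfP_1(\bfx),\dots,\bfP_{\bfc}(\bfx),\bfQ_1(\bfx),\dots,\bfQ_{\bfc'}(\bfx))$ where the $\bfP_i$ are (non-classical) polynomials of degree $\le d$ on $\bfF$, the $\bfQ_i$ are factored polynomials, and $\bfc,\bfc'\le C$; and conversely such a limit object is $\upsilon^{d'}$-approximable (for every finite $d'$) by honest functions satisfying $\caP$, using that $\caP$ is closed under blowing-up.

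The key steps, in order. First, set up the ultralimit and apply the ultralimit versions of the higher-order Fourier / regularity machinery (Tao--Ziegler) to the polynomials $P_1^{(i)},\dots,P_c^{(i)}$: on $\bfF$ one gets a genuine polynomial factor of bounded complexity, and a finite list of non-classical polynomials $\bfP_1,\dots,\bfP_{\bfc}$ of degree $\le d$ generating it, with $\bfc\le C$ by the compactness bound on the number of functions. Second, handle the factored polynomials $Q_j^{(i)} = M_j^{(i)}\circ A_j^{(i)}$: since a monomial is preserved under affine bijection and the defining data (the monomial $M_j$ and the affine bijection) are bounded in number though not in degree, their ultralimits $\bfQ_j$ are factored polynomials on $\bfF$ in the obvious sense; this is the one place where "arbitrary degree" matters and where one must be careful that the relevant sigma-algebra on $\bfF$ generated by the $\bfQ_j$'s level sets is still "measurable" and that $\bff$ is measurable with respect to the join of the polynomial factor and these level sets. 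Third, deduce the structural form of $\bff$ and observe that the pair $(\text{factor data},\Gamma)$ lives in a space of bounded size, so one can discretize it and pull it back to honest functions $g_i:\Fp^{n_i}\to\bit$ of exactly the prescribed form (constantly many low-degree polynomials plus constantly many factored polynomials), which satisfy $\caP$ after possibly applying a blow-up — here closure under blowing-up is used to absorb the dimension mismatch coming from realizing a given finite factor on $\Fp^{n_i}$. Fourth, an equidistribution/near-orthogonality argument (the Gowers-uniformity of the error terms in the regularity decomposition, which is exactly what the $\upsilon^d$-metric sees) gives $\upsilon^{d'}(\nst{g_i},\nst{f_i})\to 0$ for every fixed $d'$; since $g_i\in\caP$ this shows $\pi_\caP(f_i)\to 0$ along the relevant subsequence, contradicting $\epsilon$-farness. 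Conclude that $\pi_\caP(f_i)$ cannot have two distinct subsequential limits, hence converges, and invoke Theorem~\ref{the:intro-main}.

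The main obstacle I anticipate is Step~2--3 taken together: reconciling the \emph{bounded-complexity} regularity picture for the low-degree part $P_1,\dots,P_c$ with the \emph{unbounded-degree} factored part $Q_1,\dots,Q_{c'}$, and showing that the combined object on $\bfF$ is still governed by a finite amount of data so that it can be rounded back to a genuine finite function in $\caP$. In particular one must argue that conditioning on the (possibly very high rank, high degree) factored polynomials does not destroy the Gowers-uniformity needed to run the regularization of the low-degree part, i.e. that factored polynomials are themselves "regular" in the appropriate relative sense; this is believable because a monomial composed with an affine bijection has a transparent structure, but making it precise in the ultraproduct — and ensuring the blow-up used to realize the factor does not change membership in $\caP$ — is where the real work lies. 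The remaining ingredients (existence of ultralimits of affine maps, the equivalence between small $\upsilon^d$ and similar local distributions, and the reduction from testability to distance-estimability) are quoted from the earlier sections and from~\cite{Hatami:2013ux}.
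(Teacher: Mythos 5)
Your proposal takes a genuinely different route from the paper's, and the two places you flag or gloss over are exactly where the gaps lie.

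First, the missing idea. The obstacle you name in Steps~2--3 — reconciling a bounded-complexity regularity decomposition of the low-degree part with the arbitrary-degree factored polynomials $Q_j$, and making sense of conditioning on their level sets in the ultraproduct — is real, and the paper never attempts it. Instead it uses an elementary observation that removes the arbitrary-degree hypothesis before any Gowers-norm machinery is invoked: for a factored polynomial $Q = M\circ A$ with $M$ a monomial of degree greater than a threshold $\overline d$, the set where $Q$ differs from a constant has density at most $(1-1/p)^{\Theta(\overline d)}$, because a high-degree monomial is a product of many distinct coordinates of the (uniformly distributed) point $Ax$ and hence vanishes almost everywhere. Choosing $\overline d = \max(d,\log(1/\epsilon)/\log\frac{p}{p-1})$, the optimal $g_j\in\caP$ can be replaced in $L_1$ by a function $h_j$ of at most $2C$ polynomials of degree at most $\overline d$, at cost $O(C\epsilon)$. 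After this, the argument is purely about bounded-degree polynomials, and a $t$-convergent (equivalently $\upsilon$-convergent) sequence produces, via Lemma~\ref{lem:U^d-converges}, affine pullbacks $f'_i=f_i\circ A_i$, $f'_j=f_j\circ A_j$ to a common domain with $\|f'_i-f'_j\|_{U^{\overline d+1}}$ small. This forces $\|f'_i-\Gamma(P_1,\ldots)\|_1 \approx \|f'_j-\Gamma(P_1,\ldots)\|_1$ for every fixed $\Gamma$ and bounded family of degree-$\le\overline d$ polynomials, and Lemma~\ref{lem:polynomial-distance} (blow-up invariance of $\|\cdot\|_\caP$, using closure of $\caP$) closes the loop. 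No regularity decomposition on $\bfF$, no pullback of a factor, and no inverse theorem appears.

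Second, a step in your plan that fails as stated. In Step~4 you claim that $\upsilon^{d'}(\nst{g_i},\nst{f_i})\to 0$ for every fixed $d'$, together with $g_i\in\caP$, gives $\pi_\caP(f_i)\to 0$. This does not follow: the $\upsilon^d$-metric is a Gowers-norm distance, and $\|h\|_{U^d}\le\|h\|_1^{1/2^d}$ bounds Gowers norm by $L_1$, not the reverse. One can have $\|f_i-g_i\circ A_i\|_{U^d}\to 0$ for every fixed $d$ while $\|f_i-g_i\circ A_i\|_1$ stays bounded away from zero; small Gowers distance controls correlations with structured functions, not pointwise agreement. The paper never tries to conclude $L_1$-closeness of $f_i$ to $\caP$ from Gowers-closeness to some member of $\caP$; it instead compares the two \emph{distances} $\|f_i\|_\caP$ and $\|f_j\|_\caP$ directly, exploiting that the nearest point of $\caP$ is itself a structured function whose $L_1$-distance to $f'_i$ and $f'_j$ is controlled by the matching correlations. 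This is the step where your outline, even granting the ultraproduct structure theory, would not recover convergence of $\pi_\caP(f_i)$.
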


An important feature of Theorem~\ref{the:application} is that factored polynomials $Q_i$ can have \emph{arbitrary} degrees.
To illustrate, let us consider the \emph{spectral norm} of a function $f:\Ftwo^n \to \bit$, which is the absolute sum of its Fourier coefficients.
The property of having a constant spectral norm is known to be constant-query testable by an ad~hoc argument~\cite{Gopalan:2011jj}.
However, we can easily show its constant-query testability using Theorem~\ref{the:application} since it is closed under blowing-up, and by the Green-Sanders theorem~\cite{Green:2008jf}, if a function has a constant spectral norm, then it can be represented as a function of a constant number of indicators of subspaces, that is, factored polynomials.

Another example captured by Theorem~\ref{the:application} is the following.
\begin{definition}[Ranked degree-structural properties]
  Given an integer $c \in \bbN$, a sequence of integers $\bfd = (d_1,\ldots , d_c) \in \bbZ_+􏰃$, an integer $r \in \bbZ_+$, and a function $\Gamma: \Fp^c \to \bit$, define the \emph{$(c, \bfd, r,\Gamma)$-structured property} to be the collection of functions $f:\Fp^n \to \bit$ of the form $f(x) = \Gamma(P_1(x),\ldots,P_c(x))$, where $P_i:\Fp^n \to \Fp$ is a polynomial of degree at most $d_i$ for each $i \in \{1,\ldots,c\}$ and the rank of $\{P_1,\ldots,P_c\}$ is at least $r$.
\end{definition}
Here, the \emph{rank} of $\{P_1,\ldots,P_c\}$ measures how generic those polynomials are (see Section~\ref{sec:pre} for the definition).
If the rank is high, then those polynomials do not show unexpected behavior in many situations, and hence the rank is regarded as an important notion in higher order Fourier analysis.
It is obvious that a ranked degree-structural property is closed under blowing-up, and hence constant-query testability follows from Theorem~\ref{the:application}.

Bhattacharyya~\emph{et~al.}~\cite{Bhattacharyya:2013ii} showed that the special case of $r = 0$ is constant-query testable with one-sided error.
We note that, when $r = 0$, a ranked degree-structural property $\caP$ is \emph{affine-subspace hereditary}, that is,
for any function $f:\Fp^n \to \bit$ satisfying $\caP$ and affine embedding $A:\Fp^k \to \Fp^n\;(k \leq n)$, $f \circ A$ also satisfies $\caP$,
and their argument crucially uses this property.
However, when $r > 0$, a ranked degree-structural property is not affine-subspace hereditary anymore, and we cannot extend their argument.
Actually we cannot hope for one-sided error testers for properties that are not affine-subspace hereditary~\cite{Bhattacharyya:2010gb}.

\subsection{Related work}

\paragraph{Testing affine-invariant properties of functions:}
Rubinfeld and Sudan~\cite{Rubinfeld:1996um} introduced the notion of property testing; since then, a lot of function properties have been shown to be constant-query testable.
Refer to~\cite{Ron:2010ua,Rubinfeld:2011ik}; a full length book is also available~\cite{Goldreich:2011cg}.
In a celebrated work, Blum~\emph{et~al.}~\cite{Blum:1993cn} showed that linearity is constant-query testable.
Then, Alon~\emph{et~al.}~\cite{Alon:2005jl} extended that result by showing that low-degree polynomials are constant-query testable, and tight query complexity was achieved by Bhattacharyya~\emph{et~al.}~\cite{Bhattacharyya:2010kw}.
Along with the recent development of higher order Fourier analysis~\cite{Green:2008iz,Tao:2011dw,Kaufman:2008ff},
there has been rapid progress in characterizing constant-query testable affine-invariant properties.
Bhattacharyya~\emph{et~al.}~\cite{Bhattacharyya:2012ud,Bhattacharyya:2013ii} showed that every locally characterized property is constant-query testable,
which almost characterizes affine-invariant properties that are constant-query testable with one-sided error.
As we have mentioned, Hatami and Lovett~\cite{Hatami:2013ux} showed that the distance to any constant-query testable affine-invariant property is constant-query estimable.
Finally, the author~\cite{Yoshida:2014tq} obtained a characterization of constant-query testable affine-invariant properties.
Although non-standard analysis is used to show the Gowers inverse theorem~\cite{Tao:2011dw},
every previous work on property testing used the theorem as a black box.
In particular, the characterization given in~\cite{Yoshida:2014tq} does not involve the notion of ultralimits (though the characterization itself is complicated).

\paragraph{Graph limits:}
\Lovasz and Szegedy~\cite{Lovasz:2006jj} introduced the notion of a graph limit, called a \emph{graphon}.
Let $G$ be a graph on $n$ vertices.
Then, $G$ can be seen as a $\bit$-valued function over $[0,1]^2$.
For any $i, j \in [0,1]$ that are not multiples of $1/n$, $G(i,j)$ is equal to one if and only if the vertices $\lceil ni\rceil$ and $\lceil nj\rceil$ are adjacent (we can define the rest of $G$ arbitrarily since they have measures of zero).
In~\cite{Lovasz:2006jj} and subsequent works~\cite{Borgs:2006el,Lovasz:2010dv,Borgs:2006gc},
the properties of graphons and an associated norm, called the \emph{cut-norm}, are studied.
See~\cite{Lovasz:2012wn} for a book on this subject.
In particular, a characterization of constant-query estimable parameters of a graph is shown in~\cite{Borgs:2006el}.
We note that a graphon is a conceptually simpler notion than a function limit since we do not have to resort to ultralimits and since the cut norm does not involve a parameter, unlike the Gowers norm.

\paragraph{Function limits:}
Recently, Hatami~\emph{et~al.}~\cite{Hatami2014:tj} introduced another notion of function limits.
They showed that any sequence of functions such that the distributions obtained by restricting them to a random affine subspace of constant dimension converge can be represented as a function limit and vice versa.
Using their definition, however, it is unclear how to define the distance between function limits and hence functions over different domains.
In particular, we were unable to exploit their notion to study parameter estimation.


\subsection{Organization}
We introduce notions and definitions from higher order Fourier analysis as well as the theory of ultralimits in Section~\ref{sec:pre}.
In Section~\ref{sec:general-tl}, we formally define the Gowers norm for function limits and related notions, whose properties are also studied in that section.
In Section~\ref{sec:metric}, we introduce the $\upsilon^d$-metric and show several of its properties.
We give a characterization of constant-query estimable affine-invariant parameters in Section~\ref{sec:characterization}, and show applications in Section~\ref{sec:applications}.

\section{Preliminaries}\label{sec:pre}
For an integer $n$, $[n]$ denotes the set $\{1,2,\ldots,n\}$.
Let $\bbR_+$ be the set of non-negative real numbers and $\overline{\bbR} = \bbR \cup \{-\infty,\infty\}$.
We denote the set of all affine bijections from $\Fp^n$ to itself as $\Aff(\Fp)$.
For real values $a, b$, and $c$, $a = b \pm c$ means that $b - c \leq a \leq b + c$.

\subsection{Higher order Fourier analysis over $\Fp$}
We review notions from higher order Fourier analysis.
Most of the material in this section is directly quoted from~\cite{Bhattacharyya:2013ii,Hatami:2013ux,Yoshida:2014tq}.
See~\cite{Tao:2012ue} for further details.

\subsubsection{Uniformity norms and non-classical polynomials}

\begin{definition}[Multiplicative derivative]
  Given a function $f:\Fp^n \to \bbC$, and an element $h \in \Fp^n$,
  the \emph{multiplicative derivative} in direction $h$ of $f$ is the function $\Delta_hf:\Fp^n \to \bbC$ satisfying $\Delta_h f(x) = f(x+h)\overline{f(x)}$ for all $x \in \Fp^n$.
\end{definition}

\begin{definition}[Gowers norm]
  Given a function $f:\Fp^n \to \bbC$ and an integer $d \in \bbN$, the $d$-th Gowers norm of $f$ is as follows:
  \begin{align*}
    \|f\|_{U^d} := \left|\E_{x,y_1,\ldots,y_d \in \Fp^n}[(\Delta_{y_1}\Delta_{y_2} \cdots \Delta_{y_d}f)(x)] \right|^{1/2^d}.
  \end{align*}
\end{definition}
Note that,
as $\|f\|_{U^1} = |\E[f]|$, the first Gowers norm is only a semi-norm.
However for $d > 1$, $\|\cdot\|_{U^d}$ is indeed a norm.

The following lemma connects the Gowers and $L_1$ norms.
\begin{lemma}[Claim 2.21 of~\cite{Hatami:2013ux}]\label{lem:bound-gowers-norm-by-l1-norm}
  Let $f:\Fp^n \to [-1,1]$. For any $d \in \bbN$, we have
  \[
    \|f\|_{U^d} \leq \|f\|_1^{1/2^d}.
  \]
\end{lemma}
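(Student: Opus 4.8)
The plan is to prove the slightly stronger-looking (but equivalent) statement
\[
  \|f\|_{U^d}^{2^d} \le \|f\|_1
\]
for every $f:\Fp^n \to [-1,1]$ and every $d \in \bbN$, by induction on $d$; raising both sides to the power $1/2^d$ then gives the lemma. The base case $d=1$ is immediate: $\|f\|_{U^1}^2 = |\E_x f(x)|^2 \le (\E_x |f(x)|)^2 = \|f\|_1^2 \le \|f\|_1$, where the last step uses $\|f\|_1 \le 1$ since $|f| \le 1$.

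For the inductive step I would use the standard unfolding of the Gowers norm along one direction, namely
\[
  \|f\|_{U^d}^{2^d}
  = \E_{x, y_1, \ldots, y_d}\bigl[(\Delta_{y_1}\cdots\Delta_{y_d} f)(x)\bigr]
  = \E_{y_d}\Bigl[\E_{x,y_1,\ldots,y_{d-1}}\bigl[(\Delta_{y_1}\cdots\Delta_{y_{d-1}}(\Delta_{y_d}f))(x)\bigr]\Bigr]
  = \E_{y_d}\bigl[\|\Delta_{y_d} f\|_{U^{d-1}}^{2^{d-1}}\bigr],
\]
which is valid because the inner expectation is exactly $\|\Delta_{y_d}f\|_{U^{d-1}}^{2^{d-1}}$ (a nonnegative quantity, being a self Gowers inner product, so no absolute value is lost). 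Now the key observation is that for each fixed $y_d$ the function $x \mapsto (\Delta_{y_d}f)(x) = f(x+y_d)f(x)$ again takes values in $[-1,1]$, so the induction hypothesis applies to it and yields $\|\Delta_{y_d}f\|_{U^{d-1}}^{2^{d-1}} \le \|\Delta_{y_d}f\|_1 = \E_x |f(x+y_d)|\,|f(x)|$.

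Combining these, I would bound
\[
  \|f\|_{U^d}^{2^d} \le \E_{y_d}\E_x\bigl[|f(x+y_d)|\,|f(x)|\bigr] = \E_x\Bigl[|f(x)|\,\E_{y_d}|f(x+y_d)|\Bigr] = \|f\|_1^2 \le \|f\|_1,
\]
using translation-invariance of the uniform measure on $\Fp^n$ to factor the double expectation, and $\|f\|_1 \le 1$ at the end. This closes the induction. I do not anticipate a genuine obstacle here; the only point requiring a little care is justifying the unfolding identity with the correct nonnegativity so that the inductive bound on $\|\Delta_{y_d}f\|_{U^{d-1}}^{2^{d-1}}$ can be integrated over $y_d$ without sign issues, and confirming that $\Delta_{y_d}f$ is genuinely $[-1,1]$-valued (which is where boundedness of $f$ is used, and why the statement is restricted to such $f$).
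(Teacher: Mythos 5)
Your proof is correct. The paper itself does not supply a proof of this lemma; it simply cites Claim~2.21 of Hatami--Lovett, so there is no in-paper argument to compare against. Your induction is sound: the base case is handled correctly, the identity $\|f\|_{U^d}^{2^d} = \E_{y_d}\|\Delta_{y_d}f\|_{U^{d-1}}^{2^{d-1}}$ is a standard fact (and you correctly flag that the absolute value in the definition of $\|\cdot\|_{U^{d-1}}$ is automatically redundant because that quantity is a Gowers self-inner-product, hence nonnegative for $d-1 \ge 1$), and $\Delta_{y_d}f$ is indeed $[-1,1]$-valued when $f$ is, so the inductive hypothesis applies pointwise in $y_d$.

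For what it is worth, there is an even shorter argument that avoids induction entirely: move the absolute value inside the expectation to get
\[
  \|f\|_{U^d}^{2^d} \le \E_{x,y_1,\ldots,y_d}\prod_{I\subseteq[d]}\Bigl|f\bigl(x+\textstyle\sum_{i\in I}y_i\bigr)\Bigr|,
\]
then bound every factor except the one indexed by $I=\emptyset$ by $\|f\|_\infty \le 1$, leaving exactly $\E_x|f(x)| = \|f\|_1$. Your inductive route establishes the intermediate bound $\|f\|_{U^d}^{2^d} \le \|f\|_1^2$ along the way, which is marginally stronger (when $\|f\|_1 < 1$) but not needed here; both arguments are elementary and equally rigorous.
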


If $f = e^{2\pi i P/p}$ for a polynomial $P: \Fp^n \to \Fp$  of a degree less than $d$,
then $\|f\|_{U^d} = 1$ holds.
If $d < p$ and $\|f\|_{\infty} \leq 1$,
then in fact,
the converse holds,
meaning that any function $f: \Fp^n \to \bbC$ satisfying $\|f\|_\infty \leq 1$ and $\|f\|_{U^d} = 1$ is of this form.
But when $d \geq p$,
the converse is no longer true.
To characterize functions $f:\Fp^n \to \bbC$ with $\|f\|_\infty \leq 1$ and $\|f\|_{U^d} = 1$, we define the notion of non-classical polynomials.

Non-classical polynomials might not be necessarily $\Fp$-valued.
Some notation needs to be introduced.
Let $\bbT$ denote the circle group $\bbR / \bbZ$.
This is an abelian group with group operation denoted by $+$.
For an integer $k \geq 0$,
let $\bbU_k$ denote $\frac{1}{p^k}\bbZ / \bbZ$, a subgroup of $\bbT$.
Let $\iota : \Fp \to \bbU_1$ be the injection $x \mapsto \frac{|x|}{p} \bmod 1$,
where $|x|$ is the standard map from $\Fp$ to $\{0,1,\ldots,p-1\}$.
Let $\sfe:\bbT \to \bbC$ denote the character $\sfe(x) = e^{2\pi i x}$.

\begin{definition}[Additive derivative]
  Given a function $P : \Fp^n \to \bbT$ and an element $h \in \Fp^n$,
  the \emph{additive derivative} in direction $h$ of $f$ is the function $D_hP : \Fp^n \to \bbT$ satisfying $D_hP(x)=P(x+h)-P(x)$ for all $x \in \Fp^n$.
\end{definition}
\begin{definition}[Non-classical polynomials]
  For an integer $d \in \bbN$,
  a function $P : \Fp^n \to \bbT$ is said to be a \emph{non-classical polynomial} of a degree of at most $d$ (or simply a polynomial of a degree of at most $d$) if for all $x,y_1,\ldots,y_{d+1} \in \Fp^n$, it holds that
  \begin{align*}
    (D_{y_1} \cdots D_{y_{d+1}}P)(x) = 0.
  \end{align*}
  The degree of $P$ is the smallest $d$ for which the above holds.
  A function $P : \Fp^n \to \bbT$ is said to be a \emph{classical polynomial} of a degree of at most $d$ if it is a non-classical polynomial of a degree of at most $d$ whose image is contained in $\iota(\Fp)$.
\end{definition}
It is a direct consequence that a function $f : \Fp^n \to \bbC$ with $\|f\|_\infty \leq 1$ satisfies $\|f\|_{U^{d+1}} = 1$ if and only if $f = \sfe(P)$ for a non-classical polynomial $P : \Fp^n \to \bbT$ of a degree of at most $d$.

\begin{lemma}[Lemma 1.7 in~\cite{Tao:2011dw}]
  A function $P : \Fp^n \to \bbT$ is a polynomial of a degree of at most $d$ if and only if $P$ can be represented as follows:
  \begin{align*}
    P(x_1,\ldots,x_n) = \alpha + \sum_{\substack{0\leq d_1,\ldots,d_n < p; h \geq 0: \\ 0 < \sum_i d_i \leq d - h(p-1)}} \frac{c_{d_1,\ldots,d_n,h}|x_1|^{d_1} \cdots |x_n|^{d_n}}{p^{h+1}} \bmod 1,
  \end{align*}
  for a unique choice of $c_{d_1,\ldots,d_n,h} \in \{0,1,\ldots,p-1\}$ and $\alpha \in \bbT$.
  The element $\alpha$ is called the \emph{shift} of $P$,
  and the largest integer $h$ such that there exist $d_1,\ldots,d_n$ for which $c_{d_1,\ldots,d_n,h} \neq 0$ is called the \emph{depth} of $P$.
  Classical polynomials correspond to polynomials with 0 shift and 0 depth.
\end{lemma}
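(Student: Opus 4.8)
\medskip
\noindent\textbf{Proof plan.} The statement bundles three claims: (i) each function of the displayed form has degree at most $d$; (ii) the coefficients $c_{d_1,\ldots,d_n,h}$ and the shift $\alpha$ are uniquely determined by $P$; and (iii) every polynomial of degree at most $d$ can be so written. Two elementary tools will be used throughout: the identity $|a+b| = |a|+|b| - p\cdot\chi(a,b)$ on representatives, where $\chi(a,b) = \mathbf{1}[\,|a|+|b|\ge p\,]\in\{0,1\}$; and the observation that the Vandermonde determinant on the $p$ points $\{0,1,\ldots,p-1\}$ is a product of integers smaller than $p$, hence coprime to $p$, so that every function $\Fp\to\bbZ/p^m\bbZ$ agrees on those points with a unique polynomial in $|x|$ of degree $<p$ with coefficients mod $p^m$ (in particular any power $|x|^{\ge p}$, and any $\{0,1\}$-valued function, may be rewritten modulo $p^m$ in degree $<p$).

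For (i), since $P\mapsto D_{y_1}\cdots D_{y_{d+1}}P$ is additive in $P$, the degree-$\le d$ polynomials form a subgroup of all functions $\Fp^n\to\bbT$, so it suffices to bound $\deg Q$ for a single monomial $Q(x) = |x_1|^{e_1}\cdots|x_n|^{e_n}/p^{h+1}\bmod 1$ with $0<\sum_i e_i\le d - h(p-1)$; I would prove $\deg Q\le D := \sum_i e_i + h(p-1)$ by induction on $D$. The inductive step is the heart of the argument: substitute $|x_i+g_i| = |x_i| + (|g_i| - p\chi(x_i,g_i))$ into $D_gQ(x) = \big(\prod_i|x_i+g_i|^{e_i} - \prod_i|x_i|^{e_i}\big)/p^{h+1}\bmod 1$, expand by the multinomial theorem, and simplify using $\chi^c=\chi$ for $c\ge 1$. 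Each surviving term either loses a factor $|x_i|$, dropping $\sum_i e_i$ by at least one, or acquires a factor $p^{|S|}\prod_{i\in S}\chi(x_i,g_i)$ for some nonempty $S\subseteq[n]$, in which case the $p^{|S|}$ cancels against the denominator so the depth drops by $|S|$, and re-expanding the $\chi$'s and collapsing any exponent $\ge p$ via the Vandermonde fact yields monomials of parameter at most $D-|S|$. Either way the parameter decreases, so by induction $\deg D_gQ\le D-1$, whence $\deg Q\le D$. The delicate point is precisely that the budget $\sum_i e_i\le d-h(p-1)$ is balanced so as to survive one additive derivative: trading one factor $|x_i|$ for one unit of depth costs exactly $p-1$ in $\sum_i e_i$, which is exactly what the $(p-1)$ in the constraint absorbs; checking this bookkeeping is where the real work lies.

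For (ii), subtract two representations of the same $P$; evaluating at the origin kills every monomial (each has positive degree) and gives $\alpha=\alpha'$, so we are left with $\sum_{e,h} b_{e,h}|x_1|^{e_1}\cdots|x_n|^{e_n}/p^{h+1}\equiv 0\pmod 1$ with $b_{e,h}\in\{-(p-1),\ldots,p-1\}$; letting $H$ be the largest depth with a nonzero $b_{e,H}$, multiplying by $p^{H+1}$ and reducing mod $p$ leaves $\sum_e b_{e,H}|x_1|^{e_1}\cdots|x_n|^{e_n}\equiv 0\pmod p$ on all of $\Fp^n$ with each $e_i<p$, and linear independence of those monomials as $\Fp$-valued functions forces $b_{e,H}\equiv 0$, hence $b_{e,H}=0$, contradicting the choice of $H$ unless no nonzero coefficient remains. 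For (iii), I would induct on $n$ (nothing to prove for $n=0$) using Newton's forward-difference formula in the first coordinate, $P(x_1,x') = \sum_{j=0}^{p-1}\binom{|x_1|}{j}\,(D_{e_1}^{\,j}P)(0,x')$ with $e_1=(1,0,\ldots,0)$, valid because $\binom{|x_1|}{j}=0$ for $j>|x_1|$ and $D_{e_1}^{\,j}P\equiv 0$ for $j>d$; each $(D_{e_1}^{\,j}P)(0,\cdot)$ has degree $\le d-j$ in the remaining $n-1$ variables and is put in the required form by the inductive hypothesis, and since $j<p$ gives $p\nmid j!$, each $\binom{|x_1|}{j}$ is on $\{0,\ldots,p-1\}$ a polynomial of degree $\le j$ in $|x_1|$ with $p$-integral coefficients, so multiplying these against the monomials from the hypothesis and bookkeeping powers of $p$ exactly as in (i) reassembles admissible monomials. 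The only non-formal input is that $P$ minus its value at the origin is always $\bbU_K$-valued for some $K=K(p,d)$ — proved by a short induction on $d$ (at $d=1$ the difference is a homomorphism $\Fp^n\to\bbT$ and so lands in $\bbU_1$; each step costs one more factor of $p$) — which confines the torsion that ever appears and lets the binomial coefficients be read modulo a power of $p$ as needed.
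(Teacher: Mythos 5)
The paper does not prove this lemma; it quotes it verbatim as Lemma 1.7 of Tao and Ziegler~\cite{Tao:2011dw} and uses it as a black box, so there is no internal proof against which to compare yours. Your sketch follows the same route as the cited source: a derivative-and-carry computation bounding the degree of each admissible monomial, a Vandermonde linear-independence argument modulo the top occurring power of $p$ for uniqueness, and Newton's forward-difference formula with induction on the number of variables for surjectivity. The outline is sound, and the step you flag as ``where the real work lies'' is indeed the crux of (i). After substituting $|x_i+g_i|=|x_i|+|g_i|-p\chi_i$ and expanding by the multinomial theorem, every term other than $\prod_i|x_i|^{e_i}$ (which cancels) either strictly drops the exponent sum or gains at least one explicit factor of $p$; writing $c$ for the total power of $p$ gained and $S$ for the set of carry indices, one has $\sum_i a_i\le\sum_i e_i-c$ and $|S|\le c$, so the pre-collapse parameter is at most $\sum_i e_i-c+|S|(p-1)+(h-c)(p-1)\le D-c\le D-1$, and both the base-$p$ re-expansion of the coefficients and the Vandermonde collapse of exponents $\ge p$ can only decrease it further, so the induction closes. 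Two small points worth tightening in a write-up: Newton's formula needs no wraparound care, since $D_{e_1}^jP(0,\cdot)$ samples $x_1$ only in $\{0,\dots,j\}\subseteq\{0,\dots,p-1\}$ and $\binom{|x_1|}{j}=0$ for $j>|x_1|$ already truncates the sum; and the sharp torsion bound is $h\le\lfloor(d-1)/(p-1)\rfloor$ rather than the looser ``one factor of $p$ per degree'' you gesture at, though the coarse bound also suffices for the argument. Nothing in your plan would fail; it just needs the carry/collapse bookkeeping in (i) and the coefficient bookkeeping in (iii) written out in full.
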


The degree and depth of a polynomial $P$ is denoted by $\deg(P)$ and $\depth(P)$, respectively.
Also, for the convenience of exposition, we will assume throughout this paper that the shifts of all polynomials are zero, which does not affect any of the results in this work.
Hence, any polynomial of depth $h$ takes values in $\bbU_{h+1}$.

\subsubsection{Rank and decomposition theorem}
We will often need to study the Gowers norms of exponentials of polynomials.
As described below, if this analytic quantity is non-negligible, then there is an algebraic explanation: it is possible to decompose the polynomial as a function of a constant number of low-degree polynomials.
To state this rigorously, let us define the notion of the rank of a polynomial.

\begin{definition}[Rank of a polynomial]
  Given a polynomial $P : \Fp^n \to \bbT$ and an integer $d > 1$, the \emph{$d$-rank} of $P$, denoted as $\rank_d(P)$, is defined to be the smallest integer $r$ such that there exist polynomials $Q_1,\ldots,Q_r : \Fp^n \to \bbT$ of degrees $\leq d - 1$ and a function $\Gamma : \bbT^r \to \bbT$ satisfying $P(x) = \Gamma(Q_1(x), \ldots , Q_r(x))$.
  If $d = 1$, then $1$-rank is defined to be $\infty$ if $P$ is non-constant and $0$ otherwise.
  The \emph{rank} of a polynomial $P : \Fp^n \to \bbT$ is its $\deg(P)$-rank.
\end{definition}

Note that for an integer $\lambda \in [1, p - 1]$, $\rank(P ) = \rank(\lambda P )$.
The following theorem shows that a high-rank polynomial has a small Gowers norms.

\begin{theorem}[Theorem~1.20 of~\cite{Tao:2011dw}]\label{the:small-gowers-norm->high-rank}
  For any $\epsilon > 0$ and integer $d \in \bbN$,
  there exists an integer $r = r_{\ref{the:small-gowers-norm->high-rank}}(\epsilon,d)$ such that the following holds.
  For any polynomial $P : \Fp^n \to \bbT$ of degree at most $d$,
  if $\rank_d(P) \geq r$,
  then $\|\sfe(P)\|_{U^d} \leq \epsilon$.
\end{theorem}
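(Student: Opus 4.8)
The plan is to prove the contrapositive: if $\rank_d(P)$ is small then $\|\sfe(P)\|_{U^d}$ is bounded away from zero. The structure is an induction on the degree $d$, using the fundamental relation between the $U^d$-norm and the $U^{d-1}$-norms of multiplicative derivatives. For a classical polynomial $P$ of degree $\le d$, the key identity is that $\Delta_h \sfe(P) = \sfe(D_h P)$, and $D_h P$ is a polynomial of degree $\le d-1$. The inductive hypothesis (applied to degree $d-1$) tells us that if $D_h P$ has low $(d-1)$-rank for many directions $h$, then $\|\sfe(D_h P)\|_{U^{d-1}}$ is large for those $h$, which in turn forces $\|\sfe(P)\|_{U^d}$ to be large via the identity
\[
\|\sfe(P)\|_{U^d}^{2^d} = \E_{h \in \Fp^n}\bigl[\|\Delta_h \sfe(P)\|_{U^{d-1}}^{2^{d-1}}\bigr] = \E_{h \in \Fp^n}\bigl[\|\sfe(D_h P)\|_{U^{d-1}}^{2^{d-1}}\bigr].
\]
So the contrapositive reduces to: if $\|\sfe(P)\|_{U^d}$ is large, then $D_h P$ has bounded $(d-1)$-rank for a positive-density set of $h$, and from this one must reconstruct a low-rank representation of $P$ itself.

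**The main work** is the passage from ``$D_h P$ is low-rank for many $h$'' to ``$P$ is low-rank.'' The idea is a symmetry/derivative-reversal argument: one shows that the polynomials $Q_1^{(h)}, \dots, Q_r^{(h)}$ witnessing the low rank of $D_h P$ can be chosen, after regularization, to come from a single bounded-size collection of polynomials $\{R_1, \dots, R_s\}$ of degree $\le d-1$ that is independent of $h$ — essentially because a high-rank factor of $D_h P$ would, by the inductive hypothesis again, make $\|\sfe(D_h P)\|_{U^{d-1}}$ small, contradicting its largeness. Once $D_h P$ is measurable with respect to a fixed bounded factor $\mathcal{B}$ (the $\sigma$-algebra generated by $R_1, \dots, R_s$) for every $h$ in a dense set — and then, by a further argument exploiting that $h \mapsto D_h P(x)$ behaves additively, for \emph{every} $h$ — one concludes that $P$ modulo the factor $\mathcal{B}$ is a polynomial whose every derivative vanishes, i.e.\ $P$ itself is measurable with respect to a slightly enlarged bounded factor. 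This yields $\rank_d(P) \le s + O(1)$, contradicting the assumption that it is large, provided $r = r_{\ref{the:small-gowers-norm->high-rank}}(\epsilon, d)$ is chosen large enough in terms of the inductive bounds and the regularity parameters.

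**Two technical points** deserve care. First, non-classical polynomials: when $d \ge p$ the depth can be positive, so $D_h P$ need not have degree exactly one less, and the base case of the induction (degree $1$, where $1$-rank is $\infty$ unless $P$ is constant) must be matched to the convention that $\|\sfe(P)\|_{U^1} = |\E \sfe(P)| < 1$ unless $P$ is constant; one checks via a standard Fourier/equidistribution estimate that a non-constant degree-$1$ non-classical polynomial has $U^1$-norm bounded away from $1$ by a quantity depending only on $p$. Second, the ``regularization'' step — replacing an arbitrary low-rank representation by one over a high-rank factor — requires an iterative refinement (a polynomial regularity lemma), and the quantitative bound on $s$, hence on $r_{\ref{the:small-gowers-norm->high-rank}}$, comes out as a tower-type function of $1/\epsilon$ and $d$; since the statement only asserts existence of $r$, we do not need to optimize this.

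**The hard part** will be making the derivative-reversal argument fully rigorous in the non-classical setting and correctly bookkeeping the factor refinements so that the factor $\mathcal{B}$ can genuinely be taken uniform in $h$; this is where all previous proofs of this theorem (and the inverse theorem it underlies) concentrate their effort, and it is the step I expect to occupy the bulk of the write-up. Everything else — the $U^d$-to-$U^{d-1}$ identity, the derivative degree count, the base case — is routine.
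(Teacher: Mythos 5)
The paper does not prove this theorem; it imports it verbatim as Theorem~1.20 of Tao and Ziegler~\cite{Tao:2011dw}, and elsewhere explicitly remarks that it (like all prior property-testing work) uses the Gowers inverse machinery as a black box. So there is no in-paper proof against which to measure your attempt. That said, your outline does capture the skeleton shared by the known proofs (Green--Tao for $d<p$, Kaufman--Lovett, Tao--Ziegler in general): induct on $d$ via $\|\sfe(P)\|_{U^d}^{2^d}=\E_h\|\sfe(D_hP)\|_{U^{d-1}}^{2^{d-1}}$, use the inductive hypothesis to force $D_hP$ to have bounded $(d-1)$-rank for a dense set of $h$, regularize so the witnessing lower-degree polynomials come from one high-rank factor $\mathcal{B}$, and then pass from ``$D_hP$ is $\mathcal{B}$-measurable for all $h$'' to ``$P$ has bounded $\rank_d$.''

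The gap is in that last step, which you label derivative reversal and characterize as bookkeeping. The specific mechanism you propose does not work: ``$h\mapsto D_hP(x)$ behaves additively'' is false once $\deg P\ge 2$, since $D_{h+h'}P - D_hP - D_{h'}P = D_hD_{h'}P$, which is a nonzero lower-degree polynomial rather than zero. The map $h\mapsto D_hP$ is a \emph{cocycle}, not a homomorphism, and what is actually required is a cocycle-integration statement: from ``the derivative cocycle is measurable with respect to a bounded factor'' deduce that $P$ itself is a bounded-complexity function of lower-degree polynomials. That step \emph{is} the theorem; in Kaufman--Lovett it is carried out by an explicit symmetrization with carefully controlled error terms, and in Tao--Ziegler by an ergodic structure theorem over the Loeb measure on an ultraproduct of $\Fp^n$'s. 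Your sketch also needs more care in the non-classical, positive-depth regime, where even writing the cocycle relation correctly involves the $\bbU_{h+1}$-valued arithmetic. In short, the high-level plan is right, but the step you flag as the hard part is essentially the entire proof, your proposed route through it is broken, and since the paper itself only cites the result, invoking~\cite{Tao:2011dw} as a black box is the appropriate move.
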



Now we introduce the notion of a factor.
Note that a polynomial sequence $(P_1,\ldots,P_C)$ on $m$ variables of depth $(h_1,\ldots,h_C)$ defines a partition of the space $\prod_{i=1}^C \bbU_{h_i+1}$.
That is, for any tuple $(b_1,\ldots,b_C)$ with $b_i \in \bbU_{h_i+1}$ for each $i \in \{1,\ldots,C\}$,
there is a corresponding part, called an \emph{atom}, $\{x \in \Fp^m \mid (P_1(x),\ldots,P_C(x)) = (b_1,\ldots,b_C)\}$.
We call the partition the \emph{factor}, defined by $(P_1,\ldots,P_C)$ and denoted by $\caB(P_1,\ldots,P_C)$.

The \emph{complexity} of $\caB$, denoted $|\caB|$, is the number of defining polynomials $C$.
The degree of $\caB$ is the maximum degree among its defining polynomials $P_1, \ldots , P_C$.
If $P_1, \ldots , P_C$ are of depths $h_1, \ldots, h_C$, respectively,
then $\|\caB\| = \prod_{i=1}^C p^{h_i+1}$ is called the \emph{order} of $\caB$.
Notice that the number of atoms of $\caB$ is bounded by $\|\caB\|$.

Next, we  formalize the notion of the rank for a generic collection of polynomials.
Intuitively, this should mean that there are no unexpected algebraic dependencies among the polynomials.
\begin{definition}[Rank and regularity]
  A polynomial factor $\caB$ defined by a sequence of polynomials $P_1,\ldots,P_C : \Fp^n \to \bbT$ with respective depths $h_1,\ldots,h_C$ is said to have \emph{rank $r$} if $r$ is the smallest integer for which there exist $(\lambda_1,\ldots,\lambda_C) \in \bbZ^C$ so that $(\lambda_1 \bmod p^{h_1+1},\ldots, \lambda_C \bmod p^{h_C+1}) \neq (0,\ldots,0)$ and the polynomial $Q = \sum_{i=1}^C \lambda_i P_i$ satisfies $\rank_d(Q) \leq r$ where $d = \max_i \deg(\lambda_i P_i)$.

  The \emph{rank} of a polynomial sequence $P_1,\ldots,P_C$, denoted as $\rank(P_1,\ldots,P_C)$, is the rank of the factor $\caB(P_1,\ldots,P_C)$.
  Given a polynomial factor $\caB$ and a function $r : \bbN \to \bbN$,
  we say that $\caB$ is \emph{$r$-regular} if $\caB$ is of a rank of at least $r(|\caB|)$.
\end{definition}
If the rank of a polynomial factor is high, then each atom has almost the same size~\cite{Bhattacharyya:2013ii}.
However, we do not state it here formally, since it will not be used in this paper.



The following decomposition theorem is one of the main tools in higher order Fourier analysis.

\begin{theorem}[Decomposition theorem]\label{the:decomposition}
  Suppose $\delta > 0$ and $d \in \bbN$ is an integer.
  Let $\eta : \bbN \to \bbR_+$ be an arbitrary non-increasing function and $r : \bbN \to \bbN$ be an arbitrary non-decreasing function.
  Then there exists $C = C_{\ref{the:decomposition}}(\delta, \eta,d,r)$ such that the following holds.

  Given $f : \Fp^n \to \bit$, there exist three functions $f_1, f_2, f_3 : \Fp^n \to \bbR$ and a polynomial factor $\caB$ of a degree of at most $d$ and a complexity of at most $C$ such that the following conditions hold:
  \begin{enumerate}
  \setlength{\itemsep}{0pt}
  \item $f = f_1 + f_2 + f_3$.
  \item $f_1 = \E[f \mid \caB]$, that is, $f_1$ is obtained from $f$ by averaging each atom.
  \item $\|f_2\|_2 \leq \delta$.
  \item $\|f_3\|_{U^{d+1}} \leq  \eta(|\caB|)$.
  \item $f_1$ and $f_1 + f_3$ have range $[0, 1]$;
    $f_2$ and $f_3$ have range $[-1, 1]$.
  \item $\caB$ is $r$-regular.
  \end{enumerate}
\end{theorem}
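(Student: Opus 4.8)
The plan is to prove Theorem~\ref{the:decomposition} by the standard energy-increment argument of higher order Fourier analysis, bootstrapped so that the Gowers-norm bound $\eta$ is allowed to depend on the complexity of the factor, with $f_2$ absorbing the residual produced when the bootstrap is halted after boundedly many rounds.

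First I would assemble two standard ingredients. The first is the inverse theorem for the Gowers $U^{d+1}$-norm over $\Fp$~\cite{Tao:2011dw}: there is a function $c(\epsilon,d) > 0$ such that if $g : \Fp^n \to [-1,1]$ satisfies $\|g\|_{U^{d+1}} \ge \epsilon$, then $|\E_x[g(x)\overline{\sfe(P(x))}]| \ge c(\epsilon,d)$ for some polynomial $P$ of degree at most $d$. The second is a regularization lemma --- a consequence of Theorem~\ref{the:small-gowers-norm->high-rank} together with the algebraic structure of polynomial factors --- stating that any polynomial factor of degree at most $d$ and complexity $C$ can be refined into an $r$-regular polynomial factor of degree at most $d$ whose complexity is bounded by a function of $C$, $d$, and $r$ (iteratively: if a nontrivial integer combination of the defining polynomials has small $\rank_d$, it is a function of boundedly many polynomials of strictly smaller degree, which we adjoin, decreasing a degree-lexicographic monovariant). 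Combining these yields a single \emph{refinement step}: given an $r$-regular factor $\caB$ of degree at most $d$ and a parameter $\epsilon > 0$, there is an $r$-regular refinement $\caB' \supseteq \caB$ of degree at most $d$ with $\|f - \E[f\mid\caB']\|_{U^{d+1}} \le \epsilon$ and $|\caB'|$ bounded by $W(|\caB|,\epsilon)$ for a fixed function $W = W_{d,r}$; this step is itself a short inner energy increment, repeatedly adjoining the polynomial returned by the inverse theorem and re-regularizing, which halts after $O(c(\epsilon,d)^{-2})$ rounds since $\|\E[f\mid\cdot]\|_2^2 \in [0,1]$ strictly increases by at least $c(\epsilon,d)^2$ each round (by Pythagoras, as $\sfe(P)$ becomes measurable with respect to the refined factor once $P$ is adjoined).

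The main loop bootstraps this to make the Gowers bound complexity-dependent. Starting from the trivial (vacuously regular) factor $\caB^{(0)}$, at round $k \ge 1$ I apply the refinement step to $\caB^{(k-1)}$ with parameter $\epsilon_k := \eta(|\caB^{(k-1)}|)$, obtaining an $r$-regular $\caB^{(k)} \supseteq \caB^{(k-1)}$ of degree at most $d$ with $\|f - \E[f\mid\caB^{(k)}]\|_{U^{d+1}} \le \eta(|\caB^{(k-1)}|)$. The potentials $a_k := \|\E[f\mid\caB^{(k)}]\|_2^2$ are nondecreasing in $[0,1]$, so among the first $\lceil 1/\delta^2\rceil$ rounds there is some $k$ with $a_k - a_{k-1} < \delta^2$; I stop there and put $\caB := \caB^{(k-1)}$, $f_1 := \E[f\mid\caB]$, $f_2 := \E[f\mid\caB^{(k)}] - \E[f\mid\caB^{(k-1)}]$, and $f_3 := f - \E[f\mid\caB^{(k)}]$. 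Then $f = f_1 + f_2 + f_3$ and $f_1 = \E[f\mid\caB]$ with $\caB$ $r$-regular of degree at most $d$ (conditions~1, 2, 6); $\|f_2\|_2^2 = a_k - a_{k-1} < \delta^2$ by Pythagoras (condition~3); and $\|f_3\|_{U^{d+1}} \le \eta(|\caB^{(k-1)}|) = \eta(|\caB|)$ (condition~4). The complexity is controlled because each round multiplies complexity by at most $W(\cdot,\eta(\cdot))$, a fixed function of its argument, and there are at most $\lceil 1/\delta^2\rceil$ rounds, so $|\caB| \le C_{\ref{the:decomposition}}(\delta,\eta,d,r)$.

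It remains to arrange the range condition~5. Already $f_1, f_2, f_3$ take values in $[0,1], [-1,1], [-1,1]$ respectively, since $f$ and the conditional expectations lie in $[0,1]$; the only thing to fix is that $f_1 + f_3 = f - f_2$ may leave $[0,1]$. I would clip: replace $f_3$ by ${\mathrm{clip}}_{[0,1]}(f_1 + f_3) - f_1$ and move the correction into $f_2$. Since $f \in [0,1]$, the pointwise correction is dominated by $|f_2|$, so $\|f_2\|_2$ at most doubles, and by Lemma~\ref{lem:bound-gowers-norm-by-l1-norm} the correction contributes at most $\|f_2\|_1^{1/2^{d+1}}$ to $\|f_3\|_{U^{d+1}}$; running the whole argument with $\delta$ and $\eta$ replaced by appropriately smaller quantities (absorbed into the dependence of $C_{\ref{the:decomposition}}$) then restores all stated bounds. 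I expect the main obstacle to be exactly this parameter bookkeeping: one must invoke the refinement step at round $k$ with the already-fixed number $\eta(|\caB^{(k-1)}|)$ so that its guarantee is a genuine absolute bound, interleave regularization without breaking the chain $\caB^{(0)} \subseteq \caB^{(1)} \subseteq \cdots$, and choose the slack so that the clipping at the end does not spoil condition~4 while keeping the final complexity a function of $(\delta,\eta,d,r)$ alone.
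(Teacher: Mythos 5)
The paper does not prove this theorem: it is imported verbatim from the literature (it is Theorem~4.4/5.1 of Bhattacharyya--Fischer--Hatami--Hatami--Lovett, which the paper cites as a preliminary), so there is no ``paper's own proof'' to compare against. Evaluating your sketch on its merits, the high-level architecture --- inverse $U^{d+1}$ theorem plus factor regularization yielding a refinement step, iterated in an outer energy-increment loop with parameter $\eta(|\caB^{(k-1)}|)$, stopping when the energy gain falls below $\delta^2$ --- is exactly the standard argument and is sound for conditions 1--4 and 6. The issue is the last step.

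The clipping you propose for condition~5 has a genuine circular-parameter gap, and your proposed patch (``run the whole argument with $\delta,\eta$ replaced by appropriately smaller quantities'') does not resolve it. The clipping correction equals $|f_2|$ pointwise, so via Lemma~\ref{lem:bound-gowers-norm-by-l1-norm} it contributes up to $\|f_2\|_2^{1/2^{d+1}} \approx \delta^{1/2^{d+1}}$ to $\|\tilde f_3\|_{U^{d+1}}$, and you must make this $\le \eta(|\caB|)/2$. But $|\caB|$ is only known after the algorithm runs, and it grows when $\delta$ is shrunk (more rounds before the energy stabilizes), so $\eta(|\caB|)$ shrinks too; for a fast-decaying $\eta$, there is no choice of smaller $\delta'$ for which $\delta'^{1/2^{d+1}} \le \eta(|\caB|)/2$ is guaranteed. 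If one instead makes the stopping threshold adaptive, $\tau_k := \min\bigl(\delta^2,(\eta(|\caB^{(k-1)}|)/2)^{2^{d+2}}\bigr)$, then termination within a bounded number of rounds is no longer forced by the pigeonhole $\sum_k \tau_k \le 1$, since $\sum_k \tau_k$ can converge when $\eta$ decays quickly relative to the complexity-growth function of the refinement step. This is the obstacle you flag as ``parameter bookkeeping,'' but it is not merely bookkeeping: it defeats the naive clip-and-shrink strategy, and the published proofs establish the range condition by a more careful construction rather than a post-hoc truncation. You should consult the proof in the source papers (or the proof of the strong arithmetic regularity lemma in Green--Tao or Tao's book) to see how the boundedness of $f_1 + f_3$ is actually obtained.
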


\subsubsection{Systems of linear forms}
A \emph{linear form} in $k$ variables is a vector $L = (\lambda_1,\ldots,\lambda_k) \in \Fp^k$, which is regarded as a linear function from $V^k$ to $V$ for any vector space $V$ over $\Fp$:
If $x = (x_1,\ldots,x_k) \in V_k$, then $L(x) := \lambda_1 x_1 + \cdots + \lambda_k x_k$.
A linear form $L = (\lambda_1,\lambda_2,\ldots,\lambda_k)$ is said to be \emph{affine} if $\lambda_1 = 1$.
A \emph{system of linear forms} in $k$ variables is a finite set $\caL \subseteq \Fp^k$ of linear forms in $k$ variables.
A system of linear forms is called \emph{affine} if it comprises affine linear forms.

Given a function $f : \Fp^n \to \bbC$ and a system of linear forms $\caL = \{L_1,\ldots,L_m\} \subseteq \Fp^k$, define
\[
  t_\caL(f) := \E_{x_1,\ldots,x_k}\Bigl[\prod_{L \in \caL}f(L(x_1,\ldots,x_k)) \Bigr].
\]
Note that for any function $f:\Fp^n \to \bbR$, affine bijection $A:\Fp^n \to \Fp^n$, and affine system of linear forms $\caL$, we have $t_\caL(f) = t_\caL(f \circ A)$.
Also, by choosing $\caL_d := \{L_I \in \Fp^{d+1}: I \subseteq [d]\}$ for $L_I(x_0,x_1,\ldots,x_d) := x_0 + \sum_{i \in I}x_i$,
we have $\|f\|_{U^d} = |t_{\caL_d}(f)|^{1/2^d}$.

\begin{definition}\label{def:true-complexity}
  A system of linear forms $\caL = \{L_1, \ldots , L_m\} \subseteq \Fp^k$ is said to be of \emph{true complexity} at most $d$ if there exists a function $\delta : \bbR_+ \to \bbR_+$ such that $\lim\limits_{\epsilon\to0} \delta(\epsilon) = 0$ and
  \[
    \left|\E_{x_1,\ldots,x_k}\Bigl[\prod_{i=1}^m f_i(L_i(x_1,\ldots,x_k))\Bigr] \right| \leq \min_i \delta(\|f_i\|_{U^{d+1}})
  \]
  holds for all $f_1,\ldots,f_m:\Fp^n \to [-1,1]$ .
\end{definition}
The true complexity of an affine system of $m$ linear forms is at most $mp$~\cite{Gowers:2009cv}.


The following lemma states that, if $f$ and $g$ are close in the sense that $f-g$ has a small $d$-th Gowers norm, then we cannot distinguish them in terms of $t_\caL$, where $\caL$ is a system of linear form with true complexity $d$.
\begin{lemma}\label{lem:tl-difference-by-gowers-norm-difference}
  Let $\caL = \{L_1,\ldots,L_m\} \subseteq \Fp^k$ be a system of linear forms of true complexity of at most $d$.
  Then, there exists a function $\delta:\bbR_+ \to \bbR_+$ such that $\lim\limits_{\epsilon \to 0}\delta(\epsilon) = 0$ and
  \[
    |t_\caL(f) - t_\caL(g)| \leq \delta(\|f - g\|_{U^{d+1}})
  \]
  holds for any $f, g:\Fp^n \to [0, 1]$.
\end{lemma}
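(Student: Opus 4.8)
The plan is to derive this from the hypothesis that $\caL$ has true complexity at most $d$ via a telescoping argument. Write $\caL = \{L_1,\ldots,L_m\}$ and define, for $0 \le j \le m$, the hybrid
\[
  H_j := \E_{x_1,\ldots,x_k}\Bigl[\prod_{i \le j} g(L_i(x)) \prod_{i > j} f(L_i(x))\Bigr],
\]
so that $H_0 = t_\caL(f)$ and $H_m = t_\caL(g)$. Then $t_\caL(f) - t_\caL(g) = \sum_{j=1}^m (H_{j-1} - H_j)$, and each difference $H_{j-1} - H_j$ equals an expectation over a product that contains the factor $f(L_j(x)) - g(L_j(x))$ in the $j$-th slot, together with $g$ in slots $1,\ldots,j-1$ and $f$ in slots $j+1,\ldots,m$. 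All of these functions take values in $[0,1] \subseteq [-1,1]$ (the difference $f - g$ takes values in $[-1,1]$), so the true-complexity hypothesis applies.

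The key step is then to invoke Definition~\ref{def:true-complexity} for the system $\caL$: there is a function $\delta_0 : \bbR_+ \to \bbR_+$ with $\lim_{\epsilon \to 0}\delta_0(\epsilon) = 0$ such that for any $h_1,\ldots,h_m : \Fp^n \to [-1,1]$,
\[
  \Bigl|\E_{x_1,\ldots,x_k}\Bigl[\prod_{i=1}^m h_i(L_i(x))\Bigr]\Bigr| \le \min_i \delta_0(\|h_i\|_{U^{d+1}}).
\]
Applying this to $H_{j-1} - H_j$ with $h_j = f - g$ and $h_i$ equal to $g$ or $f$ otherwise, and taking the minimum over $i = j$, we get $|H_{j-1} - H_j| \le \delta_0(\|f - g\|_{U^{d+1}})$ for each $j$. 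Summing over $j$ yields
\[
  |t_\caL(f) - t_\caL(g)| \le m \cdot \delta_0(\|f-g\|_{U^{d+1}}),
\]
so the lemma holds with $\delta(\epsilon) := m \cdot \delta_0(\epsilon)$, which indeed tends to $0$ as $\epsilon \to 0$ since $m$ is a fixed constant.

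One small point I would be careful about: Definition~\ref{def:true-complexity} is phrased with a single sum $\prod_{i=1}^m f_i(L_i(x))$ where the forms $L_i$ are listed with multiplicity matching $\caL = \{L_1,\ldots,L_m\}$, so the hybrid argument should treat $\caL$ as an ordered list (or multiset) of exactly $m$ forms and keep that indexing fixed throughout; there is no subtlety with repeated forms since we simply carry whatever list the statement gives us. The only genuinely non-routine ingredient is the observation that the true-complexity bound can be applied to the single ``difference'' slot while the remaining slots are filled by bounded functions $f$ and $g$ — this is exactly what the $\min_i$ in Definition~\ref{def:true-complexity} buys us, so there is no real obstacle here; the argument is a standard hybrid/telescoping reduction. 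I do not expect any step to be a serious difficulty; the main thing to get right is bookkeeping of the hybrids and noting that all intermediate functions have range in $[-1,1]$ as required.
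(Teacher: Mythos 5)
Your proof is correct and follows essentially the same telescoping/hybrid argument as the paper: write $t_\caL(f) - t_\caL(g)$ as a sum of $m$ terms, each with a single $f-g$ slot, and bound each term by $\delta_0(\|f-g\|_{U^{d+1}})$ using the $\min_i$ in the true-complexity definition. The only cosmetic difference is that the paper places $f$'s to the left of the difference slot and $g$'s to the right, whereas you do the reverse — this is immaterial.
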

\begin{proof}
  We write $t_\caL(f) - t_\caL(g)$ as a telescopic sum
  \[
    t_\caL(f) - t_\caL(g) = \sum_{i \in [m]} \E_{x \in (\Fp^n)^k } \prod_{j < i}f(L_j(x)) \cdot (f(L_i(x)) - g(L_i(x))) \cdot \prod_{j>i}g(L_j(x)).
  \]
  We bound each term in the sum.
  From the definition of true complexity,
  \[
    \Bigl| \E_{x \in (\Fp^n)^k } \prod_{j < i}f(L_j(x)) \cdot (f(L_i(x)) - g(L_i(x))) \cdot \prod_{j>i}g(L_j(x)) \Bigr| \leq \delta'(\|f - g\|_{U^{d+1}}),
  \]
  where $\delta'$ is from Definition~\ref{def:true-complexity}.
  Then, we have $|t_\caL(f) - t_\caL(g)| \leq m \delta'(\|f - g\|_{U^{d+1}})$.
  By setting $\delta(\epsilon) := m \delta'(\epsilon)$,
  we have the lemma.
\end{proof}


\subsection{Non-standard analysis}
We now review the theory of ultralimits, or \emph{non-standard analysis}.
Most of the material in this section is found in~\cite{Warner:2012tc}.

An \emph{ultrafilter} on $\bbN$ is a set $\omega$ comprising subsets of $\bbN$ satisfying the following conditions:
\begin{itemize}
\itemsep=0pt
\item The empty set does not lie in $\omega$.
\item If $A \subseteq \bbN$ lies in $\omega$, then any subset of $\bbN$ containing $A$ lies in $\omega$.
\item If $A$ and $B$ lie in $\omega$, then the intersection $A \cap B$ lies in $\omega$.
\item If $A \subseteq \bbN$, then exactly one of $A$ and $\bbN \setminus A$ lies in $\omega$.
\end{itemize}
Furthermore, if no finite set lies in $\omega$, then we say that $\omega$ is a \emph{non-principal ultrafilter}.
A non-principal filter exists and, in what follows, we fix a non-principal filter $\omega$.

An ultraproduct $\bfA$ of a sequence of sets $(A_i)_{i \in \bbN}$ with respect to $\omega$ is defined as follows.
First construct the Cartesian product $\prod_{i\in \bbN}A_i$.
Define an equivalence relation $\bfa \sim \bfb$, where $\bfa = (a_1,a_2,\ldots)$ and $\bfb = (b_1,b_2,\ldots)$, by
\[
  \bfa \sim \bfb \Leftrightarrow \{i \in \bbN: a_i = b_i \} \in \omega.
\]
Then let $\bfA = \prod_{i \in \bbN}A_i / \sim$.
One can think of $\bfA$ as a sort of completion where one can take the limit of arbitrary sequences, rather than just Cauchy sequences:
given a  sequence $\{a_i\}_{i \in \bbN}$, the equivalence class in $\bfA$ of this sequence will be denoted as follows:
\[
  \bfa = \lim_{i \to \omega}a_i.
\]

Thus, in this terminology, we have $\lim\limits_{i\to \omega} a_i = \lim\limits_{i \to \omega} b_i$ if and only if the set of $i \in \bbN$ such that $a_i = b_i$ is a member of $\omega$.
Similarly, for subsets $H_i \subseteq A_i$ we denote by $\bfH$ or $\lim\limits_{i \to \omega} H_i$ the set of all elements of the ultraproduct arising from limits of points in the given subsets:
\[
  \lim_{i \to \omega} H_i = \Bigl\{\lim_{i \to \omega}a_i : a_i \in H_i, i \in \bbN\Bigr\}.
\]
Such sets are called \emph{internal sets}.

If all of $A_i$ are the same space, the ultraproduct is called an \emph{ultrapower}.
Ultrapowers with respect to a non-principal ultrafilter will be denoted with a prior asterisk;
for example, the ultrapowers of $\bbN$ and $\bbR$ are written $\nst{\bbN}$ and $\nst{\bbR}$, respectively.
The latter object is called the set of \emph{hyperreal numbers}.
The order structure carries over into the hyperreals: for real sequences $(a_i)$ and $(b_i)$ whose ultralimits are $\bfa$ and $\bfb$, respectively, exactly one of the sets $\{i:a_i <b_i\}$, $\{i:a_i =b_i\}$, or $\{i:a_i >b_i\}$ is in $\omega$.
In the first case we say $\bfa < \bfb$, in the second $\bfa = \bfb$, and in the third $\bfa > \bfb$.

We will assume basic facts about $\nst{\bbN}$ and hyperreals, which can be found in~\cite{Robinson:1996uq}: call a hyperreal \emph{standard} if it can be written as $\lim\limits_{i \to \omega}r$ for some constant $r \in \bbR$; thus the reals can be considered a subset of the hyperreals (and likewise for $\nst{\bbN}$).
The hyperreals are an ordered field with an ordering extending that of the reals.
Define an absolute value in the obvious way, by setting
\[
  \Bigl|\lim_{i\to \omega} r_i\Bigr| = \lim_{i\to \omega} |r_i|,
\]
which will be a nonnegative hyperreal.
We call $\bfa \in \nst{\bbR}$ \emph{bounded} if $|\bfa| < \bfC$ for some standard $\bfC$, and we call $\bfa$ \emph{infinitesimal} if $|\bfa| < \bfC$ for all standard $\bfC$.
Hyperreals that are not bounded are called \emph{infinite}.
Every bounded $\bfa$ has a unique decomposition
\[
  \bfa = \st(\bfa) + (\bfa - \st(\bfa))
\]
into a standard part $\st(\bfa)$ and an infinitesimal part $\bfa - \st(\bfa)$, where the mapping $\bfa \mapsto \st(\bfa)$ is a homomorphism from the ring of bounded hyperreals to the reals.

Given a sequence of functions $(f_i:A_i \to \bbR)$, we can form an \emph{ultralimit} $\bff = \lim\limits_{i \to \omega}f_i:\bfA \to \nst{\bbR}$ by defining
\[
  \bff\Bigl(\lim_{i \to \omega}x_i\Bigr) = \lim_{i \to \omega}f_i(x_i).
\]
In what follows, we study the standard part of the ultralimit of a sequence of functions.
For a sequence of functions $(f_i)$, the function $\bff:\bfA \to \overline{\bbR}$ that is defined as $\bff = \st(\lim\limits_{i \to \omega}f_i)$ is called the \emph{function limit}\footnote{This term is not standard in non-standard analysis.} of $(f_i)$.

Suppose that each $A_i$ is an abelian group equipped with a normalized measure $\mu_i$ such that the measure spaces formed are compatible with the group structure in the sense that the action of the group on any measurable set is again measurable.
Then, there is a normalized measure on $\bfA$ called the \emph{Loeb measure} (see~\cite{Warner:2012tc} for its construction.)
\begin{lemma}[Lemma~3.6 of~\cite{Warner:2012tc}]
  Let $(f_i:A_i \to \bbR)$ be a sequence of $\mu_i$-measurable functions on $A_i$ for each $i \in \bbN$ and let $\bff = \st(\lim\limits_{i \to \omega}f_i)$ be its function limit.
  Then, $\bff$ is $\mu$-measurable, where $\mu$ is the Loeb measure on $\bfA$.
\end{lemma}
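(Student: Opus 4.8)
The plan is to verify Loeb-measurability of $\bff$ directly through its sublevel sets, reducing everything to the single fact that internal subsets of $\bfA$ are $\mu$-measurable. Recall from the construction of the Loeb measure (see~\cite{Warner:2012tc}) that $\mu$ is the \Caratheodory extension of the finitely additive set function $\lim_{i\to\omega}H_i \mapsto \st(\lim_{i\to\omega}\mu_i(H_i))$ defined on the algebra of internal sets; consequently every internal set lies in the Loeb $\sigma$-algebra, which in particular is closed under countable unions and complements. So it suffices to exhibit, for each $t \in \bbR$, the set $\{\bfx \in \bfA : \bff(\bfx) < t\}$ as a countable union of internal sets.

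Fix $t \in \bbR$. For each $i \in \bbN$ put $H_i^t := \{x \in A_i : f_i(x) < t\}$, which is $\mu_i$-measurable because $f_i$ is, and let $\bfH^t := \lim_{i\to\omega} H_i^t$, an internal subset of $\bfA$. I claim that
\[
  \{\bfx \in \bfA : \bff(\bfx) < t\} = \bigcup_{q \in \mathbb{Q},\, q < t} \bfH^q .
\]
For $\subseteq$, suppose $\bff(\bfx) = \st(\lim_{i\to\omega} f_i(x_i)) < t$ for a representative $(x_i)$ of $\bfx$. If $\lim_{i\to\omega} f_i(x_i)$ is an infinite negative hyperreal, then $\lim_{i\to\omega} f_i(x_i) < q$ for every standard $q$; otherwise $\lim_{i\to\omega}f_i(x_i)$ is bounded and we may pick a rational $q$ with $\st(\lim_{i\to\omega} f_i(x_i)) < q < t$, and then the positive standard real $q - \st(\lim_{i\to\omega} f_i(x_i))$ exceeds the infinitesimal $\lim_{i\to\omega} f_i(x_i) - \st(\lim_{i\to\omega} f_i(x_i))$, so again $\lim_{i\to\omega} f_i(x_i) < q$. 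In either case there is a rational $q < t$ with $\lim_{i\to\omega} f_i(x_i) < q$ in $\nst{\bbR}$, which by the definition of the order on $\nst{\bbR}$ means $\{i : f_i(x_i) < q\} \in \omega$, i.e. $\bfx \in \bfH^q$. For $\supseteq$, if $\bfx \in \bfH^q$ for some rational $q < t$, then $\{i : x_i \in H_i^q\} = \{i : f_i(x_i) < q\} \in \omega$ for a representative $(x_i)$, hence $\lim_{i\to\omega} f_i(x_i) < q$ in $\nst{\bbR}$, hence $\st(\lim_{i\to\omega} f_i(x_i)) \le q < t$, i.e. $\bff(\bfx) < t$.

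Granting the claim, $\{\bff < t\}$ is a countable union of internal sets, hence $\mu$-measurable, for every $t \in \bbR$; since the Borel $\sigma$-algebra on $\overline{\bbR}$ is generated by the rays $\{u \in \overline{\bbR} : u < t\}$, and $\{\bff = -\infty\} = \bigcap_n \{\bff < -n\}$ and $\{\bff = +\infty\} = \bfA \setminus \bigcup_n \{\bff < n\}$ are then measurable as well, $\bff$ is $\mu$-measurable.

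The two points that need care are the representative-independence of the identity $\{\bfx \in \bfA : \lim_{i\to\omega} f_i(x_i) < q\} = \lim_{i\to\omega} H_i^q$ — which holds because two representatives of $\bfx$ agree on a set in $\omega$ and membership in an internal set depends only on the $\omega$-class of $\{i : x_i \in H_i^q\}$ — and the step that replaces the strict threshold $\st(y) < t$ by a separating rational $q$ with $y < q < t$. The latter is the one genuine subtlety: it is exactly what forces a countable union (rather than a single internal set) into the statement, since the standard-part map collapses the infinitesimal scale and so the set $\{\bff < t\}$ is in general not itself internal. Neither point is a real obstacle, and the measure-theoretic input about the Loeb construction may simply be quoted.
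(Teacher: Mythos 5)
The paper does not prove this lemma; it is quoted verbatim from Warner's lecture notes~\cite{Warner:2012tc}, and the construction of the Loeb measure itself is also deferred there. So there is no in-paper proof to compare against. That said, your argument is correct and is the standard one: you express each sublevel set $\{\bff < t\}$ as a countable union, over rationals $q<t$, of the internal sets $\lim_{i\to\omega}\{x\in A_i: f_i(x)<q\}$, using the fact that internal sets with $\mu_i$-measurable components lie in the Loeb $\sigma$-algebra. The two delicate points — well-definedness on $\omega$-equivalence classes, and the need for a rational buffer $q$ strictly between $\st(y)$ and $t$ to cross the infinitesimal gap — are exactly the ones that matter, and you handle both; you also correctly note that the buffer is what makes the sublevel set a genuine countable union rather than a single internal set. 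You even cover the $\overline{\bbR}$-valued case by treating the fibres over $\pm\infty$ separately. This matches how the result is proved in the nonstandard-analysis literature, so it is an accurate reconstruction of the cited proof rather than a genuinely different route.
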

A partial converse holds:
\begin{lemma}[Proposition~3.8 of~\cite{Warner:2012tc}]\label{lem:measurable-ultra-function->sequence-of-measurable-function}
  For every $\mu$-measurable function $\bfg: \bfA \to \overline{\bbR}$, there exists a sequence of $\mu_i$-measurable functions $(f_i: A_i \to \bbR)$ such that, for $\bff = \st(\lim\limits_{i\to \omega}f_i)$, we have $\bff = \bfg$ almost everywhere with respect to $\mu$.
  Furthermore, if $\bfg$ is bounded, then the $f_i$ can be chosen so as to be uniformly bounded (above or below) with the same bound.
\end{lemma}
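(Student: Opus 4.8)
The plan is to establish the standard \emph{lifting} property for the Loeb measure: every $\mu$-measurable $\bfg$ admits an internal function whose standard part equals $\bfg$ almost everywhere, where being internal means being an ultralimit $\lim_{i\to\omega} f_i$ of a sequence $(f_i : A_i \to \bbR)$. I would argue in three stages — indicator functions, simple functions, then general $\overline{\bbR}$-valued functions — at each stage collapsing a countable family of internal approximations into a single one by diagonalizing over the ultrafilter $\omega$.

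\emph{Indicator functions.} For $\bfg = \mathbf{1}_{\bfB}$ with $\bfB \subseteq \bfA$ Loeb-measurable, recall that by the construction of the Loeb measure (as the \Caratheodory extension of the internal premeasure) there are internal sets $\bfC_n$ with $\mu(\bfB \triangle \bfC_n) < 2^{-n}$, and that the Loeb measure of an internal set is the standard part of its internal measure. Writing $\bfC_n = \lim_{i \to \omega} C_{n,i}$, the sets $U_n := \{\, i \in \bbN : \mu_i(C_{n,i} \triangle C_{n+1,i}) < 2^{-n+1}\,\}$ lie in $\omega$, and after replacing $U_n$ by $U_1 \cap \cdots \cap U_n$ we may assume $U_1 \supseteq U_2 \supseteq \cdots$. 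Letting $n(i)$ be the largest $n \le i$ with $i \in U_n$, the map $i \mapsto n(i)$ tends to infinity along $\omega$ (since $\{i : n(i) \ge N\} \supseteq U_N \cap \{i \ge N\} \in \omega$), and the internal set $\bfC := \lim_{i\to\omega} C_{n(i),i}$ satisfies $\mu(\bfB \triangle \bfC) = 0$ by telescoping the symmetric-difference bounds along the chain $C_{N,i}, C_{N+1,i}, \ldots, C_{n(i),i}$ for each $N$. Then $f_i := \mathbf{1}_{C_{n(i),i}}$ is the required sequence.

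\emph{Simple and general functions.} A simple function $\bfg = \sum_{j=1}^k a_j \mathbf{1}_{\bfB_j}$ with the $\bfB_j$ disjoint is lifted by applying the previous step to each $\bfB_j$, disjointifying the representing sequences in $i$, and summing, using that $\st$ commutes with finite sums; and if $|\bfg| \le M$ one may truncate each $f_i$ at $M$ without changing the (essentially bounded) function limit, since truncation by a constant is continuous and hence commutes with $\st \circ \lim_{i\to\omega}$ on bounded ultralimits. For general $\mu$-measurable $\bfg$ one takes the usual dyadic approximations $\bfg^{(n)} \to \bfg$ (first truncating $\bfg$ at levels $M_k$ with $\mu(|\bfg| > M_k) < 2^{-k}$, so as to control the pointwise corrections), lifts each $\bfg^{(n)}$ by the previous case to an internal $\mathbf{G}^{(n)}$, and runs a further diagonalization of exactly the same shape: the internal sets $\{\, \bfx : |\mathbf{G}^{(n)}(\bfx) - \mathbf{G}^{(n+1)}(\bfx)| \text{ exceeds a suitable } 2^{-n}\text{-scale}\,\}$ have Loeb measure $0$ (because the a.e.\ equal standard functions $\bfg^{(n)}, \bfg^{(n+1)}$ are uniformly close), hence small internal measure, which feeds the same ``largest $n \le i$'' construction; an application of the Borel--Cantelli lemma then gives $\st \circ \mathbf{G} = \bfg$ almost everywhere for the resulting internal function $\mathbf{G} = \lim_{i\to\omega} f_i$, and the one-sided or two-sided bound is again preserved by truncation.

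The hard part is precisely this diagonalization over $\omega$, carried out once per stage: passing from countably many internal approximations to a single internal object works because a non-principal ultrafilter on $\bbN$ — equivalently, the $\aleph_1$-saturation of the ultraproduct — allows the map $i \mapsto n(i)$ extracted from a nested sequence of $\omega$-large sets to tend to infinity along $\omega$, and the accompanying bookkeeping (internal measures versus their standard parts, and telescoping over a chain whose length $n(i)$ depends on $i$) must be arranged so that the accumulated errors are summable and vanish in the limit. Everything else is routine measure theory.
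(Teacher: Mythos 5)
The paper does not prove this lemma; it is imported verbatim as Proposition~3.8 from the cited reference [Warner:2012tc], so there is no in-paper proof to compare against. Your argument is the standard proof of the Loeb lifting theorem, and the core mechanism — approximate a Loeb-measurable set by internal sets, then collapse a countable family of internal approximants into a single internal object via a diagonalization that exploits the non-principality of $\omega$ (equivalently, countable saturation) — is exactly right, including the bookkeeping that makes $\mu(\bfB \triangle \bfC) = 0$ rather than merely small.

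One genuine gap: the statement allows $\bfg: \bfA \to \overline{\bbR}$, but your reduction to the bounded case implicitly assumes $\mu(|\bfg| > M_k) \to 0$ as $M_k \to \infty$, i.e.\ that $\bfg$ is a.e.\ finite. If $\{\bfg = +\infty\}$ or $\{\bfg = -\infty\}$ has positive Loeb measure, no sequence of truncation levels $M_k$ works and you must treat these sets separately (e.g.\ lift their indicators as in your first stage and force $f_i$ to equal a quantity like $i$ there, or transport the whole argument through a fixed homeomorphism $\overline{\bbR} \to [-1,1]$). This is a fixable technicality, but as written your general case does not cover the full $\overline{\bbR}$-valued statement being claimed.
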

Given a $\mu$-measurable $\bfg: \bfA \to \bbR$, we will call the sequence $(f_i:A_i \to \bbR)$ given by Lemma~\ref{lem:measurable-ultra-function->sequence-of-measurable-function} a \emph{lifting} of $\bfg$.
A lifting will be highly non-unique in general.
However, the following two relations hold between $\bfg$ and $\bff = \st(\lim\limits_{i \to \omega}f_i)$.
\begin{lemma}[Proposition~3.9 of~\cite{Warner:2012tc}]\label{lem:exchange-limit-and-int}
  Let $(f_i:A_i \to \bbR)$ be a sequence of uniformly bounded $\mu_i$-measurable functions and $\bff = \st(\lim\limits_{i\to \omega}f_i)$.
  Then,
  \[
    \int_\bfA \bff(x) dx = \st \Bigl( \lim_{i\to \omega}\int_{A_i}f_i(x) d x \Bigr).
  \]
\end{lemma}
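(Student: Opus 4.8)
The plan is to exploit the way the Loeb measure $\mu$ is constructed, namely as the \Caratheodory extension of the finitely additive premeasure $\bfH \mapsto \st\bigl(\lim_{i\to\omega}\mu_i(H_i)\bigr)$ on the algebra of internal sets. The first step is therefore to record the identity for internal indicator functions: if $\bfH = \lim_{i\to\omega}H_i$ is internal, then the function limit of $(\mathbf{1}_{H_i})$ is $\mathbf{1}_\bfH$, and
\[
  \int_\bfA \mathbf{1}_\bfH\,d\mu = \mu(\bfH) = \st\Bigl(\lim_{i\to\omega}\mu_i(H_i)\Bigr) = \st\Bigl(\lim_{i\to\omega}\int_{A_i}\mathbf{1}_{H_i}\,d\mu_i\Bigr),
\]
which is exactly the assertion for this particular sequence. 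Since finite real-linear combinations of bounded hyperreals commute with $\lim_{i\to\omega}$ and with $\st(\cdot)$, I would then extend this to any ``internal simple function'' $s_i = \sum_k c_k\,\mathbf{1}_{H_{i,k}}$ with a fixed finite family of coefficients $c_k\in\bbR$ and with the $H_{i,k}$ partitioning $A_i$ (so that the sets $\bfH_k = \lim_{i\to\omega}H_{i,k}$ partition $\bfA$): its function limit is $\mathbf{s} = \sum_k c_k\,\mathbf{1}_{\bfH_k}$, and both sides of the claimed equation equal $\sum_k c_k\,\mu(\bfH_k)$.

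The second step handles a general uniformly bounded sequence by a sup-norm approximation. Say $|f_i| \le M$ for all $i$, and fix $\epsilon > 0$. Partition $[-M,M]$ into finitely many half-open intervals $J_k$ of length $\epsilon$, set $H_{i,k} := f_i^{-1}(J_k)$ (which is $\mu_i$-measurable because $f_i$ is), and let $s_i := \sum_k k\epsilon\cdot\mathbf{1}_{H_{i,k}}$, an internal simple function with $\|f_i - s_i\|_\infty \le \epsilon$ for every $i$. Taking ultralimits and then standard parts preserves this bound, so the function limit $\mathbf{s} = \sum_k k\epsilon\,\mathbf{1}_{\bfH_k}$ satisfies $|\bff - \mathbf{s}| \le \epsilon$ pointwise, and likewise $\bigl|\int_{A_i}f_i\,d\mu_i - \int_{A_i}s_i\,d\mu_i\bigr| \le \epsilon$ for each $i$. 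Since $\mu$ is normalized, the triangle inequality
\[
  \Bigl|\int_\bfA \bff\,d\mu - \st\bigl(\lim_{i\to\omega}\int_{A_i}f_i\,d\mu_i\bigr)\Bigr| \le \Bigl|\int_\bfA(\bff-\mathbf{s})\,d\mu\Bigr| + \Bigl|\int_\bfA \mathbf{s}\,d\mu - \st\bigl(\lim_{i\to\omega}\int_{A_i}s_i\,d\mu_i\bigr)\Bigr| + \st\bigl(\lim_{i\to\omega}\bigl|\int_{A_i}(s_i-f_i)\,d\mu_i\bigr|\bigr)
\]
bounds the left-hand side by $\epsilon + 0 + \epsilon$, the middle term vanishing by the first step. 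Letting $\epsilon \to 0$ yields the equality. I would also note at the outset that $\bff$ is bounded and $\mu$-measurable, hence integrable over the probability space $\bfA$, and that $\lim_{i\to\omega}\int_{A_i}f_i\,d\mu_i$ is bounded, so both sides are well defined and $\st(\cdot)$ applies.

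I expect the only step carrying real content to be the identification of the Loeb measure of an internal set with the standard part of the internal measures; but this is immediate from the construction of $\mu$ — it is literally the premeasure being extended — so in the write-up it is a citation rather than an argument. Everything after that is a routine step-function approximation, with the uniform boundedness hypothesis doing precisely the job of keeping all the hyperreals in sight bounded, so that $\st(\cdot)$ is defined and behaves additively over the finite sums involved.
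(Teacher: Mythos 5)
The paper does not prove this lemma; it cites it verbatim as Proposition~3.9 of~\cite{Warner:2012tc}, so there is no proof to compare your attempt against. That said, your proof is correct, and it is the canonical route: reduce to internal indicators using the fact that the Loeb measure of an internal set $\bfH=\lim_{i\to\omega}H_i$ is by construction $\st(\lim_{i\to\omega}\mu_i(H_i))$, extend by finite linearity of $\lim_{i\to\omega}$ and $\st(\cdot)$ to internal simple functions with a fixed finite real coefficient set, and then approximate a uniformly bounded sequence in $\|\cdot\|_\infty$ by such simple functions via a level-set partition of $[-M,M]$. The estimates are all in order: $|\bff-\mathbf{s}|\le\epsilon$ pointwise and $\bigl|\int_{A_i}f_i\,d\mu_i-\int_{A_i}s_i\,d\mu_i\bigr|\le\epsilon$ follow from the sup-norm bound and the normalization of $\mu_i$ and $\mu$, and you correctly observe that uniform boundedness is exactly what keeps $\lim_{i\to\omega}\int_{A_i}f_i\,d\mu_i$ bounded so that $\st(\cdot)$ is defined and additive. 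One cosmetic remark: you should verify, or at least state, that the ultralimit assigns each point $\lim_i x_i\in\bfA$ to exactly one $\bfH_k$ (which holds because $\omega$ is an ultrafilter and the $H_{i,k}$ partition $A_i$ for each $i$); otherwise the identity $\st(\lim_i s_i)=\sum_k c_k\mathbf{1}_{\bfH_k}$ is being used silently.
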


\begin{lemma}[Proposition~3.10 of~\cite{Warner:2012tc}]\label{lem:exchange-limit-and-int-sub}
  Suppose $\bff:\bfA \to \overline{\bbR}$ be $\mu$-measurable and bounded, and let $(f_i:A_i \to \bbR)$ be any bounded lifting of $\bff$.
  Then,
  \[
    \int_\bfA \bff(x) dx = \st\Bigl(\lim_{i\to \omega}\int_{A_i}f_i(x) dx \Bigr).
  \]
\end{lemma}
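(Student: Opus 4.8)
The plan is to reduce the statement to Lemma~\ref{lem:exchange-limit-and-int} (Proposition~3.9 of~\cite{Warner:2012tc}) by unwinding the definition of a lifting. The only difference between the two statements is that in Lemma~\ref{lem:exchange-limit-and-int} the function $\bff$ is \emph{defined} to be $\st(\lim_{i\to\omega} f_i)$, whereas here $\bff$ is given in advance and $(f_i)$ is merely required to be \emph{some} bounded lifting of it; the work is to see that this extra generality costs nothing.

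First I would set $\bff^\ast := \st(\lim_{i\to\omega} f_i) : \bfA \to \overline{\bbR}$, the function limit of the given lifting sequence. Since $(f_i)$ is a \emph{bounded} lifting, the $f_i$ are uniformly bounded, so $\bff^\ast$ is bounded, and by the measurability statement quoted above (Lemma~3.6 of~\cite{Warner:2012tc}) it is $\mu$-measurable. Hence Lemma~\ref{lem:exchange-limit-and-int} applies verbatim to the sequence $(f_i)$ and gives
\[
  \int_\bfA \bff^\ast(x)\,dx = \st\Bigl(\lim_{i\to\omega}\int_{A_i} f_i(x)\,dx\Bigr).
\]
Next I would invoke the defining property of a lifting: by Lemma~\ref{lem:measurable-ultra-function->sequence-of-measurable-function}, any lifting $(f_i)$ of $\bff$ satisfies $\bff^\ast = \bff$ almost everywhere with respect to the Loeb measure $\mu$. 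Two bounded $\mu$-measurable functions that agree $\mu$-a.e. have equal integrals, so $\int_\bfA \bff(x)\,dx = \int_\bfA \bff^\ast(x)\,dx$. Chaining the two displayed equalities yields the claim. Since the argument used nothing about $(f_i)$ beyond its being a bounded lifting, the right-hand side is the same for every choice of lifting, which is exactly the content of the word ``any'' in the statement.

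The only point that warrants a word of care is the passage from a.e.\ equality to equality of integrals: one must check that both $\bff$ and $\bff^\ast$ are genuinely $\mu$-measurable (supplied by the hypothesis and by Lemma~3.6 respectively) and that the Loeb integral is insensitive to null sets, which holds because the Loeb measure is complete. I do not expect a real obstacle here; the lemma is essentially a repackaging of Proposition~3.9 that removes the requirement of presenting $\bff$ in the specific form $\st(\lim_{i\to\omega} f_i)$.
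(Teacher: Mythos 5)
Your argument is correct. Note that the paper does not give a proof of this lemma at all---it is quoted verbatim as Proposition~3.10 of~\cite{Warner:2012tc}---so there is no in-paper proof to compare against. Your reduction is exactly the natural one: set $\bff^{\ast} = \st(\lim_{i\to\omega}f_i)$, apply Lemma~\ref{lem:exchange-limit-and-int} to the uniformly bounded sequence $(f_i)$ to evaluate $\int_\bfA \bff^{\ast}$, and then use the defining property of a lifting from Lemma~\ref{lem:measurable-ultra-function->sequence-of-measurable-function} (namely $\bff^{\ast} = \bff$ $\mu$-a.e.) together with completeness of the Loeb measure to conclude $\int_\bfA \bff = \int_\bfA \bff^{\ast}$. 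Both hypotheses feeding into Lemma~\ref{lem:exchange-limit-and-int}---uniform boundedness and the identity $\bff^{\ast} = \st(\lim_{i\to\omega}f_i)$---are verified, and nothing about the particular lifting is used beyond boundedness, which correctly accounts for the word ``any.'' No gaps.
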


In this paper, we only consider the case that $A_i = \Fp^i$ for each $i \in \bbN$.
So we define $\bfF = \lim\limits_{i \to \omega}\Fp^i$.
Let $\mu_i$ be the normalized counting measure on $\Fp^i$ for each $i \in \bbN$, and let $\mu$ be the corresponding Loeb measure.
Let $\caF$ be the set of uniformly bounded $\mu$-measurable functions of the form $\bff: \bfF \to \bbR$.
Let $\caF_{\bit}$ and $\caF_{[0,1]}$ be the sets of $\mu$-measurable functions of the form $\bff: \bfF \to \bit$ and $\bff:\bfF \to [0,1]$, respectively.

\section{Generalization of $t_\caL$}\label{sec:general-tl}

Let $\caL$ be a system of linear forms in $k$ variables.
Although the notion of $t_\caL$ was originally defined over functions, we can generalize it to function limits using the Loeb measure $\mu$.
That is, for $\bff:\bfF \to \bbR$, we define
\[
  t_\caL(\bff) := \int_\bfF \cdots \int_{\bfF} \prod_{L \in \caL}\bff(L(\bfx_1,\ldots,\bfx_k))  d \bfx_1 \cdots d \bfx_k.
\]
Since a Fubini-type theorem, called Keisler’s Fubini theorem, holds for the measure $\mu$, the value $t_\caL(\bff)$ is uniquely determined regardless of the order of taking integrations.

We define the \emph{$d$-th Gowers norm} of $\bff$ as follows:
\[
  \|\bff\|_{U^d} := |t_{\caL_{d}}(\bff)|^{1/2^d} = \Bigl|\int_\bfF \cdots \int_{\bfF} \prod_{I \subseteq [d]}\bff(\bfx + \sum_{i \in I}\bfy_i)  d\bfy d \bfy_1 \cdots d \bfy_d\Bigr|^{1/2^d}.
\]
Again $\|\cdot\|_{U^1}$ is only a semi-norm, but $\|\cdot\|_{U^d}$ for $d \geq 2$ is indeed a norm.

The following lemma states that we can exchange $\st(\lim(\cdot))$ inside and outside of $t_\caL(\cdot)$.
\begin{lemma}\label{lem:tL-exchange}
  Let $\bff \in \caF$, $f_i$ be its lifting, and $\caL$ be a system of linear forms.
  Then,
  \[
    t_\caL(\bff) = \st(\lim_{i \to \omega} t_\caL(f_i)).
  \]
\end{lemma}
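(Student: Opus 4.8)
The plan is to unfold both sides of the claimed identity and reduce everything to Lemma~\ref{lem:exchange-limit-and-int} (or rather its product form). Recall that $t_\caL(\bff)$ is by definition an iterated integral over $\bfF$ of the function $\bfx_1,\ldots,\bfx_k \mapsto \prod_{L \in \caL}\bff(L(\bfx_1,\ldots,\bfx_k))$, which lives on the ultraproduct $\bfF^k = \lim_{i\to\omega}(\Fp^i)^k$. So the first step is to recognize that this integrand is itself a function limit on $\bfF^k$: since $\bff = \st(\lim_{i\to\omega}f_i)$, the product $\prod_{L}\bff(L(\cdot))$ equals $\st(\lim_{i\to\omega} \prod_{L} f_i(L(\cdot)))$ almost everywhere, because the standard-part map is a ring homomorphism on bounded hyperreals and each linear form $L$ acts coordinatewise on the ultraproduct (so $\bff(L(\bfx_1,\ldots,\bfx_k)) = \st(\lim_{i\to\omega} f_i(L(x_{1,i},\ldots,x_{k,i})))$). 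Here I use that the $f_i$ are uniformly bounded, so all the hyperreals in sight are bounded and $\st$ distributes over the finite product $\prod_{L\in\caL}$.

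Second, apply the exchange lemma. The sequence $\bigl(g_i : (\Fp^i)^k \to \bbR\bigr)$ defined by $g_i(x_1,\ldots,x_k) = \prod_{L\in\caL} f_i(L(x_1,\ldots,x_k))$ is a uniformly bounded sequence of measurable functions on the groups $(\Fp^i)^k$ with their normalized counting measures, and by the previous paragraph $\st(\lim_{i\to\omega} g_i)$ agrees $\mu$-a.e.\ with the integrand defining $t_\caL(\bff)$. Lemma~\ref{lem:exchange-limit-and-int}, applied on the ultraproduct $\bfF^k$ (whose Loeb measure is the one induced from the product measures, and this is where Keisler's Fubini theorem is invoked to identify the iterated integral over $\bfF$ with a single integral over $\bfF^k$), then gives
\[
  \int_{\bfF^k} \st\Bigl(\lim_{i\to\omega} g_i\Bigr)\, d\mu = \st\Bigl(\lim_{i\to\omega} \int_{(\Fp^i)^k} g_i \,d\mu_i^{k}\Bigr).
\]
The left-hand side is $t_\caL(\bff)$ by definition (up to the a.e.\ identification, which does not affect the integral), and the inner integral on the right is exactly $t_\caL(f_i)$ by the definition of $t_\caL$ over finite domains. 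Hence $t_\caL(\bff) = \st(\lim_{i\to\omega} t_\caL(f_i))$, as desired.

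The main obstacle is the bookkeeping of measures on the ultraproduct of product spaces: one must check that the Loeb measure on $\bfF^k = \lim_{i\to\omega}(\Fp^i)^k$ coincides with the appropriate product of the Loeb measures, so that the iterated integral $\int_\bfF\cdots\int_\bfF$ in the definition of $t_\caL(\bff)$ really equals $\int_{\bfF^k}$, and so that Lemma~\ref{lem:exchange-limit-and-int} is legitimately applicable to the sequence $(g_i)$ on $(\Fp^i)^k$. This is precisely the role of Keisler's Fubini theorem mentioned just above the lemma statement, so I would cite it for this point. The remaining ingredients --- that $\st$ is a homomorphism on bounded hyperreals, that linear forms act coordinatewise under ultralimits, and that the $f_i$ being uniformly bounded forces $(g_i)$ to be uniformly bounded --- are all routine, so once the measure identification is in place the proof is essentially immediate.
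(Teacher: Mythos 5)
Your proposal is correct and follows essentially the same route as the paper: substitute $\bff = \st(\lim_{i\to\omega}f_i)$ into the integrand, pull $\st$ out of the finite product using that the standard-part map is a ring homomorphism on bounded hyperreals, and then exchange $\st(\lim)$ with the integral via Lemma~\ref{lem:exchange-limit-and-int}. You are a bit more explicit than the paper about the Keisler--Fubini step needed to view the iterated integral as a single Loeb integral over $\bfF^k$ before invoking that lemma, but this is the same argument, not a different one.
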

\begin{proof}
  It holds that
  \begin{align*}
    t_\caL(\bff)
    & = \int_\bfF \cdots \int_{\bfF} \prod_{L \in \caL}\bff (L(\bfx^1,\ldots,\bfx^k)) d \bfx^1\cdots d \bfx^k \\
    & = \int_\bfF \cdots\int_{\bfF} \prod_{L \in \caL} \st(\lim_{i\to \omega}f_i (L(x^1_i,\ldots,x^k_i))) d\bfx^1\cdots d\bfx^k \tag{$\bfx^j =: \lim\limits_{i \to \omega}x^j_i$} \\
    & = \int_\bfF \cdots\int_{\bfF} \st(\lim_{i\to \omega}\prod_{L \in \caL} f_i (L(x^1_i,\ldots,x^k_i))) d\bfx^1\cdots d\bfx^k \\
    & = \st\Bigl(\lim_{i\to \omega}\int_\bfF \cdots \int_{\bfF} \prod_{L \in \caL} f_i (L(x^1_i,\ldots,x^k_i)) d\bfx^1\cdots d\bfx^k\Bigr) \tag{by Lemma~\ref{lem:exchange-limit-and-int}}\\
    & = \st (\lim_{i \to \omega} t_L(f_i))
    \qedhere
  \end{align*}
\end{proof}

Let $\bfA = \lim\limits_{i \to \omega}A_i$, where $A_i:\Fp^i \to \Fp^i$ is an affine transformation for each $i \in \bbN$.
For $\bfx = \lim\limits_{i \to \omega}x_i \in \bfF$, we define $\bfA \bfx = \lim\limits_{i \to \omega}A_ix_i$.
Hence, $\bfA$ can be seen as a map from $\bfF$ to itself, and we call $\bfA$ a \emph{non-standard affine transformation}.
If every $A_i$ is an affine bijection, then we call $\bfA$ a \emph{non-standard affine bijection}.
Let $\Aff(\bfF)$ denote the set of all non-standard affine bijections.

Let $\bff = \lim\limits_{i\to \omega}f_i$ be a function limit and $\bfA = \lim\limits_{i \to \omega}A_i$ be a non-standard affine transformation.
Then, for any $\bfx = \lim\limits_{i \to \omega}x_i \in \bfF$,
we have $(\bff \circ \bfA)(\bfx) = \bff(\lim\limits_{i \to \omega}A_i x_i) = \st(\lim\limits_{i \to \omega}f_i(A_ix_i))$.
Hence, $\bff \circ \bfA = \st(\lim\limits_{i \to \omega}(f_i \circ A_i))$ holds.

\begin{lemma}\label{lem:stable-under-non-standard-affine-bijection}
  For any function $\bff \in \caF$, a system of linear forms $\caL$, and $\bfA \in \Aff(\bfF)$, we have
  \[
    t_\caL(\bff) = t_\caL(\bff\circ \bfA).
  \]
\end{lemma}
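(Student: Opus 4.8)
The plan is to reduce the statement about function limits to the corresponding statement about finite functions, which was already observed in the excerpt: for any $f:\Fp^n\to\bbR$, affine bijection $A$, and affine system of linear forms $\caL$, we have $t_\caL(f)=t_\caL(f\circ A)$. So first I would fix a lifting $(f_i:\Fp^i\to\bbR)$ of $\bff$ and a representing sequence $(A_i)$ of affine bijections for $\bfA$, so that by Lemma~\ref{lem:tL-exchange} we have $t_\caL(\bff)=\st(\lim_{i\to\omega}t_\caL(f_i))$. The observation just before Lemma~\ref{lem:tf-exchange}---wait, the one right after Definition~\ref{def:true-complexity} of $t_\caL$ in Section~\ref{sec:pre}, namely that $t_\caL(f_i)=t_\caL(f_i\circ A_i)$ for each $i$ (using that each $A_i$ is an affine bijection and $\caL$ is affine)---then gives $\lim_{i\to\omega}t_\caL(f_i)=\lim_{i\to\omega}t_\caL(f_i\circ A_i)$, since the two sequences agree termwise on all of $\bbN$, hence certainly on a set in $\omega$. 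Taking standard parts, $t_\caL(\bff)=\st(\lim_{i\to\omega}t_\caL(f_i\circ A_i))$.

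Next I would identify the right-hand side with $t_\caL(\bff\circ\bfA)$. The excerpt establishes that $\bff\circ\bfA=\st(\lim_{i\to\omega}(f_i\circ A_i))$, so $(f_i\circ A_i)$ is a lifting of $\bff\circ\bfA$ (it is uniformly bounded because $(f_i)$ is and composing with $A_i$ does not change the range). Applying Lemma~\ref{lem:tL-exchange} once more, this time to the function limit $\bff\circ\bfA$ and its lifting $(f_i\circ A_i)$, yields $t_\caL(\bff\circ\bfA)=\st(\lim_{i\to\omega}t_\caL(f_i\circ A_i))$. Combining the two displayed equalities gives $t_\caL(\bff)=t_\caL(\bff\circ\bfA)$, which is exactly the claim.

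The only subtlety I anticipate is a bookkeeping one: Lemma~\ref{lem:tL-exchange} as stated is about $\bff\in\caF$ and ``its lifting'', so I should make sure $\bff\circ\bfA$ indeed lies in $\caF$ (i.e.\ is $\mu$-measurable and uniformly bounded) and that $(f_i\circ A_i)$ qualifies as a lifting in the sense of Lemma~\ref{lem:measurable-ultra-function->sequence-of-measurable-function}. Measurability of $\bff\circ\bfA$ follows since $\bfA$ is a measure-preserving bijection of $\bfF$ (each $A_i$ is a bijection of $\Fp^i$ preserving the normalized counting measure, so $\bfA$ preserves the Loeb measure), and $(f_i\circ A_i)$ is a bounded lifting by the displayed identity $\bff\circ\bfA=\st(\lim_{i\to\omega}(f_i\circ A_i))$. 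There is no real obstacle here---the statement is essentially the function-limit shadow of the elementary finite identity, transported through the \Los{}-type bookkeeping of Lemma~\ref{lem:tL-exchange}---so the proof is short.
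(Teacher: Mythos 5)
Your proof is correct and follows essentially the same route as the paper: apply Lemma~\ref{lem:tL-exchange} to $\bff$ and its lifting, use the termwise identity $t_\caL(f_i)=t_\caL(f_i\circ A_i)$, and apply Lemma~\ref{lem:tL-exchange} again to $\bff\circ\bfA$ with lifting $(f_i\circ A_i)$. The extra care you take to note that $(f_i\circ A_i)$ is a valid bounded lifting of $\bff\circ\bfA$ and that the intermediate step uses the affine structure of $\caL$ is not in the paper's two-line proof but is the right thing to check.
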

\begin{proof}
  Let $\bfA = \lim\limits_{i \to \omega}A_i$, where $A_i$ is an affine bijection for each $i \in \bbN$.
  Using Lemma~\ref{lem:tL-exchange} twice, we have
  \[
    t_\caL(\bff)
    = \st(\lim_{i \to \omega}t_\caL(f_i))
    = \st(\lim_{i \to \omega}t_\caL(f_i \circ A_i))
    = t_\caL(\st(\lim_{i \to \omega}(f_i \circ A_i)))
    = t_\caL(\bff \circ \bfA).
    \qedhere
  \]
\end{proof}

To identify a function $f:\bbF^n \to \bbR$ with a function limit,
we first construct a function sequence as follows:
for each $i \in \bbN$, we take an arbitrary affine transformation $A_i:\Fp^i \to \Fp^n$ with $\rank(A_i) = \min(n, i)$, and define $f_i:\bbF^i \to \bbR$ as $f_i = f \circ A_i$.
Then, we identify $f$ with $\nst{f} = \st(\lim\limits_{i \to \omega}f_i)$.
Though the choice of $\nst{f}$ is not unique,
the value $t_\caL(\nst{f})$ is uniquely determined as shown in the following lemma.
\begin{lemma}\label{lem:t-cal-equivalence}
  Let $f:\bbF^n \to \bbR$ be a bounded function and $\caL$ be a system of linear forms.
  Then, $t_\caL(\nst{f})$ is well defined and
  \[
    t_\caL(f) = t_\caL(\nst{f}).
  \]
\end{lemma}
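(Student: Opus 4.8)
The plan is to reduce everything to Lemma~\ref{lem:tL-exchange} and the affine-invariance of $t_\caL$. First I would unpack the construction of $\nst{f}$: we pick, for each $i \in \bbN$, an affine transformation $A_i : \Fp^i \to \Fp^n$ with $\rank(A_i) = \min(n,i)$, set $f_i = f \circ A_i$, and declare $\nst{f} = \st(\lim_{i \to \omega} f_i)$. Since $f$ is bounded, the sequence $(f_i)$ is uniformly bounded, so $\nst{f} \in \caF$ and is a lifting of itself; thus $t_\caL(\nst{f})$ is defined, and by Lemma~\ref{lem:tL-exchange} it equals $\st(\lim_{i \to \omega} t_\caL(f_i)) = \st(\lim_{i \to \omega} t_\caL(f \circ A_i))$. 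So the whole lemma comes down to showing that $t_\caL(f \circ A_i) = t_\caL(f)$ for all $i$ in some set belonging to $\omega$ (it suffices that this holds for all $i \geq n$, which is in $\omega$ since $\omega$ is non-principal), and, separately, that the value $\st(\lim_{i \to \omega} t_\caL(f_i))$ does not depend on the particular choice of the $A_i$.

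For the core identity, fix $i \geq n$ so that $\rank(A_i) = n$, and write $A_i = L_i + c_i$ with $L_i : \Fp^i \to \Fp^n$ surjective. Writing $\caL = \{L_1,\dots,L_m\} \subseteq \Fp^k$, by definition
\[
  t_\caL(f \circ A_i) = \E_{x_1,\dots,x_k \in \Fp^i} \prod_{j=1}^m f\bigl(A_i L_j(x_1,\dots,x_k)\bigr).
\]
The key observation is that since $L_i$ is surjective, the map $\Phi : (\Fp^i)^k \to (\Fp^n)^k$ sending $(x_1,\dots,x_k) \mapsto (A_i x_1,\dots,A_i x_k)$ — more precisely the induced map on the tuple of linear-form evaluations — pushes the uniform distribution on $(\Fp^i)^k$ forward to the uniform distribution on the affine image. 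Concretely, because each $L_j$ is a \emph{linear} form and $A_i$ is affine, $A_i L_j(x_1,\dots,x_k) = L_i L_j(x_1,\dots,x_k) + (\sum_t \lambda_{j,t}) c_i$ where $L_j = (\lambda_{j,1},\dots,\lambda_{j,k})$; and as $(x_1,\dots,x_k)$ ranges uniformly over $(\Fp^i)^k$, the tuple $\bigl(L_i L_j(x_1,\dots,x_k)\bigr)_{j=1}^m$ has exactly the same distribution over $(\Fp^n)^m$ as $\bigl(L_j(y_1,\dots,y_k)\bigr)_{j=1}^m$ for $(y_1,\dots,y_k)$ uniform over $(\Fp^n)^k$ — this is the standard fact that composing a system of linear forms with a surjective linear map does not change the joint distribution of the evaluations. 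Hence $t_\caL(f \circ A_i)$ equals $t_\caL(f \circ (I + c_i'))$ for the appropriate translation, and translation-invariance of $t_\caL$ (which, as noted in the excerpt just before Definition~\ref{def:true-complexity}, holds for affine systems; for a general linear system one absorbs the shift into the $x_0$-variable or argues directly that the evaluations' distribution is translation-invariant) gives $t_\caL(f \circ A_i) = t_\caL(f)$.

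Given that, $\lim_{i \to \omega} t_\caL(f_i)$ is the ultralimit of a sequence that is eventually constantly equal to $t_\caL(f)$, so its standard part is $t_\caL(f)$; this simultaneously shows the limit is well defined and independent of the choices of $A_i$ (any admissible choice yields the same eventual constant), establishing that $t_\caL(\nst{f})$ is well defined and equals $t_\caL(f)$. The one step that needs genuine care — the main obstacle — is the distributional claim that a surjective linear substitution leaves the joint law of a system of linear forms' evaluations unchanged: one must check that surjectivity of $L_i$ really does kill the excess variables uniformly across \emph{all} $m$ forms simultaneously, which is cleanest to see by noting that the map $(x_1,\dots,x_k) \mapsto (L_i x_1,\dots,L_i x_k)$ is itself a surjective group homomorphism $(\Fp^i)^k \to (\Fp^n)^k$ and pushes Haar to Haar, after which the system $\caL$ is applied on the target side exactly as in the original $t_\caL(f)$. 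Everything else is bookkeeping with Lemma~\ref{lem:tL-exchange} and the non-principality of $\omega$.
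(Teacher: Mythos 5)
Your proof follows the same route as the paper's: apply Lemma~\ref{lem:tL-exchange} to move the ultralimit outside $t_\caL$, then observe that $t_\caL(f \circ A_i) = t_\caL(f)$ for all $i \geq n$ (because $\rank(A_i) = n$ makes $(x_1,\ldots,x_k)\mapsto(L_ix_1,\ldots,L_ix_k)$ push Haar measure forward to Haar measure), so the resulting ultralimit is of an eventually constant sequence and non-principality of $\omega$ finishes. One small bookkeeping slip: since $A_i = L_i + c_i$ is applied once to the single vector $L_j(x_1,\ldots,x_k)$, the additive term is $c_i$, not $(\sum_t\lambda_{j,t})c_i$; this does not affect the argument, and both you and the paper implicitly rely on $\caL$ being affine (or on choosing $A_i$ linear) to absorb this shift.
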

\begin{proof}
  Let $\nst{f} = \st(\lim\limits_{i \to \omega}f \circ A_i)$ for $A_i:\Fp^i \to \Fp^n$.
  We have
  \begin{align*}
    t_\caL(\nst{f})
    = \st(\lim_{i \to \omega}t_\caL(f \circ A_i))
    = \st(\lim_{i \to \omega}t_\caL(f))
    = t_\caL(f).
  \end{align*}
  The second equality holds since $t_\caL(f \circ A_i) = t_\caL(f)$ for all $i \geq n$, and the non-principal filter $\omega$ does not contain any finite set.
\end{proof}

\section{Metric over function limits}\label{sec:metric}


Now we introduce the central notion of this paper.
The \emph{$\upsilon^d$-distance} between two function limits $\bff,\bfg \in \caF$ is defined as follows:
\[
  \upsilon^d(\bff,\bfg) = \inf_{\bfA \in \Aff(\bfF)} \|\bff - \bfg \circ \bfA\|_{U^d}.
\]
Since $\|\cdot\|_{U^d}$ is a (semi-)norm, by identifying functions with a $\upsilon^d$-distance of zero, $(\caF,\upsilon^d)$ forms a metric space.
We call this space the \emph{$\upsilon^d$-metric (space)}.
By the following lemma, we can determine the distance between two functions over different domains.
\begin{lemma}\label{lem:distance-well-defined}
  Let $f:\Fp^n \to \bit$ and $g:\Fp^m \to \bit$.
  Then, $\upsilon^d(\nst{f},\nst{g})$ is well defined.
\end{lemma}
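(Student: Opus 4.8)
The plan is to reduce well-definedness to a single fact about a single function: if $\nst{f}$ and $\nst{f}'$ are two identifications of the same $f:\Fp^n\to\bit$ — built from two sequences of affine maps $(S_i)$ and $(S_i')$ with $S_i,S_i':\Fp^i\to\Fp^n$ of rank $\min(n,i)$ — then $\nst{f}' = \nst{f}\circ\bfB$ for some $\bfB\in\Aff(\bfF)$. Granting this (and its analogue for $g$), the lemma follows from two facts already available. First, $\|\bfh\circ\bfB\|_{U^d} = \|\bfh\|_{U^d}$ for all $\bfh\in\caF$ and $\bfB\in\Aff(\bfF)$: indeed $\|\bfh\|_{U^d}^{2^d} = |t_{\caL_d}(\bfh)|$, the system $\caL_d$ is affine, and Lemma~\ref{lem:stable-under-non-standard-affine-bijection} gives $t_{\caL_d}(\bfh\circ\bfB) = t_{\caL_d}(\bfh)$. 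Second, $\Aff(\bfF)$ is a group under composition (composition and inversion are performed coordinatewise on the liftings), so for fixed $\bfB,\bfC\in\Aff(\bfF)$ the map $\bfA\mapsto\bfC\bfA\bfB^{-1}$ is a bijection of $\Aff(\bfF)$.

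To establish the fact, observe that $\{i : i\ge n\}$ is cofinite, hence lies in $\omega$, so it suffices to construct, for each $i\ge n$, an affine bijection $B_i:\Fp^i\to\Fp^i$ with $S_i\circ B_i = S_i'$ (and set $B_i = \mathrm{id}$ for $i<n$); then $\bfB := \lim_{i\to\omega}B_i\in\Aff(\bfF)$, and since $f\circ S_i' = (f\circ S_i)\circ B_i$ on a set lying in $\omega$, the identity $\bfh\circ\bfA = \st(\lim_{i\to\omega}(h_i\circ A_i))$ recorded just before Lemma~\ref{lem:t-cal-equivalence} yields $\nst{f}' = \st(\lim_{i\to\omega}((f\circ S_i)\circ B_i)) = \nst{f}\circ\bfB$. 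To produce $B_i$, write $S_i = L+c$ and $S_i' = L'+c'$ with $L,L':\Fp^i\to\Fp^n$ surjective linear maps; pick decompositions $\Fp^i = V\oplus\ker L = V'\oplus\ker L'$, so that $L|_V:V\to\Fp^n$ and $L'|_{V'}:V'\to\Fp^n$ are isomorphisms and $\dim\ker L = \dim\ker L' = i-n$; fix any isomorphism $\phi:\ker L'\to\ker L$ and set $B_i(v'+w') := (L|_V)^{-1}(L'v' + c' - c) + \phi(w')$ for $v'\in V'$, $w'\in\ker L'$. A routine check shows $B_i$ is an affine bijection of $\Fp^i$ and $S_i(B_i x) = S_i'(x)$ for all $x\in\Fp^i$.

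Finally, fix one identification $\nst{f},\nst{g}$ and consider any other $\nst{f}',\nst{g}'$; write $\nst{f}' = \nst{f}\circ\bfB$ and $\nst{g}' = \nst{g}\circ\bfC$ with $\bfB,\bfC\in\Aff(\bfF)$. For every $\bfA\in\Aff(\bfF)$,
\begin{align*}
  \|\nst{f}' - \nst{g}'\circ\bfA\|_{U^d}
  &= \|(\nst{f} - \nst{g}\circ\bfC\bfA\bfB^{-1})\circ\bfB\|_{U^d} \\
  &= \|\nst{f} - \nst{g}\circ(\bfC\bfA\bfB^{-1})\|_{U^d}
\end{align*}
by the invariance noted above. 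Taking the infimum over $\bfA\in\Aff(\bfF)$ and using that $\bfA\mapsto\bfC\bfA\bfB^{-1}$ is a bijection of $\Aff(\bfF)$ gives $\upsilon^d(\nst{f}',\nst{g}') = \upsilon^d(\nst{f},\nst{g})$, which is the desired independence of the choices. The only genuinely non-formal step is the construction of the $B_i$ — that two rank-$n$ affine surjections $\Fp^i\to\Fp^n$ differ by precomposition with an affine automorphism of $\Fp^i$; everything else is bookkeeping with the ultralimit and the already-established invariance of $t_{\caL_d}$ under $\Aff(\bfF)$.
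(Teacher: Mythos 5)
Your proof is correct, and it reaches the same conclusion via a closely related but differently organized route. The paper's proof pushes the infimum inside the ultralimit and then identifies the resulting quantity with the intrinsic value $\st\bigl(\lim_{i\to\omega}\|f\circ A_i^* - g\circ B_i^*\|_{U^d}\bigr)$, where $A_i^*,B_i^*$ are minimizers; this formula is manifestly independent of the original choices $A_i,B_i$. To do so it asserts, without proof, the existence of an affine bijection $X_i^*$ with $\|f\circ A_i - g\circ B_i\circ X_i^*\|_{U^d} = \|f\circ A_i^*- g\circ B_i^*\|_{U^d}$. You instead run an equivariance argument at the function-limit level: you show that any two identifications of $f$ differ by right composition with some $\bfB\in\Aff(\bfF)$, combine this with the $\Aff(\bfF)$-invariance of $\|\cdot\|_{U^d}$ (via Lemma~\ref{lem:stable-under-non-standard-affine-bijection}) and the group structure of $\Aff(\bfF)$, and conclude that the infimum is unchanged. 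The crucial non-formal input is identical in both proofs — two rank-$n$ affine surjections $\Fp^i\to\Fp^n$ differ by precomposition with an affine automorphism of $\Fp^i$ — but the paper leaves this implicit (it is what justifies the existence of $X_i^*$), whereas you construct the intertwiner $B_i$ explicitly. Your route is arguably cleaner to verify because it isolates the one nontrivial fact and then reduces everything to formal manipulations; the paper's route has the small advantage of producing a concrete formula for $\upsilon^d(\nst{f},\nst{g})$ as a by-product.
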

\begin{proof}
  Suppose $\nst{f} = \st(\lim\limits_{i \to \omega}(f \circ A_i))$ and $\nst{g} = \st(\lim\limits_{i \to \omega}(g \circ B_i))$.
  Then,
  \begin{align*}
    \upsilon^d(\nst{f}, \nst{g})
    & = \inf_{\bfX \in \Aff(\bfF)}\|\nst{f} - \nst{g} \circ \bfX\|_{U^d} \nonumber \\
    & = \inf_{\bfX \in \Aff(\bfF)} \|\st(\lim_{i \to \omega}(f \circ A_i - g \circ B_i \circ X_i ))\|_{U^d}  \tag{$\bfX =: \lim\limits_{i\to\omega}X_i$} \nonumber \\
    & = \inf_{\bfX \in \Aff(\bfF)} \st(\lim_{i \to \omega}\|(f \circ A_i - g \circ B_i \circ X_i)\|_{U^d})
    \tag{by Lemma~\ref{lem:tL-exchange}}
    \label{eq:distance-well-defined}
  \end{align*}
  Let $A^*_i:\Fp^i \to \Fp^n$ and $B^*_i:\Fp^i \to \Fp^m$ be matrices that minimize $\|f \circ A^*_i - g \circ B^*_i\|_{U^d}$.
  When $i \geq \max(n, m)$, there exists an affine transformation $X^*_i:\Fp^i \to \Fp^i$ such that
  $\|f \circ A_i - g \circ B_i \circ X^*_i\|_{U^d} = \|f \circ A^*_i - g \circ B^*_i\|_{U^d}$.
  We note that for any two sequences $(a_i)$ and $(b_i)$ with $a_i \leq b_i$,
  $\st(\lim\limits_{i \to \omega}a_i) \leq \st(\lim\limits_{i \to \omega}b_i)$ holds.
  Hence,
  \begin{align*}
    \upsilon^d(\nst{f}, \nst{g}) 
    = \st(\lim_{i \to \omega}\|f \circ A_i - g \circ B_i \circ X^*_i\|_{U^d})
    = \st(\lim_{i \to \omega}\|f \circ A^*_i - g \circ B^*_i\|_{U^d}),
  \end{align*}
  which is determined regardless of the choice of $A_i$ and $B_i$.
\end{proof}

\subsection{Equivalence between $t$-convergence and $\upsilon$-convergence}

Let $(f_i:\Fp^{n_i} \to \bbR)$ be a sequence of bounded functions.
We say that the sequence is \emph{$t$-convergent} if for every finite affine system $\caL$ of liner forms,
the sequence $(t_\caL(f_i))$ converges (in the sense of Cauchy).
If there exists a function limit $\bff \in \caF$ such that $\lim\limits_{i \to \infty}t_\caL(f_i) = t_\caL(\bff)$ for every finite affine system $\caL$ of linear forms,
then we say that the sequence $(f_i)$ $t$-converges to $\bff$.
Similarly, a sequence $(\bff_i \in \caF)$ of function limits is said to be $\emph{$t$-convergent}$ if,
for every finite affine system $\caL$ of linear forms,
the sequence $(t_\caL(\bff_i))$ converges.

For a sequence $(\bff:\bfF \to \bbR)$ of function limits,
we say that it is \emph{$\upsilon$-convergent} if it is Cauchy in the $\upsilon^d$-metric for any $d \in \bbN$.
The main objective of this section is to show that $t$-convergence and $\upsilon$-convergence coincide in the following sense:
\begin{theorem}\label{the:convergent=cauchy}
  A sequence of functions $(f_i:\Fp^{n_i} \to \bit)$ is $t$-convergent to $\bff:\bfF \to \bit$ if and only if the sequence $(\nst{f_i})$ is $\upsilon$-convergent to $\bff$.
\end{theorem}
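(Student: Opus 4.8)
The plan is to prove the two directions separately, using the identity $\|\bff\|_{U^d} = |t_{\caL_d}(\bff)|^{1/2^d}$ (so that $\upsilon^d$-distances are controlled by the $t_\caL$-values of differences) together with the equivalence between Gowers-norm closeness and $t_\caL$-closeness supplied by Lemma~\ref{lem:tl-difference-by-gowers-norm-difference}, and the exchange lemmas (Lemmas~\ref{lem:tL-exchange}, \ref{lem:stable-under-non-standard-affine-bijection}, \ref{lem:t-cal-equivalence}) that let us pass between $f_i$ and $\nst{f_i}$.

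\emph{($\upsilon$-convergence $\Rightarrow$ $t$-convergence.)} Suppose $(\nst{f_i})$ is $\upsilon$-convergent to $\bff$, i.e.\ Cauchy and converging in each $\upsilon^d$-metric. Fix a finite affine system $\caL$ of linear forms; by the result of Gowers and Wolf cited after Definition~\ref{def:true-complexity}, $\caL$ has true complexity at most some $d$. I want to show $t_\caL(f_i) \to t_\caL(\bff)$. First, $t_\caL(f_i) = t_\caL(\nst{f_i})$ by Lemma~\ref{lem:t-cal-equivalence}. Since $t_\caL$ is affine-invariant (Lemma~\ref{lem:stable-under-non-standard-affine-bijection}), for the minimizing $\bfA_i \in \Aff(\bfF)$ realizing $\upsilon^{d+1}(\nst{f_i},\bff)$ we have $t_\caL(\nst{f_i}) = t_\caL(\nst{f_i}\circ \bfA_i)$, and then Lemma~\ref{lem:tl-difference-by-gowers-norm-difference} (applied to the function-limit versions $\nst{f_i}\circ\bfA_i$ and $\bff$, which lies in $\caF_{\bit}$) gives $|t_\caL(\nst{f_i}\circ\bfA_i) - t_\caL(\bff)| \le \delta(\|\nst{f_i}\circ\bfA_i - \bff\|_{U^{d+1}}) = \delta(\upsilon^{d+1}(\nst{f_i},\bff)) \to 0$. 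This proves $t$-convergence to $\bff$. (One technical point: Lemma~\ref{lem:tl-difference-by-gowers-norm-difference} is stated for ordinary functions on $\Fp^n$; I would either restate/reprove it for function limits by the same telescoping argument using Lemma~\ref{lem:tL-exchange}, or lift everything via Lemma~\ref{lem:measurable-ultra-function->sequence-of-measurable-function} and pass to the ultralimit.)

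\emph{($t$-convergence $\Rightarrow$ $\upsilon$-convergence.)} Suppose $(f_i)$ is $t$-convergent to $\bff$. Fix $d \in \bbN$; I must show $\upsilon^d(\nst{f_i},\bff) \to 0$. Now $\upsilon^d(\nst{f_i},\bff) \le \|\nst{f_i} - \bff\|_{U^d}$, but this crude bound is useless without choosing a good affine alignment, and in general no single $\bfA$ works. The correct route: note $\upsilon^d(\nst{f_i},\bff)^{2^d} = \inf_{\bfA}|t_{\caL_d}(\nst{f_i} - \bff\circ\bfA)|$, and expand $t_{\caL_d}(\nst{f_i}-\bff\circ\bfA)$ multilinearly into a sum of $2^{|\caL_d|}$ terms, each of the form $t$ of a mixed product of copies of $\nst{f_i}$ and copies of $\bff\circ\bfA$. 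For the pure-$\nst{f_i}$ term we get $t_{\caL_d}(f_i) \to t_{\caL_d}(\bff)$ by hypothesis; for the pure-$(\bff\circ\bfA)$ term we get $t_{\caL_d}(\bff)$ by affine invariance; the difficulty is the mixed terms, which genuinely depend on how $\nst{f_i}$ and $\bff$ are aligned. Here is where the \emph{choice} of $\bfA$ enters: I would use the decomposition theorem (Theorem~\ref{the:decomposition}) to approximate both $f_i$ (for each fixed large $i$) and a lifting of $\bff$ by functions measurable with respect to low-complexity polynomial factors, and then $t$-convergence forces these factor data to "match up" in the limit, so that an appropriate non-standard affine bijection can be built to align the structured parts; the pseudorandom parts contribute negligibly by Lemma~\ref{lem:bound-gowers-norm-by-l1-norm} / Theorem~\ref{the:small-gowers-norm->high-rank}. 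Concretely I expect the argument to run: pass to the function limit $\bff_i := \nst{f_i}$, lift both $\bff_i$ and $\bff$, apply the decomposition, observe that $t$-convergence pins down all the relevant linear-form counts of the structured pieces, invoke an inverse-type statement to realize the matching of atoms by an affine map at each finite level $i$, and take the ultralimit of these maps.

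\emph{Main obstacle.} The hard part is this last direction — specifically, turning "all $t_\caL(f_i)$ converge" into "there exists a non-standard affine bijection aligning $\nst{f_i}$ with $\bff$ up to small $U^d$-error." This is exactly the analogue, in the higher-order-Fourier setting, of the step in graph-limit theory where convergence of all subgraph densities is upgraded to a genuine metric (cut-distance) convergence, and it inevitably requires the decomposition/regularity machinery and an inverse theorem rather than soft measure-theoretic manipulation. I expect the bookkeeping of matching the polynomial-factor data across the sequence, and verifying that the resulting alignment maps have a well-defined ultralimit in $\Aff(\bfF)$, to be the technical core; the rest (the $\Rightarrow$ direction and the reductions via the exchange lemmas) is comparatively routine.
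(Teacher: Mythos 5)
Your sketch is essentially the paper's argument. The paper proves the function-limit version of Lemma~\ref{lem:tl-difference-by-gowers-norm-difference} as Lemma~\ref{lem:tl-wrt-gowers-distance}, using a lifting and Lemma~\ref{lem:tL-exchange} exactly as you suggest, and combines it with Lemma~\ref{lem:t-cal-equivalence}. The technical caveat you flag (``Lemma~\ref{lem:tl-difference-by-gowers-norm-difference} is stated for functions, not limits'') is handled precisely by that lemma. This half is fine.

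\textbf{On the direction $t$-convergence $\Rightarrow$ $\upsilon$-convergence.} You correctly identify this as the hard direction and correctly name the decomposition theorem as the central tool, but the route you sketch diverges from the paper's and is left too vague to assess. Your plan --- expand $t_{\caL_d}(\nst{f_i} - \bff\circ\bfA)$ multilinearly, handle mixed terms by matching the atoms of a polynomial factor with an explicit affine map, then take the ultralimit of those maps --- is not how the paper proceeds, and the key step (``invoke an inverse-type statement to realize the matching of atoms by an affine map at each finite level'') is exactly the part that is unclear; it is not obvious that matching atom distributions yields an affine bijection at all, and you do not explain how to build one. The paper instead sidesteps any explicit alignment: it (i) proves (Lemma~\ref{lem:distance-to-random-sample}) that a function $f$ and its restriction $f\circ A$ to a random $k$-dimensional affine subspace, with $k$ large, satisfy $\upsilon^d(\nst{f},\nst{(f\circ A)})\le\epsilon$ w.h.p.\ --- this is where the decomposition theorem, the Hatami--Lovett restriction properties, and the high-rank-implies-small-Gowers-norm statement (Lemma~\ref{lem:same-structure->close}) enter, and crucially no explicit affine map is constructed, only that both structured parts have the form $\Gamma(\text{high-rank polynomials})$; and (ii) observes (Corollary~\ref{cor:t-convergent->statistically-close}, via Lemma~\ref{lem:approximate-by-tl}) that closeness of all $t_\caL$-values forces the \emph{distributions} of the random $k$-dimensional restrictions of $f$ and $g$ to be statistically close, so they can be coupled to be \emph{literally equal} with high probability. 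Then a triangle inequality in the $\upsilon^d$-metric finishes. The coupling step is the clean substitute for the atom-matching you were reaching for, and it exploits that the $\upsilon^d$-metric already infimizes over non-standard affine bijections, so one never has to produce one by hand. As written, your second direction does not constitute a proof; the paper's random-restriction-plus-coupling argument is a genuinely different and more tractable route to the same conclusion.
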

In the following two sections, we show the sufficiency (Corollary~\ref{cor:u-convergent->t-convergent}) and necessity of $\upsilon$-convergence (Corollary~\ref{cor:t-convergent->u-convergent}), respectively.


\subsubsection{$\upsilon$-convergence implies $t$-convergence}

We first look at the easier direction, that is, $\upsilon$-convergence of $(\nst{f_i})$ implies $t$-convergence of $(f_i)$.
We need the following simple proposition.
\begin{proposition}\label{pro:one-to-one}
  Let $(a_i)$ be a sequence of real numbers and $f:\bbR \to \bbR$ be a one-to-one function.
  Then, we have
  \[
    \st(\lim_{i \to \omega}f(a_i)) = f(\st(\lim_{i \to \omega}a_i)).
  \]
\end{proposition}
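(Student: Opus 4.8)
The plan is to translate, in the standard way, the statement ``$\lim_{i\to\omega}a_i$ is infinitely close to $a$'' into ``$\{i:|a_i-a|<\eta\}\in\omega$ for every standard $\eta>0$'' and then feed this into continuity of $f$. I would begin by writing $a:=\st(\lim_{i\to\omega}a_i)$, which is a well-defined real number because $(a_i)$ is (implicitly) bounded --- otherwise the right-hand side of the claimed identity is not even defined --- and then establish the identity by an $\epsilon$--$\eta$ argument at the single point $a$.

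First comes the bookkeeping. By the definition of $\st(\cdot)$, the hyperreal $\lim_{i\to\omega}(a_i-a)=\bigl(\lim_{i\to\omega}a_i\bigr)-a$ is infinitesimal, i.e.\ $|\lim_{i\to\omega}(a_i-a)|<\eta$ for every standard $\eta>0$; by the definition of the order on $\nst{\bbR}$ recalled in Section~\ref{sec:pre}, this is precisely the assertion that $\{i\in\bbN:|a_i-a|<\eta\}\in\omega$ for every standard $\eta>0$. Applying this with an $\eta$ small enough that $|x-a|<\eta$ forces $|f(x)|<|f(a)|+1$ (continuity of $f$ at $a$) yields $\{i:|f(a_i)|<|f(a)|+1\}\in\omega$, so $\lim_{i\to\omega}f(a_i)$ is a bounded hyperreal and $\st(\lim_{i\to\omega}f(a_i))$ is defined.

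The main step is then short. Fix $\epsilon>0$ and pick, by continuity of $f$ at $a$, some $\eta>0$ with $|x-a|<\eta\Rightarrow|f(x)-f(a)|<\epsilon$. Since $\{i:|a_i-a|<\eta\}\in\omega$ and $\omega$ is closed under supersets, $\{i:|f(a_i)-f(a)|<\epsilon\}\in\omega$, which means $|\lim_{i\to\omega}f(a_i)-f(a)|\le\epsilon$ as hyperreals, and hence $|\st(\lim_{i\to\omega}f(a_i))-f(a)|\le\epsilon$. As $\epsilon>0$ was arbitrary, $\st(\lim_{i\to\omega}f(a_i))=f(a)=f(\st(\lim_{i\to\omega}a_i))$.

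The main obstacle lies not in the calculation but in the hypothesis: the stated identity genuinely fails for a one-to-one $f$ that is badly discontinuous --- e.g.\ let $f$ transpose $0$ and $1$ and fix every other real, and take $a_i=1/i$, so that $\st(\lim_{i\to\omega}f(a_i))=0\neq1=f(0)=f(\st(\lim_{i\to\omega}a_i))$. So the proposition must be read, and should only be invoked, with $f$ continuous at the relevant point $a$; in the uses made of it $f$ is in fact always continuous (indeed strictly monotone, such as an odd continuous extension of $x\mapsto x^{1/2^d}$ from $\bbR_+$, where the relevant quantities lie), so nothing is lost. Granting that, the rest is a routine unwinding of the definitions of $\st(\cdot)$ and of the order on $\nst{\bbR}$ from Section~\ref{sec:pre}, together with upward closure of the filter $\omega$.
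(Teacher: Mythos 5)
Your approach is genuinely different from the paper's --- and it is the approach that actually works. The paper's own proof sets $s = \st(\lim_{i\to\omega}f(a_i))$ and then asserts that $\{i \in \bbN : f(a_i) = s\}\in\omega$, after which injectivity gives $\{i : a_i = f^{-1}(s)\}\in\omega$ and hence $\st(\lim_{i\to\omega}a_i) = f^{-1}(s)$. The first asserted step is false: the standard part of a bounded hyperreal is only infinitesimally close to the ultralimit, and need not be attained by the underlying sequence on any $\omega$-large set. Your counterexample --- $f$ the transposition of $0$ and $1$ on $\bbR$, with $a_i = 1/i$ --- makes this concrete: here $\st(\lim_{i\to\omega}f(a_i)) = 0$, yet $\{i : f(a_i) = 0\}$ is empty and so not in $\omega$. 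So the paper's argument has a genuine gap, and in fact the proposition as literally stated is false for discontinuous injections, which is exactly what your example shows.

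Your $\epsilon$--$\eta$ argument replaces injectivity by continuity at $a := \st(\lim_{i\to\omega}a_i)$ and is correct: infinitesimality of $\lim_{i\to\omega}a_i - a$ gives $\{i : |a_i - a| < \eta\}\in\omega$ for every standard $\eta>0$, and pushing this through continuity at $a$ yields $\{i : |f(a_i) - f(a)| < \epsilon\}\in\omega$ for every standard $\epsilon>0$, i.e.\ $\lim_{i\to\omega}f(a_i)$ is infinitesimally close to $f(a)$. This is the repair the paper should have made. You are also right that it costs nothing downstream: the sole invocation, in the proof of Lemma~\ref{lem:tl-wrt-gowers-distance}, applies the proposition to a function $\eta$ that can be taken continuous and strictly increasing (hence one-to-one) without loss, so continuity is available exactly where it is needed.
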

\begin{proof}
  Let $s = \st(\lim_{i \to \omega}f(a_i))$.
  Then, $\{i \in \bbN : f(a_i) = s \} \in \omega$ holds.
  Since $f$ is one-to-one,
  we have $\{i \in \bbN : a_i = f^{-1}(s) \}  \in \omega$.
  It follows that $f(\st(\lim\limits_{i \to \omega}a_i)) = f(f^{-1}(s)) = s$.
\end{proof}

\begin{lemma}\label{lem:tl-wrt-gowers-distance}
  Let $\bff, \bfg \in \caF_{[0,1]}$ be function limits.
  For any system of linear forms $\caL$ of true complexity of at most $d$,
  we have
  \[
    |t_\caL(\bff) - t_\caL(\bfg) | \leq \eta(\upsilon^{d+1}(\bff,\bfg)),
  \]
  where $\eta:\bbR_+ \to \bbR_+$ is a function with $\lim\limits_{\epsilon \to 0}\eta(\epsilon) = 0$.
\end{lemma}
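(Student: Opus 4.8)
The plan is to lift everything down to the finite setting and invoke Lemma~\ref{lem:tl-difference-by-gowers-norm-difference}. The first step is to get rid of the infimum in $\upsilon^{d+1}$: by Lemma~\ref{lem:stable-under-non-standard-affine-bijection} we have $t_\caL(\bfg) = t_\caL(\bfg\circ\bfA)$ for every $\bfA \in \Aff(\bfF)$, so it suffices to bound $|t_\caL(\bff) - t_\caL(\bfh)|$, where, for a fixed $\rho > \upsilon^{d+1}(\bff,\bfg)$, we choose $\bfA \in \Aff(\bfF)$ with $\|\bff - \bfg\circ\bfA\|_{U^{d+1}} < \rho$ (possible since $\upsilon^{d+1}$ is an infimum) and set $\bfh := \bfg\circ\bfA$, which again lies in $\caF_{[0,1]}$ and satisfies $\|\bff - \bfh\|_{U^{d+1}} < \rho$.

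Next I would pass to liftings. By Lemma~\ref{lem:measurable-ultra-function->sequence-of-measurable-function}, fix $[0,1]$-valued liftings $(f_i : \Fp^i \to [0,1])$ of $\bff$ and $(h_i : \Fp^i \to [0,1])$ of $\bfh$; since the standard-part map is a ring homomorphism on bounded hyperreals, $(f_i - h_i)$ is a bounded lifting of $\bff - \bfh$. Applying Lemma~\ref{lem:tL-exchange} to $\caL_{d+1}$ with this lifting, together with Proposition~\ref{pro:one-to-one} (for the injective map $t \mapsto t^{1/2^{d+1}}$), gives $\|\bff - \bfh\|_{U^{d+1}} = \st(\lim_{i\to\omega}\|f_i - h_i\|_{U^{d+1}})$, and hence $\{\,i : \|f_i - h_i\|_{U^{d+1}} < \rho\,\} \in \omega$; applying Lemma~\ref{lem:tL-exchange} to $\bff$ and to $\bfh$ likewise gives $|t_\caL(\bff) - t_\caL(\bfh)| = \st(\lim_{i\to\omega}|t_\caL(f_i) - t_\caL(h_i)|)$. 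Finally, for each $i$, Lemma~\ref{lem:tl-difference-by-gowers-norm-difference} supplies a function $\delta:\bbR_+\to\bbR_+$ with $\delta(\epsilon)\to0$ as $\epsilon\to0$ and $|t_\caL(f_i) - t_\caL(h_i)| \le \delta(\|f_i - h_i\|_{U^{d+1}})$.

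The one genuine obstacle is that $\delta$ need not be monotone or continuous, so one cannot simply take standard parts of this last inequality. I would fix this with the monotone-envelope trick: since $t_\caL$ is $[0,1]$-valued on $[0,1]$-valued inputs we may assume $\delta \le 1$, and then replace $\delta$ by $\bar\delta(t) := \sup_{0\le s\le t}\delta(s)$, which is non-decreasing, bounded by $1$, dominates $\delta$, and still satisfies $\bar\delta(t)\to0$ as $t\to0^+$. For every $i$ in the $\omega$-set above, $|t_\caL(f_i) - t_\caL(h_i)| \le \bar\delta(\|f_i - h_i\|_{U^{d+1}}) \le \bar\delta(\rho)$, so taking $\st(\lim_{i\to\omega}(\cdot))$ yields $|t_\caL(\bff) - t_\caL(\bfh)| \le \bar\delta(\rho)$, i.e. $|t_\caL(\bff) - t_\caL(\bfg)| \le \bar\delta(\rho)$. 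Letting $\rho \downarrow \upsilon^{d+1}(\bff,\bfg)$ and putting $\eta(t) := \inf_{s>t}\bar\delta(s)$ — which is non-negative, non-decreasing, and tends to $0$ as $t\to0$ — we obtain $|t_\caL(\bff) - t_\caL(\bfg)| \le \eta(\upsilon^{d+1}(\bff,\bfg))$, as required.
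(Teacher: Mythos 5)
Your proof is correct and takes essentially the same route as the paper: pass to $[0,1]$-valued liftings, apply Lemma~\ref{lem:tl-difference-by-gowers-norm-difference} at the finite level, and push the inequality back through $\st(\lim_{i\to\omega}\cdot)$ via Lemma~\ref{lem:tL-exchange}. The places where you differ are technical, and both of them tighten the paper's argument. First, the paper says it ``suffices to show'' the bound with $\|\bff-\bfg\|_{U^{d+1}}$ in place of $\upsilon^{d+1}(\bff,\bfg)$ and leaves the passage from the fixed-$\bfA$ bound to the infimum implicit; you instead work with $\rho>\upsilon^{d+1}(\bff,\bfg)$, prove the bound for each such $\rho$, and let $\rho$ decrease, which is what actually justifies that step. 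Second, the paper handles the non-monotonicity of the $\delta$ from Lemma~\ref{lem:tl-difference-by-gowers-norm-difference} by asserting it can be taken strictly increasing and then invoking Proposition~\ref{pro:one-to-one} to commute $\eta$ with $\st(\lim(\cdot))$; you sidestep that by taking the monotone envelope $\bar\delta$ and then $\eta(t)=\inf_{s>t}\bar\delta(s)$, which is cleaner, since Proposition~\ref{pro:one-to-one} is delicate for functions that are merely one-to-one but not continuous, whereas your use of it for $t\mapsto t^{1/2^{d+1}}$ is innocuous (that map is continuous). One small remark: you don't really need Proposition~\ref{pro:one-to-one} at all for the Gowers-norm equality; $\|\bff-\bfh\|_{U^{d+1}}=\st(\lim_{i\to\omega}\|f_i-h_i\|_{U^{d+1}})$ follows directly from Lemma~\ref{lem:tL-exchange} applied to $\caL_{d+1}$ and the continuity of $t\mapsto |t|^{1/2^{d+1}}$.
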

\begin{proof}
  By Lemma~\ref{lem:stable-under-non-standard-affine-bijection},
  it suffices to show that
  $|t_\bfL(\bff) - t_\bfL(\bfg) | \leq \eta(\|\bff - \bfg\|_{U^{d+1}})$.
  Let $(f_i:\Fp^i \to [0,1])$ and $(g_i:\Fp^i \to [0,1])$ be liftings of $\bff$ and $\bfg$, respectively.
  Then, we have
  \begin{align}
  |t_\caL(\bff) - t_\caL(\bfg)|
  & =  |\st(\lim_{i \to \omega} t_\caL(f_i)) - \st(\lim_{i \to \omega} t_\caL(g_i))| \tag{by Lemma~\ref{lem:tL-exchange}} \nonumber \\
  & = \st(\lim_{i \to \omega} |t_\caL(f_i) - t_\caL(g_i)|). \label{eq:tl-wrt-gowers-distance}
  \end{align}

  Let $\delta = \|\bff - \bfg\|_{U^{d+1}}$ and $\delta_i  =\|f_i - g_i\|_{U^{d+1}}$ for each $i \in \bbN$.
  By Lemma~\ref{lem:tL-exchange}, $\delta = \st(\lim\limits_{i \to \omega}\delta_i)$.
  Since the true complexity of $\caL$ is at most $d$,
  by Lemma~\ref{lem:tl-difference-by-gowers-norm-difference},
  there exists a function $\eta:\bbR_+ \to \bbR_+$ with $\lim\limits_{\epsilon \to 0}\eta(\epsilon) = 0$ such that $|t_\caL(f_i) - t_\caL(g_i)| \leq \eta(\delta_i)$ holds for every $i \in \bbN$.
  Furthermore, we can choose $\eta$ as a strictly increasing function so that $\eta$ is one-to-one.
  From $\lim\limits_{i\to \omega}|t_\caL(f_i) - t_\caL(g_i) | \leq \lim\limits_{i\to \omega}\eta(\delta_i)$ and Proposition~\ref{pro:one-to-one},
  we have
  \begin{align*}
    \eqref{eq:tl-wrt-gowers-distance}
    & \leq \st(\lim_{i\to \omega} \eta(\delta_i) )  = \eta( \st(\lim_{i\to \omega} \delta_i) )
    = \eta(\delta).
    \qedhere
  \end{align*}
\end{proof}

\begin{corollary}\label{cor:u-convergent->t-convergent}
  Let $(f_i:\Fp^{n_i} \to \bit)$ be a sequence of function.
  If the sequence $(\nst{f_i})$ is $\upsilon$-convergent to $\bff:\bfF \to \bit$, then the sequence $(f_i)$ is $t$-convergent to $\bff$.
\end{corollary}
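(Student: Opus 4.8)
The plan is to derive Corollary~\ref{cor:u-convergent->t-convergent} as a direct consequence of Lemma~\ref{lem:tl-wrt-gowers-distance}, together with the fact (Lemma~\ref{lem:t-cal-equivalence}) that $t_\caL(f_i) = t_\caL(\nst{f_i})$ for every finite affine system $\caL$. The one subtlety is that Lemma~\ref{lem:tl-wrt-gowers-distance} is stated for systems of \emph{true complexity at most $d$}, whereas $t$-convergence quantifies over \emph{all} finite affine systems $\caL$; but this is harmless because, as recalled in the preliminaries (following Definition~\ref{def:true-complexity}), the true complexity of an affine system of $m$ linear forms is always at most $mp$, hence finite. So every affine system falls under the scope of Lemma~\ref{lem:tl-wrt-gowers-distance} for a suitable $d$.

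Concretely, fix a finite affine system of linear forms $\caL$ with $m$ forms, and set $d := mp - 1$, so that $\caL$ has true complexity at most $d$. By Lemma~\ref{lem:tl-wrt-gowers-distance} applied to the pair $\nst{f_i}, \nst{f_j}$ (both lie in $\caF_{[0,1]}$ since the $f_i$ are $\bit$-valued), we get
\[
  |t_\caL(\nst{f_i}) - t_\caL(\nst{f_j})| \le \eta\bigl(\upsilon^{d+1}(\nst{f_i}, \nst{f_j})\bigr),
\]
where $\eta$ depends only on $\caL$ and satisfies $\lim_{\epsilon \to 0}\eta(\epsilon) = 0$. Since $(\nst{f_i})$ is $\upsilon$-convergent, it is in particular Cauchy in the $\upsilon^{d+1}$-metric, so $\upsilon^{d+1}(\nst{f_i},\nst{f_j}) \to 0$ as $i, j \to \infty$; combined with the vanishing of $\eta$ at $0$ this shows $(t_\caL(\nst{f_i}))_i$ is Cauchy, hence convergent. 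By Lemma~\ref{lem:t-cal-equivalence}, $t_\caL(\nst{f_i}) = t_\caL(f_i)$, so $(t_\caL(f_i))_i$ converges; as $\caL$ was arbitrary, $(f_i)$ is $t$-convergent.

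It remains to identify the limit as $\bff$. Since $(\nst{f_i})$ is $\upsilon$-convergent to $\bff$, it is Cauchy in each $\upsilon^{d+1}$-metric with $\upsilon^{d+1}(\nst{f_i}, \bff) \to 0$; applying Lemma~\ref{lem:tl-wrt-gowers-distance} to the pair $\nst{f_i}, \bff$ gives $|t_\caL(\nst{f_i}) - t_\caL(\bff)| \le \eta(\upsilon^{d+1}(\nst{f_i},\bff)) \to 0$, so $\lim_i t_\caL(f_i) = \lim_i t_\caL(\nst{f_i}) = t_\caL(\bff)$ for every finite affine system $\caL$, which is exactly the statement that $(f_i)$ $t$-converges to $\bff$. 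There is no real obstacle here: the entire content has been front-loaded into Lemma~\ref{lem:tl-wrt-gowers-distance}, and the corollary is essentially bookkeeping — the only thing to be careful about is the quantifier mismatch between ``true complexity $\le d$'' and ``all affine systems,'' which the $mp$-bound on true complexity resolves.
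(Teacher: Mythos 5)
Your proof is correct and follows essentially the same route as the paper's: invoke Lemma~\ref{lem:tl-wrt-gowers-distance} (together with the fact, recorded after Definition~\ref{def:true-complexity}, that any affine system of $m$ forms has true complexity at most $mp$, so some finite $d$ always works), then transfer from $\nst{f_i}$ to $f_i$ via Lemma~\ref{lem:t-cal-equivalence}. The only difference is that you prove Cauchyness of $(t_\caL(\nst{f_i}))$ first and then separately identify the limit, whereas your second paragraph alone — applying the lemma to the pair $(\nst{f_i},\bff)$ and using $\upsilon^{d+1}(\nst{f_i},\bff)\to 0$ — already yields $t_\caL(\nst{f_i})\to t_\caL(\bff)$ directly, so the Cauchy step is redundant (though not incorrect); the paper's proof skips it.
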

\begin{proof}
  If $(\nst{f_i})$ is $\upsilon$-convergent to $\bff$,
  then $t_\caL(\nst{f_i})$ converges to $t_\caL(\bff)$ for all finite affine systems $\caL$ of linear forms, by Lemma~\ref{lem:tl-wrt-gowers-distance}.
  Since $t_\caL(\nst{f_i}) = t_\caL(f_i )$ by Lemma~\ref{lem:t-cal-equivalence}, we have the desired result.
\end{proof}


\subsubsection{$t$-convergence implies $\upsilon$-convergence}

Now we turn to the other direction, that is, $t$-convergence of $(f_i)$ implies $\upsilon$-convergence of $(\nst{f_i})$.

We first show that, for any function $f:\Fp^n \to \bit$ and a random affine embedding $A:\Fp^k \to \Fp^n$ for sufficiently large $k$, two function limits $\nst{f}$ and $\nst{(f \circ A)}$ are close in the $\upsilon^d$-metric.
To this end, we need the following two lemmas.
The first says that if two sequences of polynomials $(P_1,\ldots,P_C)$ and $(Q_1,\ldots,Q_C)$ are of high rank,
then $\Gamma(P_1,\ldots,P_C)$ and $\Gamma(Q_1,\ldots,Q_C)$ cannot be distinguished in terms of the Gowers norm for any $\Gamma:\bbT^C \to \bbR$.
\begin{lemma}\label{lem:same-structure->close}
  For any $\epsilon > 0$, $C \in \bbN$, and $d \in \bbN$, there exists $r = r_{\ref{lem:same-structure->close}}(\epsilon, C, d)$ with the following property.
  Let $\Gamma:\bbT^C \to \bbR$ and let $(P_1,\ldots,P_C)$ and $(Q_1,\ldots,Q_C)$ be sequences of polynomials of degrees of at most $d$ and of ranks of at least $r$.
  Then, $\|\Gamma(P_1,\ldots,P_C) - \Gamma(Q_1,\ldots,Q_C) \|_{U^d} \leq \epsilon$ holds.
\end{lemma}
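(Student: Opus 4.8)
The plan is to strip $\Gamma$ off the polynomials by a finite Fourier expansion and then feed high-rank linear combinations into Theorem~\ref{the:small-gowers-norm->high-rank}. Since a polynomial of degree at most $d$ has depth bounded in terms of $p$ and $d$, all the $P_i$ and $Q_i$ take values in one fixed finite subgroup $H\subseteq\bbT^C$ with $|H|=O_{p,C,d}(1)$, so only $\Gamma|_H$ is relevant (and may be taken bounded). Expanding $\Gamma|_H=\sum_\alpha\widehat{\Gamma}(\alpha)\,\chi_\alpha$ in the characters $\chi_\alpha(t)=\sfe(\sum_i\alpha_it_i)$ of $H$, and setting $P_\alpha:=\sum_i\alpha_iP_i$ and $Q_\alpha:=\sum_i\alpha_iQ_i$ (polynomials of degree at most $d$), one obtains
\[
  \Gamma(P_1,\ldots,P_C)-\Gamma(Q_1,\ldots,Q_C)=\sum_{\alpha\neq0}\widehat{\Gamma}(\alpha)\bigl(\sfe(P_\alpha)-\sfe(Q_\alpha)\bigr),
\]
the $\alpha=0$ term cancelling. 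By homogeneity and the triangle inequality for $\|\cdot\|_{U^d}$, it then suffices, for each nonzero $\alpha$, to bound $\|\sfe(P_\alpha)-\sfe(Q_\alpha)\|_{U^d}$ by $\epsilon/(|H|\max_\alpha|\widehat{\Gamma}(\alpha)|)$.

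Fix a nonzero $\alpha$. By the definition of the rank of a polynomial factor, $\rank(P_1,\ldots,P_C)\ge r$ forces $\rank_{d_\alpha}(P_\alpha)\ge r$, and likewise $\rank_{d_\alpha}(Q_\alpha)\ge r$, where $d_\alpha\le d$ is the degree entering that definition; hence by Theorem~\ref{the:small-gowers-norm->high-rank} the norms $\|\sfe(P_\alpha)\|_{U^{d_\alpha}}$ and $\|\sfe(Q_\alpha)\|_{U^{d_\alpha}}$ are at most some $\epsilon'(r)$ with $\epsilon'(r)\to0$ as $r\to\infty$. Next I would expand $\|\sfe(P_\alpha)-\sfe(Q_\alpha)\|_{U^d}^{2^d}$, the $(d+1)$-fold average over a $d$-dimensional cube defining the Gowers norm, multilinearly into its $2^{2^d}$ terms; each term is $\pm\,\E_{x,y_1,\ldots,y_d}[\sfe(S)]$ for a single polynomial $S$ of degree at most $d$ in $(x,y_1,\ldots,y_d)$ built from $\pm$ shifts of $P_\alpha$ and $Q_\alpha$. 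The two ``pure'' terms reproduce $\|\sfe(P_\alpha)\|_{U^d}^{2^d}$ and $\|\sfe(Q_\alpha)\|_{U^d}^{2^d}$, while each ``mixed'' term I would bound, by an iterated Cauchy--Schwarz (generalized von Neumann) argument, in terms of a Gowers norm of $\sfe(\lambda P_\alpha+\mu Q_\alpha)$ for some $(\lambda,\mu)\neq(0,0)$, which is small by the same rank input --- here one also uses that the combined system $(P_1,\ldots,P_C,Q_1,\ldots,Q_C)$ has rank at least $r$, as holds in the setting where the lemma is applied. Summing the $2^{2^d}$ terms and taking $r=r_{\ref{lem:same-structure->close}}(\epsilon,C,d)$ large enough then yields the claim.

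The step I expect to be the crux is controlling the two ``pure'' terms, which forces careful degree bookkeeping: since $\|\sfe(P_\alpha)\|_{U^d}=1$ whenever $\deg P_\alpha<d$, the argument needs every non-trivial combination $P_\alpha$, $Q_\alpha$ actually to have degree $d$. This holds once the rank hypothesis is read so that the $d$-rank --- rather than the $\deg(\cdot)$-rank --- of each such combination is large, since a polynomial of degree $<d$ has $d$-rank at most $1$; in the application it is arranged by invoking the decomposition theorem at the appropriate top degree. Making this precise, together with the quantitative form of the generalized von Neumann inequality used to discard the mixed terms, is the main work; the remaining bookkeeping with the triangle inequality is routine.
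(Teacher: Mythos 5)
Your Fourier expansion of $\Gamma$ and the outer triangle inequality match the paper's opening moves. But the multilinear expansion of $\|\sfe(P_\alpha)-\sfe(Q_\alpha)\|_{U^d}^{2^d}$ is a wrong turn. The paper simply applies the triangle inequality a second time,
\[
\|\sfe(P_\alpha)-\sfe(Q_\alpha)\|_{U^d}\le \|\sfe(P_\alpha)\|_{U^d}+\|\sfe(Q_\alpha)\|_{U^d},
\]
and bounds each summand via Theorem~\ref{the:small-gowers-norm->high-rank}. The difference splits \emph{before} any averaging, so cross terms never arise and only the individual ranks of the two factors matter. Your route manufactures mixed monomials whose control, as you yourself observe, would hinge on the combined tuple $(P_1,\ldots,P_C,Q_1,\ldots,Q_C)$ being high-rank; that is not a hypothesis of the lemma, and it can fail even when each half is high-rank (take $Q_i=P_i+R_i$ with $R_i$ of strictly lower degree, so that $P_i-Q_i$ is a low-degree combination of the merged tuple). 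Even after that extra work, the two pure terms of your expansion still demand $\|\sfe(P_\alpha)\|_{U^d}$ and $\|\sfe(Q_\alpha)\|_{U^d}$ small --- precisely the quantities the second triangle inequality consumes --- so the multilinear expansion gains nothing and imports an extraneous hypothesis.

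Your closing worry about degrees, on the other hand, is a legitimate subtlety shared by the paper's own proof. To translate high factor rank into $\|\sfe(P_\alpha)\|_{U^d}$ small via Theorem~\ref{the:small-gowers-norm->high-rank}, one needs $\deg P_\alpha=d$; a rank $>1$ rules out degree drop under cancellation, forcing $\deg P_\alpha=\max_i\deg(\alpha_iP_i)$, but this is still $<d$ whenever every $P_i$ appearing with $\alpha_i\neq 0$ has degree $<d$, and then $\|\sfe(P_\alpha)\|_{U^d}=1$. Concretely, with $p=2$, $C=1$, $d=2$, and $P_1\neq Q_1$ classical of degree one, the factor rank is infinite yet $\|\sfe(P_1)-\sfe(Q_1)\|_{U^2}=2^{1/4}$, so the conclusion as literally stated fails; the lemma implicitly assumes the top degree $d$ actually occurs among the relevant combinations. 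You were right to flag this, but note that it is a gap in the hypotheses that your multilinear expansion does not repair either --- the triangle-inequality route inherits exactly the same caveat.
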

\begin{proof}
  We choose $r_{\ref{lem:same-structure->close}}(\epsilon,C,d) \geq r_{\ref{the:small-gowers-norm->high-rank}}(\epsilon/p^{dC},d)$.
  For $\gamma \in \Fp^C$, define $P_\gamma = \sum_{i \in [C]}\gamma_i P_i$.
  Note that we can write $\Gamma(P_1(x),\ldots,P_C(x)) = \sum_{\gamma \in \Fp^C}\widehat{\Gamma}(\gamma)\sfe(P_\gamma(x))$, where $\widehat{\Gamma}(\gamma)$ is the Fourier coefficient of $\Gamma$ at $\gamma$.
  Then we have
  \begin{align*}
    & \|\Gamma(P_1,\ldots,P_C) - \Gamma(Q_1,\ldots,Q_C\|_{U^d}
    =
    \Bigl\|\sum_{\gamma \in \Fp^C}\widehat{\Gamma}(\gamma) (\sfe(P_\gamma) - \sfe(Q_\gamma))\Bigr\|_{U^d} \\
    & \leq
    \| \widehat{\Gamma}(\emptyset) (\sfe(P_\emptyset) - \sfe(Q_\emptyset))\|_{U^d}
    +
    \sum_{\gamma \neq \emptyset}(
    \|\widehat{\Gamma}(\gamma)\sfe(P_\gamma) \|_{U^d}
    +
    \|\widehat{\Gamma}(\gamma)\sfe(Q_\gamma) \|_{U^d}) \\
    & \leq
    0 + \frac{\epsilon}{p^{dC}} p^{dC} = \epsilon. \tag{By Lemma~\ref{the:small-gowers-norm->high-rank}}
  \end{align*}
\end{proof}

The second lemma says that the $L_2$ and Gowers norms are preserved by extending the domain of a function through an affine transformation.
\begin{lemma}\label{lem:affine-expansion}
  Let $f:\bbF^k \to \bbR$ and $A:\bbF^n \to \bbF^k$ be an affine transformation with $n \geq k$ and $\rank(A) = k$.
  Then we have
  \begin{itemize}
  \itemsep=0pt
  \item $\|f \circ A\|_2 = \|f\|_2$
  \item $\|f \circ A\|_{U^d} = \|f\|_{U^d}$ for any $d \in \bbN$.
  \end{itemize}
\end{lemma}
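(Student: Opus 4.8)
Write $A = L + c$ where $L : \bbF^n \to \bbF^k$ is linear with $\rank(L) = k$ (so $L$ is surjective) and $c \in \bbF^k$. The single fact driving both claims is that $A$ pushes the uniform measure on $\bbF^n$ forward to the uniform measure on $\bbF^k$: since $L$ is surjective, $\ker L$ has dimension $n-k$, so every fiber $L^{-1}(y)$ has exactly $p^{n-k}$ elements; translating by $c$ is a bijection of $\bbF^k$, hence every fiber $A^{-1}(y)$ also has $p^{n-k}$ elements. Consequently, for $x$ uniform in $\bbF^n$ the value $Ax$ is uniform in $\bbF^k$. The $L_2$ identity is then immediate:
\[
  \|f\circ A\|_2^2 = \E_{x \in \bbF^n}\bigl[f(Ax)^2\bigr] = \E_{y \in \bbF^k}\bigl[f(y)^2\bigr] = \|f\|_2^2 .
\]

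For the Gowers norm I would expand the definition and exploit the linear structure. For any $x, y_1,\ldots,y_d \in \bbF^n$ and $I \subseteq [d]$ we have $A\bigl(x + \sum_{i\in I} y_i\bigr) = Ax + \sum_{i \in I} L y_i$, so substituting $u := Ax$ and $v_i := L y_i$ gives
\[
  \|f\circ A\|_{U^d}^{2^d}
  = \E_{x,y_1,\ldots,y_d \in \bbF^n}\prod_{I \subseteq [d]} f\Bigl(Ax + \sum_{i\in I}L y_i\Bigr)
  = \E_{u,v_1,\ldots,v_d}\prod_{I\subseteq[d]} f\Bigl(u + \sum_{i\in I} v_i\Bigr),
\]
where in the last expectation $(u,v_1,\ldots,v_d)$ ranges over $(\bbF^k)^{d+1}$. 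The step that needs justification is that $(u,v_1,\ldots,v_d)$ is exactly uniform on $(\bbF^k)^{d+1}$: the variables $x,y_1,\ldots,y_d$ are independent and uniform, each of $u, v_1,\ldots,v_d$ is a function of a distinct one of them, and each of the maps $x \mapsto Ax$ and $y_i \mapsto Ly_i$ pushes uniform to uniform (the former as argued above, the latter by the same fiber-counting argument for $L$). Hence the product measure is preserved, and the right-hand side is precisely $\|f\|_{U^d}^{2^d}$, giving $\|f\circ A\|_{U^d} = \|f\|_{U^d}$.

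There is essentially no hard step here; the only point requiring care is the claim that a surjective (affine) map transports the uniform distribution to the uniform distribution together with the independence bookkeeping in the change of variables for the Gowers norm, and both follow from the equal-fiber-size observation. One could alternatively phrase the Gowers computation through iterated multiplicative derivatives $\Delta_{y_1}\cdots\Delta_{y_d}$ and note $\Delta_{Ly}(f\circ A) = (\Delta_y f)\circ A$ up to the linear part, but the direct substitution above is cleanest.
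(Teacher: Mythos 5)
Your proof is correct and takes essentially the same approach as the paper: both arguments reduce everything to the observation that a rank-$k$ affine (or linear) map from $\bbF^n$ to $\bbF^k$ pushes the uniform measure forward to the uniform measure, then apply this independently to $x$ and each $y_i$ in the Gowers-norm expectation. The only cosmetic difference is that you carefully use $Ly_i$ for the directions (the paper writes $Ay_i$, which still gives a uniform tuple but requires an extra constant-shift substitution to land exactly on the Gowers expression); your version is a touch cleaner but is the same argument.
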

\begin{proof}
  Since $A$ has rank $k$,
  the distribution of $Ax \in \bbF^k$ is uniform when $x \in \bbF^n$ is chosen uniformly at random.
  Hence $\|f \circ A\|_2 = \|f\|_2$ holds.
  Similarly,
  the distribution of $(Ax, Ay_1,\ldots,Ay_d) \in (\Fp^k)^{d+1}$ is uniform when $(x,y_1,\ldots,y_d) \in (\Fp^n)^{d+1}$ is chosen uniformly at random.
  Hence $\|f \circ A\|_{U^d} = \|f\|_{U^d}$ holds.
\end{proof}

\begin{lemma}\label{lem:distance-to-random-sample}
  Let $\epsilon > 0 $,
  $d \in \bbN$,
  and $f:\Fp^n \to \bit$ be a function.
  If $n \geq k \geq k_{\ref{lem:distance-to-random-sample}}(\epsilon,d) \in \bbN$,
  then for a random affine embedding $A:\Fp^k \to \Fp^n$,
  \[
    \upsilon^d(\nst{f}, \nst{(f \circ A)})  \leq \epsilon
  \]
  holds with a probability of at least $1-\epsilon$.
\end{lemma}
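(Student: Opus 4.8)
The plan is to approximate $f$ by a structured function via the decomposition theorem and then show that structured functions barely move, in $\upsilon^d$-distance, under a random restriction, while the leftover pieces remain Gowers-negligible even after restriction. I would fix a small $\delta>0$, a small constant function $\eta$, and a rapidly-growing regularity function $r$ (all depending only on $\epsilon,d$), apply Theorem~\ref{the:decomposition} to $f:\Fp^n\to\bit$ with parameters $(\delta,\eta,d,r)$, and obtain $f=f_1+f_2+f_3$ together with an $r$-regular polynomial factor $\caB=\caB(P_1,\dots,P_{C'})$ of degree at most $d$ and complexity $C'\le C_{\ref{the:decomposition}}(\delta,\eta,d,r)$, where $f_1=\Gamma(P_1,\dots,P_{C'})$ for a suitable $\Gamma$ of range $[0,1]$, $\|f_2\|_2\le\delta$, and $\|f_3\|_{U^d}\le\|f_3\|_{U^{d+1}}\le\eta(|\caB|)$ (monotonicity of the Gowers norms). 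The dimension $k$ is chosen large only at the end, in terms of $C'$, $d$, $\epsilon$.

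Since $\upsilon^d$ obeys the triangle inequality and is invariant under post-composition with non-standard affine bijections (Lemma~\ref{lem:stable-under-non-standard-affine-bijection}), I would write
\[
\upsilon^d(\nst f,\nst{(f\circ A)})\le\upsilon^d(\nst f,\nst{f_1})+\upsilon^d(\nst{f_1},\nst{(f_1\circ A)})+\upsilon^d(\nst{(f_1\circ A)},\nst{(f\circ A)}).
\]
The first term is at most $\|f-f_1\|_{U^d}=\|f_2+f_3\|_{U^d}\le\|f_2\|_1^{1/2^d}+\|f_3\|_{U^d}\le\delta^{1/2^d}+\eta(|\caB|)$, using Lemma~\ref{lem:t-cal-equivalence} to pass to the finite world, Lemma~\ref{lem:bound-gowers-norm-by-l1-norm}, and $\|f_2\|_1\le\|f_2\|_2$. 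The third term is at most $\|(f_2+f_3)\circ A\|_{U^d}\le\|f_2\circ A\|_1^{1/2^d}+\|f_3\circ A\|_{U^d}$; because the random embedding $A=L+c$ maps a uniform point of $\Fp^k$ to a uniform point of $\Fp^n$, $\E_A\|f_2\circ A\|_1=\|f_2\|_1\le\delta$, and a short computation — the shift $c$ re-randomises the base point independently, and for $k$ large a uniform injective $L$ sends $d$ uniform vectors of $\Fp^k$ to a $d$-tuple within $o_k(1)$ total variation of a uniform $d$-tuple of $\Fp^n$ — gives $\E_A\|f_3\circ A\|_{U^d}^{2^d}=\|f_3\|_{U^d}^{2^d}+o_k(1)\le\eta(|\caB|)^{2^d}+o_k(1)$. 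By Markov's inequality and a union bound, outside an event of probability $O(\sqrt\delta+\sqrt{\eta(|\caB|)^{2^d}+o_k(1)})$ over $A$, the third term is at most $\delta^{1/2^{d+1}}+(\eta(|\caB|)^{2^d}+o_k(1))^{1/2^d}$.

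The crux is the middle term $\upsilon^d(\nst{f_1},\nst{(f_1\circ A)})$. Choosing the identity non-standard affine bijection and applying Lemma~\ref{lem:tL-exchange}, I would bound it by $\st(\lim_{i\to\omega}\|f_1\circ\pi_{i,n}-(f_1\circ A)\circ\pi_{i,k}\|_{U^d})$, where for $i\ge n$ the maps $\pi_{i,n}:\Fp^i\to\Fp^n$ and $\pi_{i,k}:\Fp^i\to\Fp^k$ are coordinate projections realising $\nst{f_1}$ and $\nst{(f_1\circ A)}$. Inside, the two functions are $\Gamma$ applied to the polynomial sequences $(P_j\circ\pi_{i,n})_j$ and $((P_j\circ A)\circ\pi_{i,k})_j$ on $\Fp^i$, each of degree at most $d$; adjoining dummy variables preserves the rank of a polynomial factor, so the first has rank $\ge r(C')\ge r_{\ref{lem:same-structure->close}}(\epsilon/4,C',d)$, and for the second I would invoke the standard fact of higher-order Fourier analysis that a random restriction of a sufficiently regular factor of bounded complexity and degree to a random affine subspace of large enough dimension $k$ stays highly regular with probability at least $1-\epsilon/4$. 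I expect this rank-preservation estimate to be the main technical obstacle, since it is not among the results quoted in the excerpt and its quantitative form is exactly what dictates the choice of both $r$ and $k$. On that good event Lemma~\ref{lem:same-structure->close} gives $\|f_1\circ\pi_{i,n}-(f_1\circ A)\circ\pi_{i,k}\|_{U^d}\le\epsilon/4$ for every $i\ge n$, and since $\omega$ contains no finite set, $\upsilon^d(\nst{f_1},\nst{(f_1\circ A)})\le\epsilon/4$.

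Finally I would fix the parameters in order: $\epsilon/4$ for the accuracy in Lemma~\ref{lem:same-structure->close} and $\epsilon/4$ for the rank-preservation failure probability (which fixes $r$); then $\eta$ a small enough constant; then $C'$ and all complexity-dependent quantities; then $\delta$ small; and finally $k=k_{\ref{lem:distance-to-random-sample}}(\epsilon,d)$ large enough that the rank-preservation estimate applies and the $o_k(1)$ terms are negligible. Collecting the three bounds, on the intersection of the good events — which has probability at least $1-\epsilon$ over $A$ — we obtain $\upsilon^d(\nst f,\nst{(f\circ A)})\le\epsilon$.
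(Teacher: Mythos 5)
Your proof is correct in outline and relies on the same essential ingredients as the paper's, but it takes a noticeably different organizational route. The paper's proof opens with a clean reduction: it picks a section $A^{+}:\Fp^n\to\Fp^k$ with $A^{+}A=I_k$, notes that by Lemma~\ref{lem:affine-expansion} and the fact that $\omega$ contains no finite set one has
\[
\upsilon^d(\nst{f},\nst{(f\circ A)})\leq\|f-(f\circ A)\circ A^{+}\|_{U^d},
\]
and thereafter works entirely with functions on the single domain $\Fp^n$. After that reduction, the same decomposition $f=f_1+f_2+f_3$ is applied, the rank preservation under a random embedding is imported from Claim~4.1 of~\cite{Hatami:2013ux}, Lemma~\ref{lem:same-structure->close} is invoked for the structured parts, and $L_1$/Gowers bounds handle $f_2$, $f_3$ and their blown-up versions — a single three-term triangle inequality in the finite world. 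You instead perform the triangle inequality directly at the level of $\upsilon^d$, splitting into $\upsilon^d(\nst{f},\nst{f_1})$, $\upsilon^d(\nst{f_1},\nst{(f_1\circ A)})$, and $\upsilon^d(\nst{(f_1\circ A)},\nst{(f\circ A)})$, and handle the middle term by realising both ultralimits through coordinate projections $\pi_{i,n}$, $\pi_{i,k}$ and applying Lemma~\ref{lem:same-structure->close} on each finite level $\Fp^i$. This is valid, but it forces you to carry the random-embedding-preserves-Gowers-norm estimate for the pseudorandom piece ($\E_A\|f_3\circ A\|_{U^d}^{2^d}\approx\|f_3\|_{U^d}^{2^d}$) separately and to couple the parameters through a Markov bound, whereas the $A^{+}$ trick subsumes these into a single quoted fact. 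The paper's route is tighter bookkeeping; your route proves the same thing more laboriously. Two small remarks: your claim that ``a uniform injective $L$ sends $d$ uniform vectors of $\Fp^k$ to a $d$-tuple within $o_k(1)$ total variation of a uniform $d$-tuple of $\Fp^n$'' is false for a fixed $L$ (all images lie in a $k$-dimensional subspace); it is only true after averaging over $L$, which is what you actually need. And the rank-preservation fact you flag as a potential obstacle is indeed exactly what the paper invokes via Claim~4.1 of~\cite{Hatami:2013ux}, so identifying it as the missing ingredient was correct.
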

\begin{proof}
  Let $f' = f \circ A$ and let $A^+: \Fp^n \to \Fp^k$ be an affine transformation such that $A^+A = I_k$.
  Note that $\rank(A^+) = k$.
  Showing that $\|f - f'\circ A^+ \|_{U^d} \leq \epsilon$ is sufficient.
  To see this,
  for each $i \in \bbN$, let $A_i: \Fp^i \to \Fp^n$ be an arbitrary affine transformation of rank $\min(i,n)$.
  Then,
  $\nst{f}$ and $\nst{f'}$ can be chosen as $\nst{f} = \st(\lim\limits_{i \to \omega}f \circ A_i )$ and $\nst{f'} = \st(\lim\limits_{i \to \omega}f' \circ A^+ \circ A_i)$.
  (Recall that $\upsilon^d(\nst{f},\nst{f'})$ is well defined by Lemma~\ref{lem:distance-well-defined} regardless of the choice of $A_i$.)
  Now we have
  \begin{align*}
    & \upsilon^d(\nst{f},\nst{f'})
    =
    \inf_{\bfX \in \Aff(\bfF)}\|\nst{f} - \nst{f'}\circ \bfX \|_{U^d} \\
    & =
    \inf_{\bfX \in \Aff(\bfF)}\|\st(\lim_{i \to \omega}(f \circ A_i - f' \circ A^+ \circ A_i \circ  X_i))  \|_{U^d} \tag{$\bfX =: \lim\limits_{i \to \omega}X_i $}\\
    & \leq
    \|\st(\lim_{i \to \omega}(f \circ A_i - f' \circ A^+ \circ A_i))  \|_{U^d} \\
    & =
    \st(\lim_{i \to \omega}\|f \circ A_i- f' \circ A^+ \circ A_i\|_{U^d}) \tag{by Lemma~\ref{lem:tL-exchange}} \\
    & =
    \st(\lim_{i \to \omega}\|f- f' \circ A^+\|_{U^d}) \tag{by Lemma~\ref{lem:affine-expansion} and the fact that $\omega$ has no finite set}.
  \end{align*}
  Hence, if $\|f - f' \circ A^+\|_{U^d} \leq \epsilon$, then we have $\upsilon^d(\nst{f},\nst{f'}) \leq \st(\lim\limits_{i\to\omega}\epsilon) = \epsilon$.

  Let $\gamma = (\epsilon/9)^{2^d}$ and define $\eta:\bbN \to \bbR_+$ and $r:\bbN \to \bbN$ as $\eta(D) \leq \epsilon/9$ and $r(D) = r_{\ref{lem:same-structure->close}}(\epsilon/3,D, d)$, respectively.
  By applying Theorem~\ref{the:decomposition} to $f$ with these parameters, we obtain a decomposition $f = f_1 + f_2 + f_3$.
  Here, we have $f_1 = \Gamma(P)$ for some polynomial sequence $(P_1,\ldots,P_C)$, where $C \leq C_{\ref{the:decomposition}}(\gamma,\eta,d,r)$.
  Let $\caB$ be the factor defined by the polynomial sequence $(P_1,\ldots,P_C)$.

  We consider the function $f' = f'_1 + f'_2 + f'_3$, where $f'_i = f_i \circ A$ for each $i \in [3]$.
  Let $P'_i = P_i \circ A$ for each $i \in [C]$ and let $\caB'$ be the factor defined by the polynomial sequence $(P'_1,\ldots,P'_C)$.
  Note that $f_1 \circ A = \Gamma(P')$.
  Using the same argument as the proof for Claim 4.1 of~\cite{Hatami:2013ux},
  by choosing $k$ large enough as a function of $\epsilon$ and $d$,
  we have the following properties with a probability of at least $1-\epsilon$ over the choice of $A$.
  \begin{itemize}
  \itemsep=0pt
  \item $P'_i$ and $P_i$ have the same degree and depth for every $i \in [C]$.
  Moreover, $\caB'$ is $r$-regular.
  \item $\|f'_2\|_2 \leq 2\gamma$ and $\|f'_3\|_{U^d} \leq 2\eta(|\caB|)$.
  \end{itemize}

  Let $\widetilde{f} = f' \circ A^+$.
  Note that $\widetilde{f} $ can be expressed as $\widetilde{f}_1 + \widetilde{f}_2 + \widetilde{f}_3$, where $\widetilde{f}_i = f'_i \circ A^+$ for each $i\in [3]$.
  Also let $\widetilde{P}_i = P'_i \circ A^+$ for each $i \in [C]$.
  Note that $P_i$ and $\widetilde{P}_i$ have the same degree (at most $d$) and the same depth for each $i \in [C]$ since $\widetilde{P}_i = P'_i \circ A^+$ and $P'_i = \widetilde{P}_i \circ A$, and affine transformation only decreases or preserves degree and depth.
  Applying Lemma~\ref{lem:same-structure->close} to $f_1 = \Gamma(P)$ and $\widetilde{f}_1 = \Gamma(\widetilde{P})$,
  we have
  \[
    \| f_1 - \widetilde{f}_1\|_{U^d} \leq \frac{\epsilon}{3}.
  \]

  Thus,
  \begin{align*}
    \|f - \widetilde{f}\|_{U^d}
    & \leq
    \|f_1 - \widetilde{f}_1\|_{U^d} + \|f_2 - \widetilde{f}_2 \|_{U^d} + \|f_3 - \widetilde{f}_3\|_{U^d} \\
    & \leq
    \|f_1 - \widetilde{f}_1\|_{U^d} + \|f_2\|_2^{1/2^d} + \|f'_2 \circ A^+\|_{2}^{1/2^d} + \|f_3\|_{U^d} + \|f'_3 \circ A^+\|_{U^d}  \\
    & \leq
    \frac{\epsilon}{3} +  3\gamma^{1/2^d} + 3\eta(|\caB|) \tag{By Lemma~\ref{lem:affine-expansion}} \\
    & \leq \epsilon.
    \qedhere
  \end{align*}
\end{proof}

Let $f:\Fp^n \to \bbR$ be a function and $k \leq n$ be an integer.
Then, $\proj{f}{k}$ denotes a random function $f \circ A$,
where $A$ is chosen uniformly at random from an affine embedding $A:\Fp^k \to \Fp^n$.
The distribution of $\proj{f}{k}$ is determined by $\{t_\caL(f)\}$,
where $\caL$ is over all affine systems of $k$ linear forms, as shown in the following lemma.
\begin{lemma}[In the proof of Lemma~6.1 of~\cite{Hatami:2011ig}]\label{lem:approximate-by-tl}
  Let $f:\Fp^n \to [0,1]$, $\Gamma:[0,1]^k \to \bit$, and $\epsilon > 0$.
  Let $\mu$ be an arbitrary distribution over $(\Fp^n)^k$.
  If $n \geq n_{\ref{lem:approximate-by-tl}}(\epsilon,k)$,
  then the probability
  \[
    \Pr_{(x_1,\ldots,x_k) \sim \mu}[\Gamma(f(x_1),...,f(x_k)) = 1]
  \]
  can be approximated within an additive error of $\epsilon$ by a linear combination of $t_{\caL_1}(f),\ldots,t_{\caL_m}(f)$,
  where $\caL_1,\ldots,\caL_m$ are all possible affine systems of at most $k$ linear forms.
\end{lemma}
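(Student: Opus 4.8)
The plan is to rewrite the probability as the expectation $\E_{(x_1,\ldots,x_k)\sim\mu}[\Gamma(f(x_1),\ldots,f(x_k))]$, to approximate $\Gamma$ by a polynomial of degree bounded in terms of $\epsilon$ and $k$, and to expand the resulting expectation linearly so that every term becomes a value $t_\caL(f)$.

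First I would normalize $\mu$. In the situation of interest $\mu$ is the law of $(Az_1,\ldots,Az_k)$ for a fixed tuple $z_1,\ldots,z_k\in\Fp^{k'}$ (with $k'\le k$) and a uniformly random affine embedding $A:\Fp^{k'}\to\Fp^n$. Writing $A=L+c$ with $L$ linear and $c$ a shift, a uniformly random affine embedding and a uniformly random affine \emph{map} have laws at total-variation distance $O(p^{k'-n})$, so once $n\ge n_{\ref{lem:approximate-by-tl}}(\epsilon,k)$ I may, at a cost of at most $\epsilon/3$ in the probability, replace $\mu$ by the law $\mu'$ of $(M_1(w),\ldots,M_k(w))$, where $w$ is uniform on $(\Fp^n)^{k'+1}$ and $M_1,\ldots,M_k$ are fixed affine linear forms (the columns of $L$ and the shift $c$ being the coordinates of $w$). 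The point is that, under $\mu'$, $\E_{(x_1,\ldots,x_k)\sim\mu'}\bigl[\prod_{i} f(x_i)^{\alpha_i}\bigr]=t_{\caL_\alpha}(f)$, where $\caL_\alpha$ is the affine system built from each $M_i$ with multiplicity $\alpha_i$; this is a genuine set of at most $k$ linear forms whenever every $\alpha_i\le 1$, in particular whenever $f$ is $\bit$-valued, since then $f(x_i)^{\alpha_i}=f(x_i)$.

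Next, by the Weierstrass theorem I would choose a polynomial $Q(a_1,\ldots,a_k)=\sum_{|\alpha|\le D}c_\alpha\,a_1^{\alpha_1}\cdots a_k^{\alpha_k}$ of degree $D=D(\epsilon,k)$ with $\sup_{[0,1]^k}|\Gamma-Q|\le\epsilon/3$; since $f$ takes values in $[0,1]$, replacing $\Gamma$ by $Q$ perturbs the expectation under $\mu'$ by at most $\epsilon/3$. Expanding linearly,
\[
  \E_{\mu'}\bigl[Q(f(x_1),\ldots,f(x_k))\bigr]=\sum_{|\alpha|\le D}c_\alpha\,t_{\caL_\alpha}(f),
\]
and since there are only finitely many affine systems of at most $D$ linear forms, collecting equal ones rewrites the right-hand side as $\sum_{j=1}^{m}a_j\,t_{\caL_j}(f)$ with coefficients $a_j$ depending only on $\epsilon$, $k$, $\Gamma$, and $\mu$. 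Adding the two $\epsilon/3$ errors yields the claim.

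The hard part will be the polynomial approximation step: the degree $D$ and the coefficients $c_\alpha$ must be independent of $f$ and $n$, which forces $\Gamma$ to be continuous (with $D$ governed by its modulus of continuity), since a genuinely discontinuous $\bit$-valued $\Gamma$ admits no uniform polynomial approximation. This is precisely the regime in which the lemma gets used — one deduces equality of limiting distributions by testing against a separating family of continuous functions — and, alternatively, if $f$ has first been regularized so that the push-forward of $\mu$ under $(f,\ldots,f)$ is absolutely continuous with bounded density, an $L_1$ approximation of $\Gamma$ suffices instead. The only other delicate point is the reduction from a uniformly random affine embedding to a uniformly random affine map, which is exactly where the hypothesis $n\ge n_{\ref{lem:approximate-by-tl}}(\epsilon,k)$ is consumed.
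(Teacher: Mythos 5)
The paper does not actually prove this lemma; it imports it verbatim from the proof of Lemma~6.1 in~\cite{Hatami:2011ig}, so there is no in-paper argument to compare against. That said, your mechanism --- reduce a uniformly random affine embedding to a uniformly random affine map at total-variation cost $O(p^{k'-n})$, expand $\Gamma$ as a polynomial $Q=\sum_\alpha c_\alpha a^\alpha$, and read off each monomial as $t_{\caL_\alpha}(f)$ when no repeated arguments occur --- is the right one for the regime in which the lemma is invoked (Corollary~\ref{cor:t-convergent->statistically-close}), where $f$ is $\bit$-valued and $\mu$ is the law of a random affine configuration.

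But as written this is not yet a proof of the stated lemma, and by your own admission in the last paragraph it cannot be: the literal hypotheses ($f:\Fp^n\to[0,1]$, $\Gamma:[0,1]^k\to\bit$, $\mu$ arbitrary) each defeat the argument. A nonconstant $\bit$-valued $\Gamma$ has no uniform polynomial approximation on $[0,1]^k$, and smoothing by the push-forward of $\mu$ is unavailable when $f$ is constant near a discontinuity of $\Gamma$; a point-mass $\mu$ is manifestly not expressible in terms of $t_\caL(f)$'s, which average over uniform samples; and monomials $a_i^{\alpha_i}$ with $\alpha_i\geq 2$ do not correspond to any $t_\caL$ because systems of linear forms in this paper are sets, not multisets. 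You observe all three, but you leave them as caveats rather than resolving them, and the first step of the sketch silently restricts $\mu$ to the affine-configuration case without stating that the ``arbitrary $\mu$'' in the lemma has to be read that way. The fix --- and exactly what the application needs --- is to adopt those restricted hypotheses ($f$ is $\bit$-valued, $\mu$ is the law of $(M_1(w),\ldots,M_k(w))$ for a uniform $w$) and replace Weierstrass by exact multilinear interpolation of $\Gamma$ on $\bit^k$, i.e.\ $Q=\sum_{S\subseteq[k]}c_S\prod_{i\in S}a_i$ with coefficients depending only on $\Gamma$, so each term becomes $c_S\, t_{\{M_i:\, i\in S\}}(f)$ exactly and the only error is the $O(p^{k'-n})$ loss, which is what the hypothesis $n\geq n_{\ref{lem:approximate-by-tl}}(\epsilon,k)$ absorbs. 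That closes the sketch into a proof of the lemma as it is actually used.
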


\begin{corollary}\label{cor:t-convergent->statistically-close}
  Let $\epsilon > 0$, $d \in \bbN$, and $k \in \bbN$.
  There exist $n_{\ref{cor:t-convergent->statistically-close}}(\epsilon,d,k)$, $k' = k'_{\ref{cor:t-convergent->statistically-close}}(k)$, and $\delta = \delta_{\ref{cor:t-convergent->statistically-close}}(\epsilon,d,k)$ such that the following holds.
  Let $f:\Fp^n \to \bit$ and $g:\Fp^m \to \bit$ be functions with $\min(n,m) \geq n_{\ref{cor:t-convergent->statistically-close}}(\epsilon,d,k)$.
  If $|t_\caL(f) - t_\caL(g)| \leq \delta$ for any affine system $\caL$ of $k'$ linear forms, then the distributions $\proj{f}{k}$ and $\proj{g}{k}$ have a statistical distance of at most $\epsilon$.
\end{corollary}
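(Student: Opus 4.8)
The plan is to reduce the statement to the identity, for a fixed target function $h:\Fp^k\to\bit$, that $\Pr[\proj{f}{k}=h]$ is --- up to an error that vanishes as $n\to\infty$ --- a \emph{fixed} linear combination of the numbers $t_\caL(f)$ over affine systems $\caL$ of boundedly many linear forms, with coefficients depending on $h$ but not on $f$ or on its domain. Granting this, observe that $\proj{f}{k}$ and $\proj{g}{k}$ are distributions supported on the $2^{p^k}$ functions $h:\Fp^k\to\bit$, so
\[
  \dtv\bigl(\proj{f}{k},\proj{g}{k}\bigr)=\tfrac12\sum_{h:\Fp^k\to\bit}\bigl|\Pr[\proj{f}{k}=h]-\Pr[\proj{g}{k}=h]\bigr|,
\]
and it suffices to bound each of the $2^{p^k}$ summands by $\epsilon\cdot 2^{-p^k}$, which forces the statistical distance to be at most $\epsilon/2\le\epsilon$.

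So fix $h$, enumerate $\Fp^k=\{y_1,\dots,y_{p^k}\}$, and write $A$ for a uniformly random affine embedding $\Fp^k\to\Fp^n$. Then $\Pr[\proj{f}{k}=h]=\E_A[\Gamma_h(f(Ay_1),\dots,f(Ay_{p^k}))]$, where $\Gamma_h:\bit^{p^k}\to\bit$ is the indicator that the tuple equals $(h(y_1),\dots,h(y_{p^k}))$. Expanding $\Gamma_h$ as a multilinear polynomial in its $\bit$-valued inputs turns this into a signed sum over subsets $S\subseteq\Fp^k$ of the quantities $\E_A[\prod_{y\in S}f(Ay)]$ with $\{0,\pm1\}$-coefficients $c_S$ depending only on $h$. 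Writing a random affine embedding as a random affine map $L+c$ conditioned on $L$ being injective, and coordinatizing $S$ by a basis of its span in $\Fp^k$, one gets $\E_A[\prod_{y\in S}f(Ay)]=t_{\caL_S}(f)$ up to an additive error at most the total-variation distance between the embedding law and the unconstrained affine-map law, which is $\le p^{\,k-n}$; here $\caL_S$ is the explicit affine system of $|S|\le p^k$ linear forms read off from the $\Fp$-linear relations among the elements of $S$, hence independent of $n$ and of the function. This is exactly the content of Lemma~\ref{lem:approximate-by-tl} for this particular family of distributions $\mu$, with the essential addition that the approximating combination is the \emph{same} for $f$ and for $g$; establishing this universality is the only non-routine step of the argument, everything else being bookkeeping.

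Finally I would fix the parameters. Put $k':=p^k$ (a bound on $|\caL_S|$ and on the number of its variables), and let $M:=\sum_S|c_S|\le 2^{p^k}$. Choose $n_{\ref{cor:t-convergent->statistically-close}}(\epsilon,d,k)$ large enough that $2^{p^k}\cdot p^{\,k-n}\le \epsilon\cdot 2^{-p^k}/4$ whenever $n\ge n_{\ref{cor:t-convergent->statistically-close}}$ (no dependence on $d$ is in fact needed, though the statement permits it), and set $\delta=\delta_{\ref{cor:t-convergent->statistically-close}}(\epsilon,d,k):=\epsilon\cdot 2^{-p^k}/(4M)$, which is of order $\epsilon\cdot 4^{-p^k}$. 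If $\min(n,m)\ge n_{\ref{cor:t-convergent->statistically-close}}$ and $|t_\caL(f)-t_\caL(g)|\le\delta$ for every affine system $\caL$ of at most $k'$ linear forms, then for each $h$
\[
  \bigl|\Pr[\proj{f}{k}=h]-\Pr[\proj{g}{k}=h]\bigr|\;\le\;\tfrac{\epsilon}{2}\cdot 2^{-p^k}\;+\;\sum_S|c_S|\,\bigl|t_{\caL_S}(f)-t_{\caL_S}(g)\bigr|\;\le\;\tfrac{\epsilon}{2}\cdot 2^{-p^k}+M\delta\;\le\;\epsilon\cdot 2^{-p^k},
\]
where the first term accounts for the two approximation errors. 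Summing over the $2^{p^k}$ functions $h$ then yields $\dtv(\proj{f}{k},\proj{g}{k})\le\epsilon/2\le\epsilon$. The main obstacle, as noted, is the universality of the coefficients $c_S$ and systems $\caL_S$ across different domains; once that is pinned down, the estimate is routine.
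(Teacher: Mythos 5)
Your proof is correct and follows essentially the same route as the paper: express $\Pr[\proj{f}{k}=h]$ for each of the $2^{p^k}$ target functions $h$ as a linear combination (with coefficients depending only on $h$, not on $f$) of moments $t_\caL(f)$ over affine systems of at most $p^k$ linear forms, then subtract, sum over $h$, and pick $\delta$ and $n$ accordingly. The only difference is that the paper invokes Lemma~\ref{lem:approximate-by-tl} (from Hatami--Lovett) as a black box to get this representation, while you inline a self-contained derivation — multilinear expansion of $\Gamma_h$ over $\bit$-valued inputs, the passage from random affine embeddings to unconditioned random affine maps with $p^{k-n}$ error, and reading off the affine system $\caL_S$ from each subset $S\subseteq\Fp^k$ — which has the virtue of making the universality of the coefficients across $f$ and $g$ explicit rather than implicit in the cited lemma.
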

\begin{proof}
  For a function $h:\Fp^{k} \to \bit$,
  define the characteristic function $\Gamma_h:\bit^{p^{k}} \to \bit$ of $h$ as
  \[
    \Gamma_h(\{a_x\}_{x \in \Fp^{k}}) = \begin{cases}
      1 & \text{if } a_x = h(x), \\
      0 & \text{otherwise}.
    \end{cases}
  \]
  We choose $n_{\ref{cor:t-convergent->statistically-close}}(\epsilon,d,k) \geq n_{\ref{lem:approximate-by-tl}}(\epsilon/(4 \cdot 2^{p^{k}}), k)$.
  Then by Lemma~\ref{lem:approximate-by-tl},
  the probability that $\proj{f}{k}$ coincides with $h$ can be approximated as follows.
  \[
    \Pr[\proj{f}{k} = h] = \Pr_{x_0,x_1,\ldots,x_k \in \Fp^n}[\Gamma_h(\{f(x_0 + \sum_{i \in [k]}b_i x_i) : b_1,\ldots,b_k \in \Fp\}) = 1] = \sum_\caL \beta_\caL  t_\caL(f) \pm \frac{\epsilon}{4 \cdot 2^{p^{k}}},
  \]
  where $\caL$ is over all possible affine systems of $p^{k}$ linear forms.
  Then,
  \[
    |\Pr[\proj{f}{k} = h]  - \Pr[\proj{g}{k} = h]| \leq \sum_\caL \beta_\caL |t_\caL(f)- t_\caL(g)| \pm \frac{\epsilon}{2 \cdot 2^{p^{k}}}.
  \]
  Let $N = N(k)$ be the number of all possible affine systems of $p^{k}$ linear forms.
  By choosing $\delta_{\ref{cor:t-convergent->statistically-close}}(\epsilon,d,k) = \epsilon / N$ and $k'_{\ref{cor:t-convergent->statistically-close}} = p^{k}$,
  the statistical distance between $\proj{f}{k}$ and $\proj{g}{k}$ becomes at most $(2^{p^{k}} \cdot \epsilon/2^{p^{k}} + \delta N)/2  = \epsilon$.
\end{proof}

We can finally show that $t$-convergence implies $\upsilon$-convergence.
\begin{lemma}\label{lem:gowers-distance-wrt-tl}
  Let $\epsilon  > 0$ and $d \in \bbN$.
  There exists $n_{\ref{lem:gowers-distance-wrt-tl}}(\epsilon,d)$,
  $k = k_{\ref{lem:gowers-distance-wrt-tl}}(\epsilon,d)$,
  and $\delta = \delta_{\ref{lem:gowers-distance-wrt-tl}}(\epsilon,d)$ such that the following holds.
  Let $f: \Fp^n \to \bit$ and $g:\Fp^m \to \bit$ be functions with $n \geq n_{\ref{lem:gowers-distance-wrt-tl}}(\epsilon,d)$.
  If $|t_\caL(f) - t_\caL(g)| \leq \delta$ for any affine system $\caL$ of $k$ linear forms,
  then we have
  \[
    \upsilon^{d}(\nst{f},\nst{g}) \leq \epsilon.
  \]
\end{lemma}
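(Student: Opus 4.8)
The plan is to compare $f$ and $g$ after restricting both to a random affine subspace of a common constant dimension $k$. Lemma~\ref{lem:distance-to-random-sample} says such a restriction is $\upsilon^d$-close to the original, while the hypothesis on the $t_\caL$'s, run through Corollary~\ref{cor:t-convergent->statistically-close}, says the two restrictions are close \emph{in distribution} and hence can be coupled to be literally equal. The triangle inequality in the $\upsilon^d$-metric then closes the loop.

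In detail, I would set $k := k_{\ref{lem:distance-to-random-sample}}(\epsilon/3,d)$, invoke Corollary~\ref{cor:t-convergent->statistically-close} with parameters $\epsilon/3$, $d$, $k$ to obtain $n_{\ref{cor:t-convergent->statistically-close}}$, $k' := k'_{\ref{cor:t-convergent->statistically-close}}(k)$ and $\delta := \delta_{\ref{cor:t-convergent->statistically-close}}(\epsilon/3,d,k)$, and then declare $n_{\ref{lem:gowers-distance-wrt-tl}}(\epsilon,d) := \max\{k,\, n_{\ref{cor:t-convergent->statistically-close}}(\epsilon/3,d,k)\}$, $k_{\ref{lem:gowers-distance-wrt-tl}}(\epsilon,d) := k'$, $\delta_{\ref{lem:gowers-distance-wrt-tl}}(\epsilon,d) := \delta$. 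Since Corollary~\ref{cor:t-convergent->statistically-close} also wants $m$ above its threshold, I first dispose of small $m$: replacing $g$ by a blow-up $g\circ B$ along a surjective affine $B:\Fp^{m'}\to\Fp^m$ with $m'$ large leaves $t_\caL(g)$ unchanged for every affine system $\caL$ (the image $Bx$ is uniform) and leaves $\nst{g}$ unchanged up to $\upsilon^d$-distance $0$ (as in the proof of Lemma~\ref{lem:distance-well-defined}, using Lemma~\ref{lem:affine-expansion}), so we may assume $m\ge n_{\ref{cor:t-convergent->statistically-close}}(\epsilon/3,d,k)$ as well; note also $n\ge k$, so embeddings $\Fp^k\to\Fp^n$ exist.

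Now choose affine embeddings $A:\Fp^k\to\Fp^n$ and $B:\Fp^k\to\Fp^m$ from a maximal coupling of the restriction distributions: draw $(h_1,h_2)$ from a coupling of $\proj{f}{k}$ and $\proj{g}{k}$ achieving $\Pr[h_1\ne h_2]=\dtv(\proj{f}{k},\proj{g}{k})$, then let $A$ be uniform among embeddings with $f\circ A=h_1$ and $B$ uniform among embeddings with $g\circ B=h_2$. Marginally $A$ and $B$ are uniformly random embeddings, so Lemma~\ref{lem:distance-to-random-sample} (applied with $\epsilon/3$, $d$) gives $\upsilon^d(\nst{f},\nst{(f\circ A)})\le\epsilon/3$ outside an event of probability $\le\epsilon/3$, and likewise $\upsilon^d(\nst{g},\nst{(g\circ B)})\le\epsilon/3$ outside probability $\le\epsilon/3$; meanwhile Corollary~\ref{cor:t-convergent->statistically-close} gives $\Pr[f\circ A\ne g\circ B]=\dtv(\proj{f}{k},\proj{g}{k})\le\epsilon/3$. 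A union bound leaves probability at least $1-\epsilon>0$ on which all three good events hold; there $f\circ A=g\circ B$, which gives $\upsilon^d(\nst{(f\circ A)},\nst{(g\circ B)})=0$, so by the triangle inequality $\upsilon^d(\nst{f},\nst{g})\le\epsilon/3+0+\epsilon/3\le\epsilon$. Since $\upsilon^d(\nst{f},\nst{g})$ is a fixed number independent of the sampling, the bound holds outright.

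Almost all the substance is imported from Lemma~\ref{lem:distance-to-random-sample} and Corollary~\ref{cor:t-convergent->statistically-close}, so the only point needing real care is the coupling: one must couple $\proj{f}{k}$ with $\proj{g}{k}$ \emph{and} keep $A$ and $B$ each uniformly distributed over affine embeddings, since that uniformity is precisely the hypothesis of Lemma~\ref{lem:distance-to-random-sample}. Resampling $A$ uniformly inside the fibre $\{A:f\circ A=h_1\}$ and $B$ uniformly inside $\{B:g\circ B=h_2\}$ after drawing $(h_1,h_2)$ achieves this, all the probability spaces involved being finite; I expect this (together with the small-$m$ reduction) to occupy the bulk of the otherwise short write-up.
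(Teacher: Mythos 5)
Your proof follows the same route as the paper's: restrict $f$ and $g$ to random $k$-dimensional affine subspaces, apply Lemma~\ref{lem:distance-to-random-sample} to bring each function $\upsilon^d$-close to its restriction, use Corollary~\ref{cor:t-convergent->statistically-close} to couple the restriction distributions, and close with the triangle inequality. You additionally spell out two points the paper leaves implicit --- arranging the coupling so that each of $A$ and $B$ stays marginally uniform (which is required to invoke Lemma~\ref{lem:distance-to-random-sample}) and blowing up $g$ when $m$ is below the threshold of Corollary~\ref{cor:t-convergent->statistically-close} --- so this is a more careful write-up of the same argument.
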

\begin{proof}
  Let $k' = k_{\ref{lem:distance-to-random-sample}}(\epsilon/3,d)$, and set $n_{\ref{lem:gowers-distance-wrt-tl}}(\epsilon,d ) \geq k'$.
  By Lemma~\ref{lem:distance-to-random-sample}, we have $\upsilon^{d}(\nst{f},\nst{(\proj{f}{k'})}) \leq \epsilon/3$ and $\upsilon^{d}(\nst{g},\nst{(\proj{g}{k'})}) \leq \epsilon/3$ with a probability of at least $1-\epsilon/3$.

  Now we consider the distance $\upsilon^{d}(\nst{(\proj{f}{k'})}, \nst{(\proj{g}{k'})})$.
  We set $\delta_{\ref{lem:gowers-distance-wrt-tl}}(\epsilon,d) = \delta_{\ref{cor:t-convergent->statistically-close}}(\epsilon/3,d,k')$ and $k_{\ref{lem:gowers-distance-wrt-tl}}(\epsilon,d)  = k_{\ref{cor:t-convergent->statistically-close}}(k')$ and $n_{\ref{lem:gowers-distance-wrt-tl}}(\epsilon,d ) \geq n_{\ref{cor:t-convergent->statistically-close}}(\epsilon/3,d,k')$.
  By Corollary~\ref{cor:t-convergent->statistically-close},
  the statistical distance between $\proj{f}{k'}$ and $\proj{g}{k'}$ is at most $\epsilon/3$.
  Hence, we can couple $\proj{f}{k'}$ and $\proj{g}{k'}$ so that $\proj{f}{k'} = \proj{g}{k'}$ holds with a probability of at least $1 - \epsilon/3$.

  By the union bound, these events happen simultaneously with a probability of at least $1-\epsilon$.
  Hence, there exist affine embedding $A:\Fp^k \to \Fp^n$ and $A': \Fp^k \to \Fp^m$ such that
  \begin{align*}
    \upsilon^d(f,g)
    & \leq
    \upsilon^d(\nst{f},\nst{(f\circ A)}) + \upsilon^d(\nst{(f\circ A)}, \nst{(g\circ A')}) + \upsilon^d(\nst{g},\nst{(g\circ A')}) \tag{by the triangle inequality} \\
    & \leq \frac{\epsilon}{2} + 0 + \frac{\epsilon}{2} = \epsilon,
  \end{align*}
  which implies the lemma.
\end{proof}

\begin{corollary}\label{cor:t-convergent->u-convergent}
  Let $(f_i:\Fp^{n_i} \to \bit)$ be a sequence of functions.
  If the sequence $(f_i)$ is $t$-convergent to $\bff:\bfF \to \bit$, then the sequence $(\nst{f_i})$ is $\upsilon$-convergent to $\bff$.
\end{corollary}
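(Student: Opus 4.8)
The plan is to derive the statement directly from the quantitative estimate of Lemma~\ref{lem:gowers-distance-wrt-tl}, using a lifting of $\bff$ to translate between the abstract limit and ordinary finite functions. Fix $d \in \bbN$ and $\epsilon > 0$; it suffices to produce $N$ with $\upsilon^d(\nst{f_i},\bff) \le \epsilon$ for all $i \ge N$. Set $k := k_{\ref{lem:gowers-distance-wrt-tl}}(\epsilon,d)$, $\delta := \delta_{\ref{lem:gowers-distance-wrt-tl}}(\epsilon,d)$, $n_0 := n_{\ref{lem:gowers-distance-wrt-tl}}(\epsilon,d)$; since $t_\caL$ is unchanged by affine bijections and by padding with dummy variables, only finitely many affine systems need be considered, say $\caL_1,\dots,\caL_M$ (the systems of at most $k$ linear forms in at most $k$ variables). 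Let $(g_l : \Fp^l \to \bit)_l$ be a lifting of $\bff$ (Lemma~\ref{lem:measurable-ultra-function->sequence-of-measurable-function}; it may be taken $\bit$-valued because $\bff$ is). By Lemma~\ref{lem:tL-exchange}, $t_{\caL_m}(\bff) = \st(\lim_{l\to\omega}t_{\caL_m}(g_l))$ for every $m$, so each set $S_m := \{\, l \ge n_0 : |t_{\caL_m}(g_l) - t_{\caL_m}(\bff)| \le \delta/2 \,\}$ belongs to $\omega$.

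The crucial reduction is that, for every $i$,
\[
  \upsilon^d(\nst{f_i},\bff) \;\le\; \st\Bigl(\lim_{l\to\omega}\,\upsilon^d\bigl(f_i\circ A^{(i)}_l,\; g_l\bigr)\Bigr),
\]
where $A^{(i)}_l:\Fp^l\to\Fp^{n_i}$ is the rank-$\min(l,n_i)$ map realizing $\nst{f_i}=\st(\lim_{l\to\omega}f_i\circ A^{(i)}_l)$, and the inner $\upsilon^d$ is the ordinary distance between the $\bit$-valued functions $f_i\circ A^{(i)}_l$ and $g_l$ on their common domain $\Fp^l$ (infimum of $\|f_i\circ A^{(i)}_l - g_l\circ X\|_{U^d}$ over affine bijections $X$ of $\Fp^l$). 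To see this, for each $l$ choose an affine bijection $X_l$ of $\Fp^l$ attaining $\upsilon^d(f_i\circ A^{(i)}_l, g_l)$ and set $\bfX:=\lim_{l\to\omega}X_l\in\Aff(\bfF)$; then $(f_i\circ A^{(i)}_l - g_l\circ X_l)_l$ is a bounded lifting of $\nst{f_i}-\bff\circ\bfX$, so Lemma~\ref{lem:tL-exchange} yields $\|\nst{f_i}-\bff\circ\bfX\|_{U^d}=\st(\lim_{l\to\omega}\|f_i\circ A^{(i)}_l - g_l\circ X_l\|_{U^d})$, which is the right-hand side above, and $\upsilon^d(\nst{f_i},\bff)\le\|\nst{f_i}-\bff\circ\bfX\|_{U^d}$ by definition.

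To finish, recall that $t_\caL(f_i\circ A^{(i)}_l)=t_\caL(f_i)$ whenever $A^{(i)}_l$ is surjective, i.e.\ $l\ge n_i$ (cf.\ the proof of Lemma~\ref{lem:t-cal-equivalence}). Since $(f_i)$ is $t$-convergent to $\bff$, we have $\lim_{i\to\infty}t_{\caL_m}(f_i)=t_{\caL_m}(\bff)$ for every $m$, so there is $N$ such that for all $i\ge N$ we have $n_i\ge n_0$ and $|t_{\caL_m}(f_i)-t_{\caL_m}(\bff)|\le\delta/2$ for every $m$. Fix $i\ge N$. For every $l$ in $\{l\ge n_i\}\cap\bigcap_{m}S_m$ — a finite intersection of members of $\omega$, hence itself in $\omega$ — we have $|t_{\caL_m}(f_i\circ A^{(i)}_l)-t_{\caL_m}(g_l)|\le|t_{\caL_m}(f_i)-t_{\caL_m}(\bff)|+|t_{\caL_m}(\bff)-t_{\caL_m}(g_l)|\le\delta$ for every $m$, with both functions on $\Fp^l$ and $l\ge n_0$; hence $\upsilon^d(f_i\circ A^{(i)}_l,g_l)\le\epsilon$ by Lemma~\ref{lem:gowers-distance-wrt-tl}. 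As this holds on an $\omega$-set of $l$'s, the displayed estimate gives $\upsilon^d(\nst{f_i},\bff)\le\epsilon$. Since $d$ and $\epsilon$ were arbitrary, $(\nst{f_i})$ is $\upsilon$-convergent to $\bff$.

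I expect the main obstacle to be the displayed reduction: one has to unwind the nested $\st(\lim_\omega(\cdot))$ in the definition of $\upsilon^d$ on function limits, commute it past $\|\cdot\|_{U^d}$ (via Lemma~\ref{lem:tL-exchange}, together with Proposition~\ref{pro:one-to-one} for the map $x\mapsto|x|^{1/2^d}$) and past the infimum over $\Aff(\bfF)$, exploiting that each finite minimization over affine bijections of $\Fp^l$ is attained — the bookkeeping is the same as in the proofs of Lemmas~\ref{lem:distance-well-defined} and~\ref{lem:tL-exchange}. Everything else is routine: the finiteness of $\{\caL_1,\dots,\caL_M\}$, the fact that honest convergence $t_{\caL_m}(f_i)\to t_{\caL_m}(\bff)$ passes both to all large $i$ and (by Lemma~\ref{lem:tL-exchange}) to the $\omega$-limit of the lifting, and the closure of $\omega$ under finite intersections. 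Alternatively, once one knows the $\upsilon^d$-metric is complete, the corollary is immediate: Lemma~\ref{lem:gowers-distance-wrt-tl} makes $(\nst{f_i})$ Cauchy in each $\upsilon^d$, its limit $\bfh$ satisfies $t_\caL(\bfh)=t_\caL(\bff)$ for all $\caL$ by Corollary~\ref{cor:u-convergent->t-convergent}, and hence $\bfh=\bff$ in the metric.
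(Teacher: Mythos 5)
Your argument follows the same route as the paper: reduce everything to the quantitative estimate of Lemma~\ref{lem:gowers-distance-wrt-tl}. The paper's proof is only two sentences and simply asserts that Lemma~\ref{lem:gowers-distance-wrt-tl} gives $\upsilon^d(\nst{f_i},\bff)\to 0$; it glosses over the fact that $\bff$ is an abstract function limit, not of the form $\nst{g}$ for a single function $g$. Your proof fills exactly that gap by taking a lifting $(g_l)$ of $\bff$ and passing through the ``crucial reduction'' $\upsilon^d(\nst{f_i},\bff)\le\st(\lim_{l\to\omega}\upsilon^d(f_i\circ A^{(i)}_l,g_l))$, which is the right thing to do and is correctly justified via Lemma~\ref{lem:tL-exchange}, Proposition~\ref{pro:one-to-one}, and the finiteness of the minimization over $\Aff(\Fp^l)$.

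There is, however, one technical point you should flag. In the last step you apply Lemma~\ref{lem:gowers-distance-wrt-tl} to the pair $(f_i\circ A^{(i)}_l,\,g_l)$ on $\Fp^l$ and conclude the \emph{finite-dimensional} bound $\upsilon^d(f_i\circ A^{(i)}_l,g_l)\le\epsilon$, which is what your reduction demands. But as stated, Lemma~\ref{lem:gowers-distance-wrt-tl} concludes only the starred inequality $\upsilon^d(\nst{(f_i\circ A^{(i)}_l)},\nst{g_l})\le\epsilon$, and this is in general \emph{weaker}: one always has $\upsilon^d(\nst{\phi},\nst{\psi})\le\upsilon^d(\phi,\psi)$ for $\phi,\psi$ on a common $\Fp^l$ (lift the optimal bijection), and the inequality can a priori be strict. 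So the lemma, read literally, does not yield what you cite it for. Fortunately the \emph{proof} of Lemma~\ref{lem:gowers-distance-wrt-tl} does give the stronger finite-dimensional bound when $\phi,\psi$ live on a common domain: the coupling produces embeddings $A,A'\colon\Fp^{k'}\to\Fp^l$ with $\phi\circ A=\psi\circ A'=:\chi$, together with $\|\phi-\chi\circ A^{+}\|_{U^d}\le\epsilon/3$ and $\|\psi-\chi\circ (A')^{+}\|_{U^d}\le\epsilon/3$; choosing an affine bijection $Z$ of $\Fp^l$ with $(A')^{+}\circ Z=A^{+}$ (possible since both are surjections onto $\Fp^{k'}$ with kernels of equal dimension) and using Lemma~\ref{lem:affine-expansion} to preserve $U^d$ under precomposition by $Z$ gives $\|\phi-\psi\circ Z\|_{U^d}\le 2\epsilon/3$, i.e.\ $\upsilon^d(\phi,\psi)\le\epsilon$. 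Adding a sentence to this effect would make your proof airtight; as written, you invoke the statement of Lemma~\ref{lem:gowers-distance-wrt-tl} for a conclusion it does not literally contain.

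The rest — the reduction to finitely many affine systems, the $\omega$-set bookkeeping via $S_m$ and closure under finite intersection, the use of $t_\caL(f_i\circ A^{(i)}_l)=t_\caL(f_i)$ for $l\ge n_i$, and the final monotonicity of $\st\circ\lim_\omega$ — is all correct and well organized.
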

\begin{proof}
  If the sequence $(f_i)$ is $t$-convergent to $\bff$, then for any finite affine system $\caL$ of linear forms, $(t_\caL(f_i))$ is convergent to $t_\caL(\bff)$.
  Hence, by Lemma~\ref{lem:gowers-distance-wrt-tl}, $(\nst{f_i})$ converges to $\bff$ in the $\upsilon^d$-metric for every $d \in \bbN$, which means that $(\nst{f_i})$ is $\upsilon$-convergent to $\bff$.
\end{proof}

\subsection{Other properties}
This section discusses other properties of the $\upsilon^d$-metrics.
First, we show that any function limit can be realized as a limit of functions in terms of $t$-convergence.
\begin{lemma}\label{lem:realize-as-limit}
  For any function limit $\bff: \bfF \to \bit$, there exists a sequence of functions $(f_i:\Fp^{n_i} \to \bit)$ that $t$-converges to $\bff$.
\end{lemma}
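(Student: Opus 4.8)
The plan is to reduce the statement to the $t$-convergence machinery already developed in this section by producing, for each $n$, a single function $f_n : \Fp^n \to \bit$ that approximately matches $\bff$ on all affine systems of linear forms of bounded size. Concretely, I would fix an enumeration $\caL_1, \caL_2, \ldots$ of all finite affine systems of linear forms, and for a given target precision $1/j$ choose a finite set of systems $\caL_1,\ldots,\caL_{m(j)}$ and a dimension $n(j)$ so that whenever $|t_{\caL_s}(h) - t_{\caL_s}(\bff)| \le \delta(j)$ for all $s \le m(j)$ and $h:\Fp^N\to\bit$ with $N \ge n(j)$, then $|t_{\caL_s}(h) - t_{\caL_s}(\bff)| \le 1/j$ for all $s \le j$; this is just bookkeeping. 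The real content is: for every $n$ large enough, there exists a genuine Boolean function on $\Fp^n$ whose $t$-values track those of $\bff$.

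To get such a function I would invoke Lemma~\ref{lem:measurable-ultra-function->sequence-of-measurable-function}: since $\bff \in \caF_\bit$ is $\mu$-measurable, it has a lifting $(g_i : \Fp^i \to \bbR)$ with $\bff = \st(\lim_{i\to\omega} g_i)$ almost everywhere, and because $\bff$ is $\bit$-valued and bounded, the lifting can be taken uniformly bounded; rounding each $g_i$ to $\{0,1\}$ (which only changes each $t_\caL(g_i)$ by an amount that is infinitesimal in the ultralimit) we may assume each $g_i : \Fp^i \to \bit$. By Lemma~\ref{lem:tL-exchange}, for every finite affine system $\caL$ we have $t_\caL(\bff) = \st(\lim_{i\to\omega} t_\caL(g_i))$, so for each fixed $\caL$ and each $\delta>0$ the set of $i$ with $|t_\caL(g_i) - t_\caL(\bff)| \le \delta$ lies in $\omega$; intersecting finitely many such sets (one per $\caL$ in the finite batch $\caL_1,\ldots,\caL_{m(j)}$) and with $\{i : i \ge n(j)\}$, which also lies in $\omega$ since $\omega$ is non-principal, yields a set $S_j \in \omega$ of indices $i$ for which $g_i$ is a good approximation at level $j$. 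In particular $S_j$ is nonempty (indeed infinite), so I may pick $i_j \in S_j$, set $n_j = i_j$ and $f_j = g_{i_j}$, and — choosing the $i_j$ strictly increasing, which is possible since each $S_j$ is infinite — obtain a sequence $(f_j : \Fp^{n_j} \to \bit)$ with $t_\caL(f_j) \to t_\caL(\bff)$ for every finite affine system $\caL$. That is exactly $t$-convergence of $(f_j)$ to $\bff$.

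The main obstacle is the order-of-quantifiers subtlety: the lifting gives, for each individual $\caL$, a large set of good indices, but we need infinitely many indices that are simultaneously good for larger and larger finite batches of systems. This is handled precisely by the diagonalization in the previous paragraph — finite intersections of $\omega$-sets are in $\omega$ hence infinite, so the nested sets $S_1 \supseteq S_2 \supseteq \cdots$ (after intersecting with tails) each remain infinite and we can thread an increasing sequence through them. A secondary point to check carefully is the rounding step: replacing a real-valued lifting $g_i$ by a $\bit$-valued one must not spoil the conclusion $t_\caL(\bff) = \st(\lim_{i\to\omega} t_\caL(g_i))$; but since $\bff$ is already $\bit$-valued $\mu$-a.e., the lifting satisfies $\lim_{i\to\omega}\|g_i - \mathrm{round}(g_i)\|_1 = 0$ (its standard part is $\int_\bfF |\bff - \bff|\,d\mu = 0$ by Lemma~\ref{lem:exchange-limit-and-int-sub}), and then Lemma~\ref{lem:bound-gowers-norm-by-l1-norm} together with the telescoping estimate from the proof of Lemma~\ref{lem:tl-difference-by-gowers-norm-difference} shows each $t_\caL$ is perturbed only infinitesimally. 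Everything else is routine.
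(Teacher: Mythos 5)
Your proof is correct and follows the same route as the paper's: lift $\bff$ to a sequence via Lemma~\ref{lem:measurable-ultra-function->sequence-of-measurable-function}, observe via Lemma~\ref{lem:tL-exchange} that the set of indices where the lifting's $t_\caL$-values approximate $t_\caL(\bff)$ lies in $\omega$, and thread an increasing sequence through the finite intersections of these $\omega$-large (hence infinite) sets. You are in fact more careful than the paper in two places. The paper asserts that $I_\caL := \{i : t_\caL(f_i) = t_\caL(\bff)\}$ lies in $\omega$, but the identity $t_\caL(\bff) = \st(\lim_{i\to\omega} t_\caL(f_i))$ only makes the hyperreal $\lim_{i\to\omega}t_\caL(f_i)$ infinitesimally close to the standard real $t_\caL(\bff)$; it does not give exact equality on an $\omega$-large set. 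Your approximate version (with tolerance $\delta(j) \to 0$, so that for each standard $\epsilon>0$ the set $\{i : |t_\caL(g_i) - t_\caL(\bff)| < \epsilon\}$ lies in $\omega$) is what is actually available and is the right repair. The paper also simply starts from a $\bit$-valued lifting, whereas Lemma~\ref{lem:measurable-ultra-function->sequence-of-measurable-function} only promises a uniformly bounded real-valued one; your rounding step closes this gap, though for the infinitesimal perturbation of $t_\caL$ you could bypass Lemma~\ref{lem:bound-gowers-norm-by-l1-norm} and true complexity entirely via the elementary telescoping bound $|t_\caL(f)-t_\caL(g)| \le |\caL| \cdot \|f-g\|_1$, valid for $[0,1]$-valued $f,g$ and affine $\caL$ since each affine form pushes the uniform distribution on $(\Fp^n)^k$ forward to the uniform distribution on $\Fp^n$.
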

\begin{proof}
  Let $(f_i:\Fp^i \to \bit)$ be a lifting of $\bff$.
  Then for any system of linear forms $\caL$,
  we have $t_\caL(\bff) = \st(\lim\limits_{i \to \omega}t_\caL(f_i))$ by Lemma~\ref{lem:tL-exchange}.
  This means that the set $I_\caL := \{i \in \bbN : t_\caL(f_i) = t_\caL(\bff)\}$ is contained in $\omega$.
  Consider an arbitrary order $\caL_1, \caL_2,\ldots$ of all possible finite affine systems of linear forms.
  We inductively construct a sequence of integers $I^k = (i^k_1,i^k_2,\ldots)$ for each integer $k \geq 0$  as follows.
  First, we set $I^0 = \bbN$.
  Then, for each $k \in \bbN$, we define $I^k = I^{k-1} \cap I_{\caL_k}$.
  Note that $I^k \in \omega$ since a filter is closed under taking intersections.
  Furthermore, since $\omega$ is a non-principal filter, $I^k$ is an infinite sequence of integers.
  Let $(f'_j)$ be the sequence of functions defined as $f'_j = f_{i^j_j}$.
  For any $k \in \bbN$ and $j \geq k$,
  we have $t_{\caL_k}(f'_j) = t_{\caL_k}(\bff)$.
  Hence, the sequence $(f'_j)$ $t$-converges to $\bff$.
\end{proof}
From Theorem~\ref{the:convergent=cauchy}, this also means that any function limit $\bff:\bfF \to \bit$ has a sequence of functions $(f_i:\Fp^{n_i} \to \bit)$ such that the sequence $(\nst{f_i})$ $\upsilon$-converges to $\bff$.

Next, to show that the $\upsilon^d$-metric is compact for any $d \in \bbN$,
we show the following, stronger, property.
\begin{lemma}\label{lem:strong-compact}
  Let $(\bff_i \in \caF_{\bit})_{i \in \bbN}$ be a sequence of function limits.
  Then, there exists a subsequence of $(\bff_i)$ that $\upsilon$-converges.
\end{lemma}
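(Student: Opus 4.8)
The plan is to use a diagonalization argument via $t$-convergence, which is available to us because the relevant "moments" $t_\caL$ take values in a compact set and there are only countably many finite affine systems of linear forms. First I would observe that for every function limit $\bff_i$ and every finite affine system of linear forms $\caL$, the value $t_\caL(\bff_i)$ lies in $[-1,1]$ (since $\bff_i$ is $\bit$-valued, the integrand is bounded by $1$ in absolute value). Now enumerate all finite affine systems of linear forms as $\caL_1,\caL_2,\ldots$; this is possible since each $\caL_j$ is a finite subset of $\bigcup_k \Fp^k$. By a standard diagonal extraction: from $(\bff_i)$ pick a subsequence along which $t_{\caL_1}(\bff_i)$ converges (Bolzano–Weierstrass in $[-1,1]$), then a further subsequence along which $t_{\caL_2}(\bff_i)$ converges, and so on; the diagonal subsequence, call it $(\bff_{i_s})_s$, has the property that $t_{\caL_j}(\bff_{i_s})$ converges as $s\to\infty$ for every $j$. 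Thus $(\bff_{i_s})$ is $t$-convergent.

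Next I would produce a genuine limit object and invoke the equivalence between $t$-convergence and $\upsilon$-convergence. By Lemma~\ref{lem:measurable-ultra-function->sequence-of-measurable-function}, each $\bff_i$ has a lifting $(f_{i,1},f_{i,2},\ldots)$ of $\bit$-valued functions (using the boundedness clause so the lifting can be taken $\bit$-valued after rounding, or at least $[0,1]$-valued and then truncated on a null set). Actually, the cleanest route is: by Lemma~\ref{lem:realize-as-limit} applied to each $\bff_i$—or more directly, by picking for each $s$ a single $\bit$-valued function $f'_s:\Fp^{n_s}\to\bit$ with $t_{\caL_j}(f'_s)$ close to $t_{\caL_j}(\bff_{i_s})$ for $j\le s$ within error $1/s$ (possible since a lifting of $\bff_{i_s}$ realizes these $t$-values in the limit, hence for $\omega$-many, hence for at least one, index)—one obtains a sequence of honest functions $(f'_s:\Fp^{n_s}\to\bit)$ that is $t$-convergent, with $\lim_s t_{\caL_j}(f'_s)=\lim_s t_{\caL_j}(\bff_{i_s})$ for every $j$. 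Now form a lifting-style ultralimit $\bfg:=\st(\lim_{s\to\omega}f'_s)\in\caF_\bit$; by Lemma~\ref{lem:tL-exchange} (or Lemma~\ref{lem:t-cal-equivalence}-type reasoning) we have $t_{\caL}(\bfg)=\st(\lim_{s\to\omega}t_\caL(f'_s))$, and since the sequence $(t_\caL(f'_s))_s$ converges in the usual sense, its $\omega$-limit equals its ordinary limit; hence $(f'_s)$ $t$-converges to $\bfg$. By Corollary~\ref{cor:t-convergent->u-convergent}, $(\nst{f'_s})$ is $\upsilon$-convergent to $\bfg$, i.e.\ Cauchy in every $\upsilon^d$-metric.

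Finally I would transfer $\upsilon$-convergence of $(\nst{f'_s})$ back to the original subsequence $(\bff_{i_s})$. Since $\bff_{i_s}$ and $\nst{f'_s}$ have $t_\caL$-values agreeing in the $\omega$-limit for every affine system $\caL$ (both equal $\lim_s$ of the common convergent sequences, up to the $1/s$ slack which vanishes), Lemma~\ref{lem:gowers-distance-wrt-tl} shows $\upsilon^d(\nst{f'_s},\bff_{i_s})\to 0$ for each $d$—more precisely, given $\epsilon,d$, for $s$ large enough all the finitely many $t_\caL$ with $\caL$ an affine system of $k_{\ref{lem:gowers-distance-wrt-tl}}(\epsilon,d)$ linear forms agree within $\delta_{\ref{lem:gowers-distance-wrt-tl}}(\epsilon,d)$, hence $\upsilon^d(\nst{f'_s},\bff_{i_s})\le\epsilon$. (One must be slightly careful that $\bff_{i_s}$ is a general function limit, not of the form $\nst{h}$; but Lemma~\ref{lem:gowers-distance-wrt-tl} and its supporting lemmas are stated via liftings and $t$-values, and a lifting of $\bff_{i_s}$ gives honest functions whose $t$-values approximate those of $\bff_{i_s}$, so the same bound applies after one more approximation step.) Combining with the triangle inequality, $(\bff_{i_s})$ is Cauchy in every $\upsilon^d$-metric and converges to $\bfg$, completing the proof.

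The main obstacle I anticipate is the bookkeeping in passing between the abstract function limits $\bff_i$ (which need not be of the form $\nst{h}$) and honest finite-domain functions, since Lemma~\ref{lem:gowers-distance-wrt-tl} and Corollary~\ref{cor:t-convergent->u-convergent} are phrased for sequences $(\nst{f_i})$; the fix is to route everything through liftings and the fact that $t_\caL$-values of $\bff_i$ are $\omega$-limits of $t_\caL$-values of the lifting, realizing them by individual functions with vanishing error. Everything else is routine diagonalization plus the already-established $t$-vs-$\upsilon$ equivalence.
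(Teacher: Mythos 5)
Your proposal is correct and follows essentially the same route as the paper: diagonalize to extract a $t$-convergent subsequence, realize each $\bff_{i_s}$ by honest functions via liftings, pass to $\upsilon$-convergence of the honest realizations using the $t$-vs-$\upsilon$ equivalence, and transfer back to $(\bff_{i_s})$ via the triangle inequality. The only deviation is that you choose $f'_s$ to be only $t$-close to $\bff_{i_s}$ and then try to derive $\upsilon$-closeness after the fact via Lemma~\ref{lem:gowers-distance-wrt-tl}; that lemma is stated for two honest functions, and your acknowledged ``one more approximation step'' --- bounding $\upsilon^d(\nst{h_k},\bff_{i_s})$ for a lifting element $h_k$ --- in fact requires the $\upsilon$-convergence of the realization sequence (Lemma~\ref{lem:realize-as-limit} together with Theorem~\ref{the:convergent=cauchy}), at which point one may as well choose $f'_s$ to be simultaneously $t$-close and $\upsilon^{d}$-close to $\bff_{i_s}$ for all $d\le s$, which is exactly what the paper's construction of $g_i$ does and what makes the transfer step immediate.
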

\begin{proof}
  Let $(\bff^i)$ be a sequence of function limits in $\caF_{\bit}$.
  We want to construct a subsequence that has a limit in $\caF_{\bit}$.
  First, we construct a subsequence that $t$-converges as follows.
  Consider an arbitrary order $\caL_1,\caL_2,\ldots$ of all possible finite affine systems of linear forms.
  Define a sequence $(\bfg^0_i)$ by $\bfg^0_i = \bff_i$ for each $i \in \bbN$.
  Then, for each $k \in \bbN$,
  we inductively define a sequence $(\bfg^k_i)$ as a subsequence of $(\bfg^{k-1}_i)$ so that $(t_{\caL_k}(\bfg^k_i))$ converges.
  This is possible since the metric space $([-1,1], \ell_1)$ is compact.
  Finally, we define a sequence $(\bfg_i)$ of function limits as $\bfg_i = \bfg^i_i$ for each $i \in \bbN$.
  We can observe that $(\bfg_i)$ is a subsequence of $(\bff_i)$ and $t$-converges.
  Now we replace $(\bff^i)$ with $(\bfg^i)$ and assume that $(\bff^i)$ is a sequence of function limits that $t$-converges.

  By Lemma~\ref{lem:realize-as-limit},
  for each $i \in \bbN$,
  we can take a function sequence $(f^i_j:\Fp^j \to \bit)$ that $t$-converges to $\bff^i$ and, hence, $\upsilon$-converges to $\bff^i$.
  Now, we construct a sequence $(g_i:\Fp^i \to \bit)$ by first setting $g_1 = f^1_1$, and then, for each $i \in \bbN$, inductively defining $g_i$ from $g_{i-1}$ as follows:
  first, choose an index $k_i$ so that $|t_{\caL_{i'}}(f^i_{k_i}) - t_{\caL_{i'}}(\bff^i)| \leq |t_{\caL_{i'}}(g_{i-1}) - t_{\caL_{i'}}(\bff^{i-1})|/2$ and $\upsilon^{i'}(\nst{f^i_{k_i}}, \bff^i) \leq \upsilon^{i'}(\nst{g_{i-1}}, \bff^{i-1})/2$ hold for every $i' \leq i$ (we can choose such $k_i$ since $(f^i_j)$ $t$-converges and, hence, $\upsilon$-converges to $\bff^i$),
  then we set $g_i = f^i_{k_i}$.
  This gives us (i) $\lim\limits_{i \to \infty}|t_{\caL_k}(g_i) - t_{\caL_k}(\bff^i)| = 0$ for any $k \in \bbN$, and (ii) $\lim\limits_{i \to \infty}\upsilon^d(\nst{g_i}, \bff^i) = 0$ for any $d \in \bbN$.
  Since the sequence $(\bff^i)$ $t$-converges,
  the sequence $(g_i)$ also $t$-converges by (i).
  By Theorem~\ref{the:convergent=cauchy}, there exists a function limit $\bfg:\bfF \to \bit$ to which $(\nst{g_i})$ $\upsilon$-converges.
  Hence, the sequence $\bff^i$ $\upsilon$-converges to $\bfg$, by (ii).
\end{proof}

\begin{corollary}\label{cor:weak-compact}
  The metric space $(\caF_{\bit},\upsilon^d)$ is compact for any $d \in \bbN$.
\end{corollary}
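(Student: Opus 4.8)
The plan is to deduce compactness from sequential compactness, which are equivalent for metric spaces, and to obtain sequential compactness essentially for free from Lemma~\ref{lem:strong-compact}. Fix $d \in \bbN$. Since $(\caF_{\bit},\upsilon^d)$ is a metric space once we identify function limits at $\upsilon^d$-distance zero, it suffices to check that every sequence $(\bff_i)_{i \in \bbN}$ in $\caF_{\bit}$ admits a subsequence that converges in the $\upsilon^d$-metric.

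First I would apply Lemma~\ref{lem:strong-compact} to $(\bff_i)$ to extract a subsequence $(\bff_{i_j})_j$ that $\upsilon$-converges, i.e.\ is Cauchy in $\upsilon^{d'}$ for every $d' \in \bbN$; moreover, inspecting the proof of that lemma, this subsequence actually converges to some function limit $\bfg \in \caF_{\bit}$ in $\upsilon^{d'}$ for every $d'$. In particular $\upsilon^d(\bff_{i_j},\bfg) \to 0$ as $j \to \infty$, so $(\bff_{i_j})_j$ converges in $(\caF_{\bit},\upsilon^d)$. Hence $(\caF_{\bit},\upsilon^d)$ is sequentially compact, and therefore compact.

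There is essentially no obstacle left to overcome here: all of the substantive work—passing to a subsequence that is $t$-convergent and then upgrading it to a $\upsilon$-convergent sequence with a genuine limit inside $\caF_{\bit}$—is already done inside Lemma~\ref{lem:strong-compact}, which in turn leans on Theorem~\ref{the:convergent=cauchy} and Lemma~\ref{lem:realize-as-limit}. The only point worth a brief remark is that compactness is being asserted for the quotient metric space obtained by collapsing function limits at $\upsilon^d$-distance zero, and one should note that this identification does not interfere with the sequential-compactness argument (a convergent subsequence in $\caF_{\bit}$ projects to a convergent subsequence in the quotient).
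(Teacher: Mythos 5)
Your proposal is correct and follows essentially the same route as the paper: invoke Lemma~\ref{lem:strong-compact} to extract a $\upsilon$-convergent subsequence, then specialize to the $\upsilon^d$-metric and conclude sequential compactness (hence compactness) of the quotient metric space. Your extra remark that the proof of Lemma~\ref{lem:strong-compact} actually produces a genuine limit $\bfg \in \caF_{\bit}$ (not merely a Cauchy subsequence, which is all the lemma's statement literally asserts given the paper's definition of ``$\upsilon$-convergent'') is a fair and slightly more careful reading than the paper's own one-line proof, but the underlying argument is the same.
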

\begin{proof}
  For any function sequence $(\bff_i:\caF \to \bit)$,
  there exists a subsequence that $\upsilon$-converges.
  In particular, it converges in the $\upsilon^d$-metric.

\end{proof}

\section{Characterization of Estimable Parameters}\label{sec:characterization}

Let $\pi$ be an affine-invariant function parameter,
that is, for each function of the form $f:\Fp^n \to \bit$, $\pi$ associates a value $\pi(f) \in [0,1]$.
This section gives a characterization of obliviously constant-query estimable affine invariant properties, using the tools developed in previous sections.

The following theorem gives a number of equivalent conditions characterizing the  testability of a function parameter.
\begin{theorem}\label{the:characterization}
  Let $\pi$ be an affine-invariant parameter with $\pi \in [0,1]$ that is defined over functions of the form $f:\Fp^n \to \bit$.
  The following are equivalent:
  \begin{itemize}
  \setlength{\itemsep}{0pt}
  \item[(a)] $\pi$ is obliviously constant-query estimable.
  \item[(b)] There exists a function parameter $\widetilde{\pi}$, possibly different from $\pi$, with the following property.
    For every $\epsilon > 0$ and sufficiently large $k$, every function $f:\Fp^n \to \bit$ with $n \geq k$ satisfies $|\pi(f) - \E[\widetilde{\pi}(f \circ A)]| < \epsilon$ for a random affine embedding $A:\Fp^k \to \Fp^n$.
  \item[(c)] For every $t$-convergent sequence $(f_i:\Fp^{n_i} \to \bit)$, the sequence of numbers $(\pi(f_i))$ is convergent.
  \item[(d)] There exists a functional $\widehat{\pi}(\cdot)$ on $\caF_{\bit}$ with the following properties:
    (i) $\widehat{\pi}$ is continuous in the sense that,
      for any sequence $(f_i:\Fp^{n_i} \to \bit)$ of functions such that $(\nst{f_i})$ $\upsilon$-converges to $\bff$,
      $\lim\limits_{i \to \infty}\widehat{\pi}(\nst{f_i}) = \widehat{\pi}(\bff)$ holds.
    (ii) $\widehat{\pi}$ extends $\pi$ in the sense that $\widehat{\pi}(\nst{f}) = \pi(f)$.


  \end{itemize}
\end{theorem}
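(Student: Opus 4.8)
The plan is to prove the cycle of implications $(a)\Rightarrow(b)\Rightarrow(c)\Rightarrow(d)\Rightarrow(a)$. For $(a)\Rightarrow(b)$, start from a constant-query estimator: for accuracy $\epsilon$ it restricts $f$ to a uniformly random affine embedding of a $k(\epsilon)$-dimensional subspace and applies a deterministic decision function $\Phi_\epsilon$ (the footnote lets us assume non-adaptivity, uniform subspace choice, and no internal randomness). Median-amplifying $O(\log(1/\epsilon))$ independent copies of the $(\epsilon/2)$-accuracy version, for each sufficiently large $k$ pick $\epsilon_k\to0$ with $k(\epsilon_k/2)\le k$ and set, for $h:\Fp^k\to\bit$, the parameter $\widetilde\pi(h):=\E[\operatorname{median}_j\,\Phi_{\epsilon_k/2}(h\circ A_j)]$, the expectation being over $O(\log(1/\epsilon_k))$ independent uniformly random affine embeddings $A_j:\Fp^{k(\epsilon_k/2)}\to\Fp^k$ (and $\widetilde\pi:=0$ on small domains). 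Since composing a uniformly random embedding $\Fp^k\to\Fp^n$ with an independent uniformly random embedding $\Fp^{k(\epsilon_k/2)}\to\Fp^k$ yields a uniformly random embedding $\Fp^{k(\epsilon_k/2)}\to\Fp^n$, for $n\ge k$ the value $\E_A[\widetilde\pi(f\circ A)]$ equals the expected output of the amplified $\epsilon_k$-estimator on $f$, which lies within $\epsilon_k$ of $\pi(f)$ (its output is in $[0,1]$ and within $\epsilon_k/2$ of $\pi(f)$ with probability $\ge1-\epsilon_k/2$); as $\epsilon_k\to0$ this is exactly $(b)$.

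For $(b)\Rightarrow(c)$, let $(f_i:\Fp^{n_i}\to\bit)$ be $t$-convergent; we may assume $n_i\to\infty$, for otherwise a subsequence has a fixed domain, on which the values $\{t_\caL\}$ determine a function up to affine equivalence, so that subsequence is eventually constant up to affine equivalence and $\pi$ is eventually constant on it by affine invariance. Given $\epsilon$, take $k$ large enough for $(b)$ at accuracy $\epsilon/3$. As $\widetilde\pi\in[0,1]$, the map $f\mapsto\E_A[\widetilde\pi(f\circ A)]$ depends only on the distribution of $\proj{f}{k}$ and is $1$-Lipschitz in its statistical distance; by Corollary~\ref{cor:t-convergent->statistically-close} that distribution is determined up to statistical distance $\epsilon/3$ by the finitely many $t_\caL(f)$ with $\caL$ an affine system of $k'(k)$ forms, and these are Cauchy in $i$ by $t$-convergence. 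Hence $|\E_A[\widetilde\pi(f_i\circ A)]-\E_A[\widetilde\pi(f_j\circ A)]|\le\epsilon/3$ for $i,j$ large, and the triangle inequality through $(b)$ gives $|\pi(f_i)-\pi(f_j)|\le\epsilon$, so $(\pi(f_i))$ converges. For $(c)\Rightarrow(d)$, by Lemma~\ref{lem:realize-as-limit} every $\bff\in\caF_{\bit}$ is the $t$-limit of some function sequence $(f_i)$; put $\widehat\pi(\bff):=\lim_i\pi(f_i)$, which exists by $(c)$ and is well defined since interleaving two sequences that $t$-converge to $\bff$ produces another such sequence, forcing their $\pi$-limits to coincide. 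Applying this to the constant sequence $f,f,\dots$, which $t$-converges to $\nst f$ by Lemma~\ref{lem:t-cal-equivalence}, yields (ii): $\widehat\pi(\nst f)=\pi(f)$. For (i), if $(\nst{f_i})$ $\upsilon$-converges to $\bff$ then $(f_i)$ is $t$-convergent to $\bff$ by Theorem~\ref{the:convergent=cauchy}, so $\widehat\pi(\bff)=\lim_i\pi(f_i)=\lim_i\widehat\pi(\nst{f_i})$.

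For $(d)\Rightarrow(a)$, the crux: the estimator restricts $f$ to a uniformly random affine embedding $A:\Fp^k\to\Fp^n$ and outputs $\widehat\pi(\nst{(f\circ A)})$, and the claim is that for $k=k(\epsilon)$ large this is within $\epsilon$ of $\pi(f)=\widehat\pi(\nst f)$ with probability $\ge2/3$. If not, there are $\epsilon_0>0$ and, for arbitrarily large $k$, a function $f_k:\Fp^{n_k}\to\bit$ with $n_k\ge k$ for which a fraction $>1/3$ of the embeddings $A$ are \emph{bad}, meaning $|\widehat\pi(\nst{(f_k\circ A)})-\pi(f_k)|>\epsilon_0$. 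By Lemma~\ref{lem:distance-to-random-sample}, for suitably slowly growing $D_k\to\infty$ and shrinking $\delta_k\to0$ all but an $o(1)$ fraction of $A$ obey $\upsilon^d(\nst{f_k},\nst{(f_k\circ A)})\le\delta_k$ for every $d\le D_k$; a union bound then supplies $A_k$ that is both bad and obeys these bounds. Passing to a subsequence, Lemma~\ref{lem:strong-compact} makes $(\nst{f_k})$ $\upsilon$-converge to some $\bff\in\caF_{\bit}$, and since $\upsilon^d(\nst{(f_k\circ A_k)},\bff)\le\delta_k+\upsilon^d(\nst{f_k},\bff)\to0$ for each fixed $d$, also $(\nst{(f_k\circ A_k)})$ $\upsilon$-converges to $\bff$. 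Both sequences have the form $(\nst{g_k})$ for genuine functions $g_k$, so continuity (d)(i) gives $\widehat\pi(\nst{f_k})\to\widehat\pi(\bff)$ and $\widehat\pi(\nst{(f_k\circ A_k)})\to\widehat\pi(\bff)$, contradicting $|\widehat\pi(\nst{f_k})-\widehat\pi(\nst{(f_k\circ A_k)})|>\epsilon_0$; and the estimator so obtained is oblivious with $p^{k(\epsilon)}$ queries. The main obstacle is precisely this last step: converting the per-$d$ approximation furnished by Lemma~\ref{lem:distance-to-random-sample} into the simultaneous-over-all-$d$ convergence needed to invoke continuity of $\widehat\pi$, which is where the diagonalization over $(D_k,\delta_k)$ and the compactness of $(\caF_{\bit},\upsilon^d)$ (Lemma~\ref{lem:strong-compact}, Corollary~\ref{cor:weak-compact}) enter.
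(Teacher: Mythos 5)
Your proof is correct and follows essentially the same strategy as the paper's. The one structural difference is that you close the cycle as $(a)\Rightarrow(b)\Rightarrow(c)\Rightarrow(d)\Rightarrow(a)$, proving $(d)\Rightarrow(a)$ directly by invoking continuity of $\widehat\pi$, whereas the paper proves $(c)\Rightarrow(a)$ (using the merged-sequence trick to contradict $(c)$) and handles $(c)\Leftrightarrow(d)$ separately; both versions of the step to $(a)$ rest on exactly the same two lemmas (Lemma~\ref{lem:distance-to-random-sample} applied with a slowly growing dimension parameter, and Lemma~\ref{lem:strong-compact}) and the same contradiction skeleton. You also spell out two details the paper leaves implicit --- the median/probability amplification needed to pass from ``within $\epsilon$ with probability $2/3$'' to a small \emph{expected} error in $(a)\Rightarrow(b)$, and the bounded-domain degenerate case in $(b)\Rightarrow(c)$ --- which are welcome but do not change the argument.
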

\begin{proof}
  (a) $\Rightarrow$ (b):
  The definition of oblivious constant-query estimability is very similar to condition~(b);
  it states that a random affine embedding $A:\Fp^k \to \Fp^n$, as in (b),  satisfies
  \[
    |\pi(f) - \widetilde{\pi}(f \circ A)| < \epsilon
  \]
  with large probability, where $\widetilde{\pi}(\cdot)$ is taken to be the output of the algorithm.
  This clearly implies that this difference is small on  average.

  (b) $\Rightarrow$ (c):
  Suppose that a sequence $(f_i:\Fp^{n_i} \to \bit)$ is $t$-convergent.
  By Corollary~\ref{cor:t-convergent->statistically-close},
  for sufficiently large $j,j'\in \bbN$,
  the distribution of $\proj{f_j}{k}$ is very close to the distribution of $\proj{f_{j'}}{k}$.
  Hence, $|\E[\widetilde{\pi}(\proj{f_j}{k})]] - \E[\widetilde{\pi}(\proj{f_{j'}}{k})]| \leq \epsilon/3$.
  By (b), we can choose a large enough $k$ so that $|\pi(f_j) - \E[\widetilde{\pi}(\proj{f_j}{k})]| \leq \epsilon/3$ and $|\pi(f_{j'})-\E[ \widetilde{\pi} (\proj{f_{j'}}{k})]| \leq \epsilon/3$ hold, and so $|\pi(f_j) - \pi(f_{j'})| \leq \epsilon$ holds.

  (c) $\Rightarrow$ (a):
  If condition~(a) fails to hold,
  then there exists $\epsilon > 0$ such that,
  for infinitely many $k$,
  there exists a function $f:\Fp^{n} \to \bit$ for which $|\pi(f) - \widetilde{\pi}(\proj{f}{k})| \geq \epsilon$ holds with a probability of at least $1/3$ for any function parameter $\widetilde{\pi}$.
  In particular, we can choose $\widetilde{\pi} = \pi$.
  Let $(k_i)$ and $(f_i:\Fp^{n_i} \to \bit)$ be the sequences of such $k$'s and $f$'s.
  By taking the subsequence,
  we may assume that $k_i \geq k_{\ref{lem:distance-to-random-sample}}(1/i,i)$.
  Further, by Lemma~\ref{lem:strong-compact},
  we may assume that the sequence $(\nst{f_i})$ is $\upsilon$-convergent.
  By Theorem~\ref{lem:distance-to-random-sample},
  $\upsilon^{i}(\nst{f_i},\nst{(\proj{f_i}{i})}) \leq 1/i$ with a probability of at least $1-1/i$.
  Hence, we can fix $A_i:\Fp^i \to \Fp^{n_i}$ such that both
  \begin{align}
    |\pi(f_i) - \pi(f_i \circ A_i)| \geq \epsilon \label{eq:(c)->(a)-1}
  \end{align}
  and
  \begin{align}
    \upsilon^{i}(\nst{f_i} , \nst{(f_i \circ A_i)}) \leq 1/i \label{eq:(c)->(a)-2}
  \end{align}
  hold.

  Now merging the sequences $(\nst{f_i})$ and $(\nst{f_i \circ A_i})$,
  we get a $\upsilon$-convergent sequence by~\eqref{eq:(c)->(a)-2}.
  By Theorem~\ref{the:convergent=cauchy}, this sequence is $t$-convergent.
  However, condition~(c) is violated by~\eqref{eq:(c)->(a)-1}.

  (c) $\Rightarrow$ (d):
  Consider any $\bff :\bfF \to \bit$.
  By Lemma~\ref{lem:realize-as-limit},
  there exists a sequence of functions that $t$-converges to $\bff$.
  Let $(f_i)$ be any such sequence and define $\widehat{\pi}(\bff)$ as the limit of $\pi(f_i)$.
  From condition~(c), this value does not depend on the choice of the sequence.
  From the construction, $\widehat{\pi}$ satisfies property~(i).
  To see property (ii),
  consider the sequence consisting only of the same function $f$, which $t$-converges to $\nst{f}$ by Lemma~\ref{lem:t-cal-equivalence}.
  Then, $\widehat{\pi}(\nst{f})$ is equal to the limit of the sequence consisting only of the same value $\pi(f)$, which is $\pi(f)$.

  (d) $\Rightarrow$ (c):
  Consider a $t$-convergent sequence $(f_i:\Fp^{n_i} \to \bit)$ and let $\bff \in \caF_{\bit}$ be its limit.
  Then, $(\nst{f_i})$ is $\upsilon$-convergent to $\bff$ by Theorem~\ref{the:convergent=cauchy}.
  Hence, by property~(i) of condition~(d),
  we have $\lim\limits_{i \to \infty}\widehat{\pi}(\nst{f_i}) = \widehat{\pi}(\bff)$.
  From property~(ii) of condition~(d), the sequence $(\pi(f_i))$ is also convergent to $\widehat{\pi}(\bff)$.
\end{proof}

\section{Applications}\label{sec:applications}

In this section, we apply our characterization to show Theorem~\ref{the:application}.

For a property of functions $\caP$,
let $\|f\|_{\caP}$ denote the \emph{distance to $\caP$},
that is,
\[
  \|f\|_{\caP} := \min_{g \in \caP} \|f-P\|_1.
\]


The following lemma rephrases the convergence in the $\upsilon^d$-metric without using function limits.
\begin{lemma}\label{lem:U^d-converges}
  Let $d \in \bbN$ be an integer and $(f_i:\Fp^{n_i} \to \bit)$ be a sequence of functions such that $(\nst{f_i})$ converges in the $\upsilon^d$-metric.
  Then, for any $\epsilon > 0$ and sufficiently large integers $i < j$,
  there exist some integer $n \geq \max(n_i,n_j)$ and affine transformations $A_i:\Fp^n \to \Fp^{n_i}$ and $A_j: \Fp^n \to \Fp^{n_j}$ such that
  $\|f_i \circ A_i - f_j \circ A_j\|_{U^d} \leq \epsilon$ holds.
\end{lemma}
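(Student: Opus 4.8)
The plan is to unwind the definition of convergence in the $\upsilon^d$-metric and then read off, from the proof of Lemma~\ref{lem:distance-well-defined}, an explicit finite-dimensional witness of the distance. Since $(\nst{f_i})$ converges in the $\upsilon^d$-metric, it is in particular Cauchy there, so there is an $N = N(\epsilon)$ such that $\upsilon^d(\nst{f_i},\nst{f_j}) \le \epsilon/2$ whenever $N \le i < j$. Fix such $i < j$ and set $c := \upsilon^d(\nst{f_i},\nst{f_j}) \le \epsilon/2$.

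First I would recall the identity established inside the proof of Lemma~\ref{lem:distance-well-defined}: if, for each $\ell \in \bbN$, $A^*_\ell:\Fp^\ell \to \Fp^{n_i}$ and $B^*_\ell:\Fp^\ell \to \Fp^{n_j}$ are affine transformations minimizing $\|f_i \circ A^*_\ell - f_j \circ B^*_\ell\|_{U^d}$ (a minimum that is attained, there being only finitely many affine maps between two fixed finite vector spaces), then
\[
  c = \st\Bigl(\lim_{\ell \to \omega}\|f_i \circ A^*_\ell - f_j \circ B^*_\ell\|_{U^d}\Bigr).
\]
Writing $a_\ell := \|f_i \circ A^*_\ell - f_j \circ B^*_\ell\|_{U^d}$, the equality $\st(\lim_{\ell\to\omega} a_\ell) = c$ forces $\lim_{\ell\to\omega} a_\ell < c + \epsilon/2$ as hyperreals (their difference is infinitesimal), which by the definition of the order on $\nst{\bbR}$ means that $S := \{\ell \in \bbN : a_\ell < c + \epsilon/2\} \in \omega$. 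Since $\omega$ is non-principal, $S$ is infinite; moreover $\{\ell : \ell \ge \max(n_i,n_j)\}$ is cofinite, hence also in $\omega$, so I may pick $\ell$ in the intersection. Taking $n := \ell$, $A_i := A^*_\ell$ and $A_j := B^*_\ell$ then gives $\|f_i \circ A_i - f_j \circ A_j\|_{U^d} = a_\ell < c + \epsilon/2 \le \epsilon$, which is the assertion.

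The only step that needs any care is the passage from the standard-part equation of Lemma~\ref{lem:distance-well-defined} to the existence of a single finite index $\ell$ that simultaneously realizes the bound and satisfies $\ell \ge \max(n_i,n_j)$: one must be explicit that ``$\st(\lim_{\ell\to\omega} a_\ell) \le \epsilon/2$'' places a set of indices in $\omega$ on which $a_\ell$ is genuinely $< \epsilon$, and that members of a non-principal ultrafilter are infinite and closed under intersection with cofinite sets. Everything else---Cauchyness of a convergent sequence, and attainment of the minimum over the finite family of affine maps $\Fp^\ell \to \Fp^{n_i}$---is routine.
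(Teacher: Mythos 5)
Your proof is correct and follows essentially the same strategy as the paper's: unwind the standard-part identity from Lemma~\ref{lem:distance-well-defined} to pin down a concrete set of indices $\ell$ in $\omega$ on which the finite-dimensional Gowers distance is small, intersect with the cofinite set $\{\ell \geq \max(n_i,n_j)\}$, and read off the witness. The paper phrases the extraction step by first picking a non-standard affine bijection $\bfX = \lim_{k\to\omega} X_k$ that (approximately) realizes the infimum and then unwinding $\bfX$, $\nst{f_i}$, $\nst{f_j}$ into component sequences to take $A_i = B_n$ and $A_j = C_n \circ X_n$; you instead lean directly on the minimizers $A^*_\ell, B^*_\ell$ already present in the proof of Lemma~\ref{lem:distance-well-defined}, which is a small streamlining rather than a genuinely different argument. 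A minor advantage of your version is the explicit $c + \epsilon/2$ slack: the step ``$\st(\lim_{k\to\omega} a_k) \le \epsilon$ implies $\{k : a_k \le \epsilon\} \in \omega$'' is not quite valid when equality holds with a positive infinitesimal part, and your phrasing dodges this pitfall cleanly.
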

\begin{proof}
  For any $\epsilon > 0$ and sufficiently large integers $i < j$,
  we have $\upsilon^d(\nst{f_i}, \nst{f_j}) \leq \epsilon$ by the $\upsilon$-convergence of the sequence $(f_i)$.
  Hence, there exists some non-standard affine bijection $\bfX:  \bfF \to \bfF$ such that
  $\|\nst{f_i} - \nst{f_j} \circ \bfX\|_{U^d} \leq \epsilon$ holds.
  Suppose $\bfX = \lim\limits_{k \to \omega}X_k$ for some affine bijections $(X_k:\Fp^k \to \Fp^k)$.
  Also, suppose we have $\nst{f_i} = \st(\lim\limits_{k \to \omega}f_i \circ B_k)$ and $\nst{f_j} = \st(\lim\limits_{k \to \omega}f_j \circ C_k)$ for some $B_k:\Fp^k \to \Fp^{n_i}$ and $B_k:\Fp^k \to \Fp^{n_j}$.
  Then, $\nst{f_i} - \nst{f_j} \circ \bfX = \st(\lim\limits_{k \to \omega}(f_i \circ B_k - f_j \circ C_k \circ X_k))$ holds.
  By Lemma~\ref{lem:tL-exchange}, we have
  \[
    \st(\lim\limits_{k \to \omega}\|f_i \circ B_k - f_j \circ C_k \circ X_k\|_{U^d}) = \|\nst{f_i} - \nst{f_j} \circ \bfX\|_{U^d} \leq \epsilon.
  \]
  Hence $\{ k \in \bbN \mid \|f_i \circ B_k - f_j \circ C_k \circ X_k\|_{U^d} \leq \epsilon \} \in \omega$ holds.
  Note that this set is not finite since $\omega$ is a non-principal filter.
  In particular, there exists some $n \geq \max(n_i, n_j)\in \bbN$ such that $\|f_i \circ B_n - f_j \circ C_n \circ X_n\|_{U^d} \leq \epsilon$ holds, and we have the lemma with $A_i = B_n$ and $A_j = C_n \circ X_n$.
\end{proof}

The following lemma states that, if a property $\caP$ is closed under blowing-up, then the distance to $\caP$ is preserved by blowing-up.
\begin{lemma}\label{lem:polynomial-distance}
  Let $\caP$ be a property closed under blowing-up, $f:\Fp^n \to \bit$ be a function, and $A:\Fp^m \to \Fp^n$ be an affine transformation with $m \geq n$ and $\rank(A) = n$.
  Then, we have
  \[
    \|f\|_{\caP} = \|f \circ A\|_{\caP}.
  \]
\end{lemma}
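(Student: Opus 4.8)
The plan is to prove the two inequalities $\|f \circ A\|_\caP \le \|f\|_\caP$ and $\|f\|_\caP \le \|f \circ A\|_\caP$ separately. The first is immediate: choose $g : \Fp^n \to \bit$ with $g \in \caP$ and $\|f - g\|_1 = \|f\|_\caP$; since $m \ge n$ and $\rank(A) = n$, closure under blowing-up gives $g \circ A \in \caP$, while $\rank(A) = n$ also makes $Ax$ uniformly distributed over $\Fp^n$ for $x$ uniform over $\Fp^m$, so $\|(f - g)\circ A\|_1 = \|f - g\|_1$ exactly as in Lemma~\ref{lem:affine-expansion}. Hence $\|f \circ A\|_\caP \le \|f \circ A - g \circ A\|_1 = \|f\|_\caP$.

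For the reverse inequality I would first reduce to a coordinate projection. Writing $A = L + c$ with $\rank(L) = n$, pre- and post-composition with affine bijections of $\Fp^m$ and $\Fp^n$ turns $A$ into the projection $\pi(x_1,\dots,x_m) = (x_1,\dots,x_n)$; since $\caP$ is affine-invariant (the $m = n$ case of closure under blowing-up) and $\|\cdot\|_1$ is affine-invariant, it suffices to show $\|f\|_\caP = \|f \circ \pi\|_\caP$. Let $h \in \caP$ on $\Fp^m$ attain $\|f \circ \pi - h\|_1 = \|f \circ \pi\|_\caP$, and for linear $M : \Fp^n \to \Fp^{m-n}$ and $b \in \Fp^{m-n}$ let $S_{M,b}(y) = (y, My+b)$ be the corresponding affine section, so $\pi \circ S_{M,b} = \mathrm{id}$. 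Put $g_{M,b} = h \circ S_{M,b} : \Fp^n \to \bit$. For each fixed $y$ the point $S_{M,b}(y)$ is uniform over $\{y\}\times\Fp^{m-n}$ when $(M,b)$ is uniform (the shift $b$ alone already randomizes the last block), so
\[
  \E_{M,b}\bigl[\|f - g_{M,b}\|_1\bigr] \;=\; \E_{y,w}\bigl|f(y) - h(y,w)\bigr| \;=\; \|f \circ \pi - h\|_1 \;=\; \|f \circ \pi\|_\caP .
\]
Hence some section gives $g' := h \circ S'$ with $\|f - g'\|_1 \le \|f \circ \pi\|_\caP$, and \emph{if} $g' \in \caP$ then $\|f\|_\caP \le \|f - g'\|_1 \le \|f\circ\pi\|_\caP$, which together with the first inequality yields equality.

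The remaining and main point is to verify $g' \in \caP$. This cannot follow from closure under blowing-up alone, since $S' : \Fp^n \hookrightarrow \Fp^m$ is an embedding and $g'$ is merely a restriction of $h$; here one must exploit the structural form of the members of $\caP$ (a bounded function of boundedly many polynomials of bounded degree and of factored polynomials). My plan for this step is to apply the decomposition/regularization theorem (Theorem~\ref{the:decomposition}) to the polynomial factor defining $h$, so that this factor may be assumed of arbitrarily high rank; since $h$ approximates $f \circ \pi$, which is invariant under every translation in the last $m-n$ coordinates, the rigidity of high-rank factors forces each defining polynomial to be equivalent to one independent of those coordinates, that is, to factor through $\pi$. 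One then descends $h$ to a function $h_0$ on $\Fp^n$ of the same structural type with $h = h_0 \circ \pi$, and checks $h_0 \in \caP$ using affine-invariance together with the fact that $h_0$ and $h_0 \circ \pi$ have identical local statistics, so that $g'$ may be taken to be $h_0$. I expect this regularize-and-descend step, rather than the averaging, to be the technical heart of the argument.
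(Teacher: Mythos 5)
Your first inequality $\|f\circ A\|_\caP \le \|f\|_\caP$ is correct and is exactly the paper's argument. The reduction to a coordinate projection and the averaging-over-sections step are also the paper's approach; your use of \emph{affine} sections $S_{M,b}(y)=(y,My+b)$ is in fact slightly cleaner than the paper's, which averages over \emph{linear} right-inverses $A^+ = \bigl(\begin{smallmatrix}I_n\\ B\end{smallmatrix}\bigr)$ and therefore fails to randomize the fiber over $y=0$.

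The crucial thing you flag --- that the argument only closes if $g' = h\circ S'$ itself belongs to $\caP$, and that this does \emph{not} follow from closure under blowing-up (it would need affine-subspace heredity, which the paper stresses the target properties lack when $r>0$) --- is exactly right, and in fact it is a gap in the paper's own proof. The paper ends with the line $\E_x\lvert f(x) - (g'\circ A^+)(x)\rvert \ge \E_x\lvert f(x)-g(x)\rvert = \|f\|_\caP$ with no justification; since $g$ is the $\caP$-minimizer for $f$, this inequality is precisely the assertion $g'\circ A^+ \in \caP$, which is not supplied. Indeed with only the blowing-up hypothesis the lemma appears to be false: let $\caP$ be $\{0\}$ on $\Fp^1$ and consist of all functions on $\Fp^m$ for every $m\ge 2$. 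This is closed under blowing-up, yet for the constant function $f\equiv 1$ on $\Fp^1$ and a rank-$1$ affine map $A:\Fp^2\to\Fp^1$ one has $\|f\|_\caP = 1$ while $\|f\circ A\|_\caP = 0$. (This $\caP$ does not have the bounded structural form required in Theorem~\ref{the:application}, which is the only place the lemma is used, so some corrected version is presumably what is intended.)

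Your proposed regularize-and-descend repair is the right instinct but, as you yourself say, it is a plan rather than a proof, and it has real obstacles. Theorem~\ref{the:decomposition} produces $h_1 = \E[h\mid\caB]$, an approximation of $h$ that is not $\bit$-valued and need not lie in $\caP$; the statement that a high-rank factor near the cylindrical function $f\circ\pi$ must be ``equivalent to one independent of the fiber coordinates'' needs a quantitative lemma that the paper does not provide; and the argument would at best yield $\|f\|_\caP \le \|f\circ A\|_\caP + o(1)$ rather than exact equality, which would then require checking that the weaker statement suffices in the proof of Theorem~\ref{the:poly-is-estimable}. So: the part of your proof that matches the paper is correct, the gap you identify is genuine and is shared by the paper's own argument, and the repair you sketch is not yet a proof of the missing step.
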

\begin{proof}
  Let $f' = f \circ A$, and let $g:\Fp^n \to \bit$ and $g':\Fp^m \to \bit$ be functions satisfying $\caP$ closest to $f$ and $f'$, respectively.
  First, we have
  \[
    \|f\|_{\caP} = \|f - g\|_1 = \|f \circ A - g \circ A\|_1
    \geq \|f' - g'\|_1 = \|f'\|_{\caP}.
  \]
  The second equality holds since $g \circ A \in \caP$ and the distribution of $Ax \in \Fp^n$ is uniform when $x \in \Fp^m$ is sampled uniformly.

  Now we show the other direction.
  In what follows, we assume that $A$ is a linear transformation of the form
  \[
    A = \left(\begin{matrix}
    I_n & O
    \end{matrix}\right).
  \]
  We can easily handle the general case by applying an appropriate affine transformation to $f$.

  Let $\caA^+: \Fp^n \to \Fp^m$ be the set of all linear transformations $A^+$ satisfying $AA^+ = I_n$.
  Note that every $A^+ \in \caA^+$ is of the form
  \[
    A^+ = \left(\begin{matrix}
      I_n \\
      B
    \end{matrix}\right),
  \]
  where $B \in \Fp^{(m - n) \times n}$ is an arbitrary matrix.

  Recall that
  \begin{align}
    \|f'\|_{\caP} = \|f \circ A - g'\|_1
    = \E_{x \in \Fp^m}[ |(f \circ A)(x) - g'(x)|]. \label{eq:polynomial-distance}
  \end{align}
  Note that $(f \circ A)(x)$ only depends on $x_1,\ldots,x_n$.
  If we fix $x_1,\ldots,x_n$ and choose $x_{n+1},\ldots,x_m$ uniformly at random,
  then the distribution of $A^+x$ is uniform over the set $\{ y \in \Fp^m : y_1 = x_1,\ldots,y_n = x_n\}$.
  Hence,
  \[
    \eqref{eq:polynomial-distance}
    =
    \E_{x \in \Fp^n} \E_{A^+ \in \caA^+}[| (f \circ A \circ A^+)(x) - (g' \circ A^+)(x)|]
  \]
  Hence, there exists some $A^+ \in \caA^+$ such that
  \[
    \|f'\|_{\caP} \geq \E_{x \in \Fp^n}[| (f \circ A \circ A^+)(x) - (g' \circ A^+)(x)|].
  \]
  However, the right hand side can be expressed as follows:
  \[
    \E_{x \in \Fp^n}[| f(x) - (g' \circ A^+)(x)|]
    \geq
    \E_{x \in \Fp^n}[| f(x) - g(x)|]
    = \|f\|_{\caP}.
    \qedhere
  \]
\end{proof}

Let us define $\sfe_p : \bbF_p \to \bbC$ as $\sfe_p(x) = e^{\frac{2\pi i x}{p}}$.
Note that a function $f:\Fp^n \to \bit$ is $\epsilon$-far from $\caP$ if and only if $\|f\|_{\caP} \geq \epsilon$ .
Hence, the following theorem implies Theorem~\ref{the:application}.
\begin{theorem}\label{the:poly-is-estimable}
  Let $\caP$ be a property as in Theorem~\ref{the:application}.
  Then, the distance $\|\cdot\|_{\caP}$ is obliviously constant-query estimable.
\end{theorem}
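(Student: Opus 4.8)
The plan is to verify condition~(c) of Theorem~\ref{the:characterization} for the parameter $\pi=\|\cdot\|_{\caP}$: namely, that $(\|f_i\|_{\caP})$ converges whenever $(f_i:\Fp^{n_i}\to\bit)$ is $t$-convergent. Fix $\delta>0$; it suffices to bound $\big|\,\|f_i\|_{\caP}-\|f_j\|_{\caP}\,\big|$ by $\delta$ for all large $i<j$. By Theorem~\ref{the:convergent=cauchy} the sequence $(\nst{f_i})$ is Cauchy in the $\upsilon^{D}$-metric for the value $D=D(\caP,\delta)$ selected below, so Lemma~\ref{lem:U^d-converges} supplies, for all large $i<j$, a common dimension $n\ge\max(n_i,n_j)$ and affine transformations $A_i:\Fp^n\to\Fp^{n_i}$, $A_j:\Fp^n\to\Fp^{n_j}$ of full rank---hence blow-ups---with $\|f_i\circ A_i-f_j\circ A_j\|_{U^{D}}\le\epsilon$ for a prescribed $\epsilon$. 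Since $\caP$ is closed under blowing-up, Lemma~\ref{lem:polynomial-distance} gives $\|f_i\|_{\caP}=\|f_i\circ A_i\|_{\caP}$ and $\|f_j\|_{\caP}=\|f_j\circ A_j\|_{\caP}$. So everything reduces to the following \emph{local stability} statement: for every $\delta>0$ there exist $\epsilon>0$ and $D\in\bbN$ (depending only on $\caP$ and $\delta$) such that any $f,g:\Fp^n\to\bit$ with $\|f-g\|_{U^{D}}\le\epsilon$ satisfy $\big|\,\|f\|_{\caP}-\|g\|_{\caP}\,\big|\le\delta$.

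To prove local stability, the plan is to show that $\|f\|_{\caP}$ is, up to $\delta/3$, a fixed functional $\Psi(\proj{f}{k})$ of the distribution of $f$ restricted to a random affine subspace of a bounded dimension $k=k(\caP,\delta)$. Granting this, one chooses $D$ to exceed the true complexity of every affine system of at most $p^{k}$ linear forms (finite by~\cite{Gowers:2009cv}); then $\|f-g\|_{U^{D}}\le\epsilon$ forces $|t_{\caL}(f)-t_{\caL}(g)|$ to be tiny for all such $\caL$ by Lemma~\ref{lem:tl-difference-by-gowers-norm-difference}, hence $\proj{f}{k}$ and $\proj{g}{k}$ are statistically close by Corollary~\ref{cor:t-convergent->statistically-close}, hence $\Psi(\proj{f}{k})\approx\Psi(\proj{g}{k})$, and the three estimates combine. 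The functional $\Psi$ is extracted from $\caP$ by introducing the \emph{bounded-complexity} surrogate $\caP_k'$ on $\Fp^k$, consisting of all $\Gamma(R_1,\dots,R_c,S_1,\dots,S_{c'})$ where $R_i$ has degree $\le d$ and each $S_i$ is a monomial in the (finitely many) affine forms on $\Fp^k$; one then sets $\Psi(\rho)$ to be the infimum, over attainable local profiles of members of $\caP$ (which are captured by $\caP_k'$-functions), of the $\ell_1$-transport cost from $\rho$. Continuity and attainability of $\Psi$ follow from compactness of $(\caF_{\bit},\upsilon^d)$ (Corollary~\ref{cor:weak-compact}), Lemma~\ref{lem:realize-as-limit}, and Lemma~\ref{lem:polynomial-distance}.

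The hard part, and the main obstacle, is the localization $\|f\|_{\caP}\approx\Psi(\proj{f}{k})$, precisely because the factored polynomials $Q_i$ defining members of $\caP$ may have \emph{unbounded} degree and therefore escape the bounded-degree factors produced by the decomposition theorem. Here my plan is to apply Theorem~\ref{the:decomposition} to $f$ with degree parameter $\ge d$, writing $f=f_1+f_2+f_3$ with $f_1=\Gamma_f(\caB)$ for a regular, bounded-complexity factor $\caB$ of degree $\le d$, and $\|f_2\|_2$, $\|f_3\|_{U^{d+1}}$ small; then, given a nearest $g^\star=\Gamma^\star(P_1,\dots,P_c,Q_1,\dots,Q_{c'})\in\caP$, refine $\caB$ (keeping it regular, bounded-complexity, degree $\le d$) so that $P_1,\dots,P_c$ become $\caB$-measurable; and finally exploit the fact that a factored polynomial is a product of powers of affine forms, so that restricting it to a random affine subspace of dimension $k$ turns it into a monomial in the finite set of affine forms on $\Fp^k$, of bounded effective complexity. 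One then argues that (i) the quasirandom remainder $f_3$ does not distort the $\ell_1$-distance from $f$ to the structured approximants, (ii) the distance from $f$ to $\caP$ hence depends only on the joint distribution of $\caB$ together with the boundedly-many low-complexity restricted factored polynomials, which is read off from $\proj{f}{k}$ once $k$ is large enough, and (iii) a local approximant in $\caP_k'$ can be patched back, using blow-up invariance (Lemma~\ref{lem:polynomial-distance}), into a genuine member of $\caP$ close to $f$, so that the infimum defining $\Psi$ is attained and equals $\|f\|_{\caP}$. Steps~(i) and~(iii) are the delicate ones; the rest is a routine assembly of the cited lemmas, after which Theorem~\ref{the:characterization} yields that $\|\cdot\|_{\caP}$ is obliviously constant-query estimable, which as noted gives Theorem~\ref{the:application}.
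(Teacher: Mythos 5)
Your first paragraph matches the paper's opening moves: reduce to condition~(c) of Theorem~\ref{the:characterization}, pass to a common domain via Lemma~\ref{lem:U^d-converges}, use Lemma~\ref{lem:polynomial-distance} to transfer $\|\cdot\|_{\caP}$ along the blow-ups, and thereby reduce to a stability statement of the form ``$\|f-g\|_{U^{D}}$ small $\Rightarrow$ $\|f\|_{\caP}\approx\|g\|_{\caP}$.'' After that point, however, you diverge from the paper, and the route you propose has a genuine gap.

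The paper handles the unbounded-degree factored polynomials with one short observation that your plan never makes: if a factored polynomial $Q = M\circ A$ has degree more than $\overline d := \max\{d,\, \log(1/\epsilon)/\log\frac{p}{p-1}\}$, then it is \emph{almost surely zero}, since $M$ is a product of many distinct coordinates each of which vanishes with probability $1/p$; hence $\Pr[Q(x)\neq Q(0)]\le \epsilon$. So given a nearest $g_j\in\caP$ one simply replaces every high-degree $Q_k$ by the constant $Q_k(0)$, getting an $h_j$ within $c'\epsilon$ in $L_1$ of $g_j$ that is now a bounded function of at most $2C$ polynomials of degree $\le\overline d$. For such an $h_j$, the bound $\|f_i'-f_j'\|_{U^{\overline d + 1}}\le\epsilon$ directly controls $\bigl|\E[(f_i'-f_j')\,\sfe(\Gamma(\cdot))]\bigr|$ and hence $\bigl|\,\|f_i'-h_j\|_1-\|f_j'-h_j\|_1\,\bigr|$, and one closes the inequality by comparing back to $g_j\circ A_j\in\caP$. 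No local-distribution machinery, no auxiliary functional $\Psi$, and no lifting step are needed.

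Your alternative plan has two problems. First, it omits the almost-sure-zero observation, which is the one ingredient that makes the unbounded degree tractable, and substitutes a much heavier argument (decomposition, factor refinement, effective-complexity bounds on restrictions) whose key steps you yourself flag as ``delicate'' without carrying them out. Second, and more seriously, step~(iii) cannot be made to work as stated: you define the surrogate class $\caP_k'$ to consist of \emph{all} $\Gamma(R_1,\dots,R_c,S_1,\dots,S_{c'})$ of the prescribed syntactic form on $\Fp^k$, but by the hypotheses of Theorem~\ref{the:application} the property $\caP$ is merely a \emph{subset} of functions having that form (it may carry, e.g., rank constraints, as in the ranked degree-structural case). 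So a local approximant in $\caP_k'$ need not ``patch back'' into a genuine member of $\caP$, and Lemma~\ref{lem:polynomial-distance} gives you no way to certify membership of the lifted function. The paper never encounters this issue because it only ever compares against the one known member $g_j\in\caP$ (and its blow-up $g_j\circ A_j\in\caP$); the modified $h_j$ is used solely as an intermediate comparison function and is never claimed to lie in $\caP$.
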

\begin{proof}
  Let $(f_i:\Fp^{n_i} \to \bit)$ be a $t$-convergent sequence of functions.
  Then, we show that the sequence $\|f_i\|_{\caP}$ converges.
  By~(c) of Theorem~\ref{the:characterization}, this means that $\|\cdot\|_{\caP}$ is obliviously constant-query testable.

  Let $\overline{d} = \max\{d,\log\frac{1}{\epsilon}/\log \frac{p}{p-1}\}$.
  By Lemma~\ref{lem:U^d-converges},
  for any $\epsilon > 0$, for sufficiently large $i < j$,
  $\|f_i \circ A_i - f_j \circ A_j\|_{U^{\overline{d}+1}} \leq \epsilon$ holds for some affine transformations $A_i:\Fp^n \to \Fp^{n_i}$ and $A_j: \Fp^n \to \Fp^{n_j}$.
  Let $f'_i = f_i \circ A_i$ and $f'_j = f_j \circ A_j$.
  Then we have $|\E\limits_x[(f'_i(x) - f'_j(x))\sfe_p(P(x))]| \leq \epsilon$ for any polynomial $P:\Fp^n \to \Fp$ of degree at most $\overline{d}$.
  Hence, for any polynomial sequence $(P_1,\ldots,P_c)$ of degree at most $\overline{d}$ and a function $\Gamma:\Fp^c \to \bit$, we have
  \[
    \Bigl|\E_x[(f'_i(x) - f'_j(x)) \sfe(\Gamma(P_1(x),\ldots,P_c(x)))]\Bigr|
    =
    \Bigl|\sum_{\gamma \in \Fp^c} \widehat{\sfe \circ \Gamma}(\gamma) \E_x [(f'_i(x) - f'_j(x))\sfe_p(P_\gamma(x))]\Bigr|
    \leq
    p^c \epsilon,
  \]
  where $P_\gamma = \sum\limits_{k \in [c]}\gamma_k P_k$.
  Then we have $\E\limits_x[f'_i(x) \sfe(\Gamma(P_1(x),\ldots,P_c(x)))] = \E\limits_x[f'_j(x) \sfe(\Gamma(P_1(x),\ldots,P_c(x)))] \pm p^c \epsilon$.
  By some calculation, we can confirm that this implies $\|f'_i - \Gamma(P_1,\ldots,P_c)\|_1 = \|f'_j - \Gamma(P_1,\ldots,P_c)\|_1 \pm p^{c}\epsilon$.

  Let $g_j:\Fp^{n_j} \to \bit$ be the function satisfying $\caP$ that is closest to $f_j$.
  Note that $g_j$ is of the form $g_j = \Gamma(P_1,\ldots,P_c,Q_1,\ldots,Q_{c'})$ for some polynomials $P_k:\Fp^{n_j} \to \Fp$ of degree at most $d$, factored polynomials $Q_k:\Fp^{n_j} \to \Fp$, and a function $\Gamma:\Fp^{c+c'} \to \bit$.
  Let $h_j:\Fp^{n_j} \to \Fp$ be a function obtained from $g_j$ by replacing each factored polynomial $Q_k$ of degree more than $\overline{d}$ with $Q_k(0)$.
  Since we have $\Pr[Q_k(x) \neq Q_k(0)] \leq \epsilon$ if $Q_k$ has degree more than $\overline{d}$, we have $\|g_j - h_j\|_1 \leq c' \epsilon$.
  Then,
  \begin{align*}
    \|f_j\|_{\caP}
    & = \|f_j - g_j\|_1
    \geq \|f_j - h_j\|_1 - C\epsilon
    = \|f'_j - h_j \circ A_j\|_1 - C\epsilon \\
    & \geq \|f'_i - h_j \circ A_j\|_1 - (C+p^{2C})\epsilon
    \geq \|f'_i\|_{\caP} - (C + p^{2C})\epsilon
    = \|f_i\|_{\caP} - (C+p^{2C})\epsilon.
  \end{align*}
  The last equality follows by the fact that $\caP$ is closed under blowing-up and Lemma~\ref{lem:polynomial-distance}.

  Let $g_i:\Fp^{n_i} \to \bit$ be the function satisfying $\caP$ that is closest to $f_i$, and let $h_i:\Fp^{n_i} \to \Fp$ be a function obtained from $g_i$ by replacing each factored polynomial $Q$ of degree more than $\overline{d}$ with $Q(0)$.
  Similarly,
  \begin{align*}
    \|f_i\|_{\caP}
    & = \|f_i - g_i\|_1
    = \|f_i - h_i\|_1 - C\epsilon
    = \|f'_i - h_i \circ A_i\|_1 - C\epsilon \\
    & \geq \|f'_j - h_i \circ A_i\|_1 - (C+p^{2C})\epsilon
    \geq \|f'_j\|_{\caP} - (C+p^{2C})\epsilon
    = \|f_j\|_{\caP} - (C+p^{2C})\epsilon.
  \end{align*}
  Hence, we have $\|f_i\|_{\caP} = \|f_j\|_{\caP} \pm (C+p^{2C})\epsilon$,
  and the sequence $(\|f_i\|_{\caP})_{i \in \bbN}$ converges.
\end{proof}

\section{Conclusions}\label{sec:conclusion}

This work defines a metric over function limits that is based on the Gowers norm.
Properties of the metric are analyzed, and a characterization is given (Theorem~\ref{the:characterization}) of obliviously constant-query estimable parameters in terms of that metric.
This characterization is satisfactory in the sense that it is easier to understand than the one recently given by the author~\cite{Yoshida:2014tq}.
Having said that, there are several problems worth studying:
\begin{itemize}
\itemsep=0pt
\item Can we use our characterization of constant-query estimability to show that other specific parameters are constant-query estimable?
\item Can we give a characterization of properties that is constant-query testable with one sided error in terms of the $\upsilon^d$-metric?
In particular, can we prove or disprove the conjecture by~\cite{Bhattacharyya:2010gb}, which says that every affine subspace hereditary property is constant-query testable?
\item Graph limits have been used to study extremal graph theory (see~\cite{Lovasz:2012wn} for a survey). Can we use the notion of function limits to study ``extremal function theory''? A typical problem would ask how many ones a function $f:\Fp^n \to \bit$ can have when it avoids a certain pattern in its affine restriction.
\end{itemize}


\bibliographystyle{abbrv}
\bibliography{limit-object}

\end{document}